\newtheorem{definition}{Definition}[section]
\newtheorem{prop}[definition]{Proposition}
\newtheorem{theorem}[definition]{Theorem}
\newtheorem{example}[definition]{Example}
\newtheorem{lemma}[definition]{Lemma}
\newenvironment{proof}[1]
{\begin{trivlist}\item[{\bf ~Proof}#1.
]}{\hfill $\Box$\end{trivlist}}
\newcommand{\Comp}{\circ}
\newcommand{\stt}[1]{\stackrel{#1}{\longrightarrow}}
\newcommand{\cut}{\operatorname{cut}}
\newcommand{\ax}{\textrm{ax}}
\newcommand{\MALL}{{\sf MALL}}
\newcommand{\MALLI}{{\sf MALL}(I)}
\newcommand{\LLI}{{\sf LL}(I)}
\newcommand{\MALLIC}{{\sf MALL}^{[\mathfrak{c}]}(I)}
\newcommand{\res}[2]{#1 \! \upharpoonright_{#2}}
\newcommand{\pf}[3]{#1_{[#2], #3}} 
\newcommand{\inpf}[4]{#1_{\, \vdash_{#2} [#3], \, #4} }
\newcommand{\inpfmean}[5]{ #1_{\, \vdash_{#2} ([#3],\, #4) \langle #5 \rangle}}
\newcommand{\abs}[1]{| #1 | }
\newcommand{\intf}[2]{| #1 | \times  |#2|}
\newcommand{\inseq}[3]{\vdash_{#1} [#2], \, #3}
\newcommand{\inseqmean}[4]{\vdash_{#1} ([#2], \, #3) \langle #4 \rangle }
\newcommand{\formmean}[2]{#1 \langle #2 \rangle }
\newcommand{\sblformmean}[2]{ #1  \langle \hspace{-.5ex} \langle #2 \rangle
\hspace{-.5ex}  \rangle }
\newcommand{\inseqmeansepare}[5]{
\vdash_{#1} [
\sblformmean{#2}{#4}
], \, \formmean{#3}{#5}}
\newcommand{\inseqmeanseparethree}[6]{\vdash_{#1} [
\sblformmean{#2}{#5}
], \, \formmean{#3}{#6^{'}}, 
\formmean{#4}{#6^{''}}}
\newcommand{\merge}[2]{#1 \raisebox{.5ex}{$\frown$} #2}
\newcommand{\Ex}[2]{
{\sf Ex}\left( #1, #2 \right)}
\newcommand{\ex}[2]{{\sf ex}\left( #1, #2 \right)}
\newcommand{\Excon}{{\sf Ex}}
\newcommand{\excon}{{\sf ex}}
\newcommand{\ptEx}[1]{{\sf Ex}\left( \sigma, #1
\right)}
\newcommand{\inEx}[2]{{\sf Ex}_{#2}\left( \sigma,  #1 \right)}
\newcommand{\Relc}{\Rel^{[\mathfrak{c}]}}
\newcommand{\MALLC}{{\sf MALL}^{[\mathfrak{c}]}}
\newcommand{\sbl}[1]{\mathfrak{sl}( #1 )}
\newcommand{\indrd}{\blacktriangleright^{\hspace{-.05cm}\textsc{\tiny
$I$}}}
\newcommand{\MLL}{{\sf MLL}}
\newcommand{\TR}[4]{{\sf Tr}^{#2}_{#3,#4}  \left( #1 \right)  }
\newcommand{\zr}[1]{\epsilon_{#1}}
\newcommand{\Rel}{{\sf Rel}}
\newcommand{\PInj}{{\sf PInj} }
\newcommand{\Pfn}{{\sf Pfn} }
\newcommand{\labelret}[1]{\rhd_{\hspace{-.3ex}#1}}
\newcommand{\labelcoret}[1]{\lhd_{\hspace{-.1ex}#1}}
\newcommand{\sblabs}[1]{[ #1 ] }
\title
{A MALL Geometry of Interaction
Based on Indexed Linear Logic}
\author
  {Masahiro 
   HAMANO \\
Institute of Information Science, Academia Sinica, Taiwan \\
\texttt{hamano@jaist.ac.jp}   }
\begin{document}

\label{firstpage}
\maketitle

\begin{abstract}
We construct a geometry of interaction (GoI: dynamic modeling of Gentzen-style
cut elimination) for multiplicative–-additive linear logic ($\MALL$) by
employing Bucciarelli--Ehrhard indexed linear logic $\MALL(I)$ to handle
 the additives.
 Our construction is an extension to the additives of
the Haghverdi--Scott categorical formulation (a multiplicative
GoI situation in a traced
 monoidal category)
for Girard's original GoI 1 \cite{Gi89}.
 The indices are shown to serve not only in their original denotational level, but also at a finer grained dynamic level
so that the peculiarities of additive cut elimination such as
 superposition, erasure of subproofs, and additive (co-) contraction can be handled
with the explicit use of indices.
Proofs are interpreted as indexed subsets in the category $\Rel$, but without
the explicit relational composition; instead,
 execution formulas are run
 pointwise on the interpretation
at each index, w.r.t symmetries of cuts, in a traced monoidal category
 with a reflexive object and
 a zero morphism.
The sets of indices diminish overall when an execution formula is run,
corresponding to
 the additive cut-elimination procedure (erasure),
and allowing recovery of the relational composition.
The main theorem is the invariance of the execution formulas
along cut elimination so that the formulas 
converge to the denotations of (cut-free) proofs.
\end{abstract}

\section{Introduction}
The indexed multiplicative additive linear logic
$\MALLI$, introduced by Bucciarelli--
Ehrhard \cite{BE}, is
a conservative extension of Girard's 
$\MALL$ in which all
formulas and proofs come equipped with sets of indices.
The usual $\MALL$ is stipulated to be the restriction
to the empty set.
The status of the indexed syntactical system is
noteworthy as it stems from the denotational semantics of Rel,
a simple, yet pivotal categorical model of $\MALL$.
With the enabling of an explicit notion of location in linear proof theory,
the indices can enumerate the locations of formulas and proofs,
corresponding to denotational interpretations of $\MALL$.
The notion of location becomes a requirement for the additives,
although it is redundant for the multipicatives, for which the singleton $\{ * \}$
suffices.
To work with \textit{parallelism}, which the additives bring intrinsically,
different locations need to be handled rather
than the sole location $*$.
In the context of parallelism, superpositions are known to typically arise
under the syntactic additive $\&$-rule.
Indices allow one to deal with superpositions
by identifying multiple occurrences of formulas in the different indices
and by enlarging (or restricting) the indices.

The original motivation for indexed logic was to provide a bridge
between a truth-valued semantics (for provability) for $\MALLI$
and the denotational semantics of (nonindexed)
$\MALL$. By means of this bridge, Bucciarelli--Ehrhard obtained a new kind of
denotational completeness theorem in \cite{BE} for $\MALL$ 
and later extended it to the exponential in \cite{BE2}.

This paper investigates indexed $\MALL$ from the perspective of
 a dynamic semantics for
 cut elimination, a topic that---to the best of our knowledge---has remained untouched
 aside from the precursory work of Duchesne \cite{Duchesne}
 since 
 the original work of Bucciarelli--Ehrhard \cite{BE}.
The dynamic semantics is 
the Girard project
of Geometry of Interaction (GoI), whereby cut-elimination
is modelled,
using operator algebras \cite{Gi89}
and more generally traced monoidal categories \cite{JSV96}.
The GoI project was successful \cite{Gi89,HS06} for $\MLL$ 
with the exponential, and inspired a new model of computation for $\beta$ reduction
of $\lambda$-calculus  \cite{DR}.
This paper aims to initiate
an exploration of 
how to combine
the two notions of \textit{location}, which the indexed logic brings,
and of \textit{dynamics}, which GoI brings to cut-elimination.
The combination is important in understanding additive
cut-elimination.
For this goal, we employ the indices to construct a GoI model
for (non-indexed) $\MALL$.
We combine the Haghverdi--Scott categorical GoI situation  \cite{HS06}
with the indices in such a way that  that the original  $\MLL$ GoI situation
represents a collapse to the singleton index $\{ * \}$.
The dynamics of cut-elimination
is captured
by a feedback mechanism determined by traces of morphisms interpreting proofs.
We further augment the situation with two kinds of actions,
identical and zero,
over the symmetries interpreting the cut-rule.
These two  actions provide representations of
matches and of mismatches among locations.
These come into play during
a Gentzen style cut-elimination procedure,
in which one encounters
noncommunication of individual proofs, 
due to the additive parallelism.
Crucial instances of GoI situation such as $\Rel_+$
and ${\sf Hilb_2}$ \cite{HS06} are directly lifted to our framework,
the latter of which is the operator algebraic origin
of the Girard project.

We study Girard's execution formula \cite{Gi89}
in the general categorical setting of a traced symmetric monoidal category.
The execution formula accommodates indices, and 
faithfully simulates $\MALL$ cut-elimination
by a hybrid method relating the indexed syntax to
the relational semantics. 
Each location in the relation interpreting a proof
is first assigned an endomorphism on a reflexive object $U$.
The cut-rule before execution is interpreted as a tensor product of two premise
morphisms, more loosely than their composition.
This interpretation allows extraction of the dynamical meaning
of the cut,
which the usual categorical composition hides 
by virtue of
its static approach. 
In the loose interpretation, there remain redundant indices
when interpreting rules:
however, they are shown to disappear,
while running the Execution formula in terms of the categorical trace
structure.
The disappearance of indices is modelled by 
zero morphisms, which exist in 
the traced monoidal categories for GoI.
Proof-theoretically,
the zero morphisms allow us to interpret
discarding subproofs specific to additive cut-elimination,
and in the way of theory of indices,
they provide
a way to to interpret mismatches among locations.
In traced monoidal categories, the zero morphisms are supposed 
 to  act partially on
symmetries for cut-formulas, and also to act partially
on retractions and co-retractions of the reflexive objects.
The latter action arises via tracing along the zero morphisms
which takes feedback into account along with the zero.
We prove zero-convergence which means that 
execution formulas converge to zero when two proofs interact
with mismatched locations.
Thus the execution formulas terminate to the denotational interpretations of
proofs, 
while diminishing sets of indices in order to
recover
the relational composition.
This is realized
by properly coupling indices
to trace axioms, especially for 
``generalized yanking''
and ``dinaturality''. 
The former axiom directly designates that
traces are primitive enough to retrieve the categorical
 composition in a monoidal category, and the latter axiom
concerns the interaction of bidirectional 
flow of morphisms.

In contrast to the precursory work of Duchesne \cite{Duchesne}
concerning both
indices and GoI,
the present paper accommodates the indices directly in GoI
semantics in order to simulate (nonindexed) $\MALL$ cut elimination.
The diminution of sets of indices is a typical dynamic aspect our GoI captures
using the zero morphisms of our traced monoidal category.
The precursor in \cite{Duchesne},
on the contrary, first accommodates the index into the
static category of $\Rel$, using the semantic method of localisation,
of which the indexed syntax provides a precise description.
Then (nonindexed) dynamic GoI action over the 
indexed denotational semantics is investigated
to characterise the static fix points as the denotational
interpretation.


We prove two main result:
(i) (Invariance of the execution formula during $\MALL$ normalization):  
The execution formula in our dynamic categorical modelling
is shown to converge to
the denotational interpretation of proofs
in the static categorical model. This characterises
the normalization of proofs by categorical
invariants.
(ii) (Diminution of indices while running the execution formula): 
The execution formula may converge to $0$, making the redundant indices
disappear.
Part (i) is seen as a pointwise collection of
invariants, as previously established
for the multiplicatives \cite{Gi89,HS06}.
Part (ii) is specific to the additives:
Proof-theoretically. it reflects erasure of subproofs as well as
additive (co) contraction and superposition, in 
cut-elimination. Category-theoretically, it ensures
that our categorical ingredient (the execution formula)
is fine grained enough to retrieve
a static monoidal category
as well as a relational category handling indices.

\bigskip

The rest of this paper is organized as follows:
Section 2 introduces a syntax $\MALLIC$
for indexed $\MALL$ with a cut list
as well as its relational counterpart
$\Relc$. A fundamental lemma is proved, which connects
a provable $\MALLIC$ sequent to 
an indexed subset of the interpretation of a $\MALL$ proof
with cuts.
In Section 3, $\MALL$ proof reduction for cut elimination
is lifted to $\MALLI$ proof transformation
with diminishing sets of indices.
Section 4 concerns our $\MALL$ GoI interpretation
by means of the indexed system
in a traced symmetric monoidal category with
zero morphism. 
Execution formulas are run indexwise, and the main theorem is
proved.
\section{$\MALLI$ with cut list and relational semantics}
\label{secsec}
\subsection{$\MALLI$ with cut list}
\noindent ({\bf Inference rules of $\MALLC$ with cut formulas}) \\
We accommodate a stack to record cut formulas
into the syntax of
the multiplicative–-additive linear logic $\MALL$.
To stress this, the system is written as $\MALLC$.
To accommodate the stack into the additive fragment,
one has to work with 
superpositions that arise inside the stack 
as well as in the conclusion (outside the stack).


A $\MALLC$ sequent $\vdash [\Delta], \Gamma$
with a cut list 
is a set $\Gamma$ of formula occurrences 
together with pairwise-dual formulas occurrences $\Delta$
inside the brackets.
Each dual pair in $\Delta$ is written $A, A^\perp$. 

\smallskip
\noindent Sequents are proved using the following rules:

\smallskip

\begin{minipage}[t]{1in}
$\infer[\ax]{\vdash A, A^\perp}{}$
\end{minipage}
\vspace{1ex}
\begin{minipage}[t]{2.2in}
$\infer[\otimes]{\vdash [\Delta_1, \Delta_2], \, \, \Gamma_1, \Gamma_2, A \otimes B}{
\vdash [\Delta_1], \, \, \Gamma_1, A & \vdash [\Delta_2], \, \, \Gamma_2, B}$
\end{minipage}
\vspace{1ex} 
\begin{minipage}[t]{1.2in}
$\infer[\parr]{\vdash [\Delta], \, \, \Gamma, A \parr B}{
\vdash [\Delta], \, \, \Gamma, A, B}$
\end{minipage}
\vspace{1ex} \\
\begin{minipage}[t]{2in}
  \hspace{16ex}
$\infer[\cut]{
\vdash [\Delta_1, \Delta_2, A, A^\perp],\, \, \Gamma_1, \Gamma_2}{\vdash
[\Delta_1], \, \, \Gamma_1,  A & 
\vdash
[\Delta_2], \, \, A^{\perp}, \Gamma_2}$
 \end{minipage}
\vspace{1.5ex}

\begin{minipage}[t]{2.3in}
$\infer[\&]{\vdash [\Delta_1, \Delta_2, \Sigma], \, \, \Gamma, A_1 \& A_2}{
\vdash [\Delta_1,\Sigma], \, \, \Gamma, A_1
& 
\vdash [\Delta_2, \Sigma], \, \, \Gamma, A_2
}$
\end{minipage}
\vspace{2ex}
\begin{minipage}[t]{1.4in}
$\infer[\oplus_1]{\vdash [\Delta], \, \, \Gamma, A_1 \oplus  A_2}{\vdash 
[\Delta], \, \,  \Gamma, A_1}$
\end{minipage}
\begin{minipage}[t]{1.4in}
$\infer[\oplus_2]{\vdash [\Delta], \, \, \Gamma, A_1 \oplus A_2}{
\vdash [\Delta], \, \, \Gamma, A_2}$
\end{minipage}

\smallskip

\noindent {\bf Note:} In the $\&$-rule, not only is $\Gamma$ superposed 
in the conclusion, but so is $\Sigma$ in the stack.
The superposition among cut formulas inside the stack
causes the well-known
additive (co-) contraction that arises in
$\MALL$ cut elimination.
The formulas occurrences $\Sigma$ is not deterministically chosen in the premises, so that
$\Sigma$ in general is neither empty (i.e., never superimpose cuts) nor
maximal (i.e., superimpose as many cuts as possible).
Thus, the $\&$-rule has several possible instances depending on the choice
of $\Sigma$.
The exchange rule is eliminated under the assumption that
formula occurrences are implicitly tracked by the premises and the 
conclusion of a rule.

\smallskip

We extend the above accommodation of cut lists
to Bucciarelli--Ehrhard indexed system $\MALLI$ \cite{BE}.
To stress this, the system is written
as $\MALLIC$.
The extension stipulates that
a set of indices is consistently associated
with each formula (including cut formulas) and sequent (including
cut lists).

We fix an index set $I$, once and for all. Each formula $A$ of
$\MALLI$ is associated with a set $d(A) \subseteq I$,
called the \textit{domain} of $A$.

\smallskip

\noindent 
({\bf $\MALLI$ formulas and domains}) \\ 
\noindent Formulas in the domain $J$ are defined by the following grammar:
$\bm{0}_\emptyset$ and $T_\emptyset$ are formulas of the domain
$\emptyset$. 
For any $J,K,L \subseteq I$ such that $K \cap L = \emptyset$
and $K \cup L = J$, \\
$\begin{aligned}
X_J  ::= \quad  \bm{1}_J \quad   \arrowvert \quad \bot_J \quad \arrowvert
\quad
X_J \otimes X_J \quad  \arrowvert 
\quad
X_J \parr X_J \quad 
\arrowvert
\quad  X_K \oplus X_L
\quad
\arrowvert \quad
X_K \& X_L .
\end{aligned}$

\noindent
For any $\MALLI$ formula $A$ with $d(A)=J$,
its negation $A^\perp$
with $d(A^\perp)=J$ is defined using the De Morgan duality for the $\MALL$
formula.

\smallskip

\noindent 
({\bf Restriction}) \\
For a $\MALLI$ formula $A$ with $d(A)=J$
and $K \subseteq J$, the restriction $\res{A}{K}$
of $A$ by $K$ is defined to be a $\MALLI$
formula in the domain $J \cap K$ as follows:

\noindent
$\res{\bm{0}_{\emptyset}}{K}= 
\bm{0}_{\emptyset}$ and $\res{T_{\emptyset}}{K}= T_{\emptyset}$.
\hspace{1ex} $\res{\perp_{J}}{K}= 
\perp_{J \cap K}$ and $\res{\bm{1}_{J}}{K}= 
\bm{1}_{J \cap K}$
\hspace{2ex} $\res{(A \otimes B)}{K}=
\res{A}{K} \! \otimes \; \res{B}{K}$ \\
$\res{(A \parr B)}{K}=
\res{A}{K} \! \parr \; \res{B}{K}$ 
\hspace{3ex}
$\res{(A \oplus B)}{K}=
\res{A}{K} \! \oplus \;  \res{B}{K}$
\hspace{5ex} $\res{(A \& B)}{K}=
\res{A}{K} \! \& \; \res{B}{K}$

It trivially follows that $\res{(A^\perp)}{K}=(\res{A}{K})^\perp$.
If $\Gamma$ is a sequence of $\MALLI$ formulas
$A_1, \ldots , A_n$
of domains $J$, we define
$\res{\Gamma}{K}=
\res{A_1}{K}, \ldots , \res{A_n}{K}$.

\smallskip

\noindent 
({\bf Inference rules of $\MALLIC$ with cut lists}) \\
We augment $\MALLC$ with indices.
This makes it possible to accommodate 
a stack recording cut formulas to the original $\MALLI$ \cite{BE}.
Although this is straightforward for the multiplicative fragment,
careful treatment is required for the listing of cut formulas
in the additive fragment.
Two kinds of sequences of formulas
are considered in our $\MALLIC$-syntax: One is a sequence $\Xi$ whose all formulas occurrences
have a same domain $J$ uniformly, which is denoted by $d(\Xi)=J$.
The other is a sequence $\Xi$ whose any formula occurrence
has a domain contained in $J$, which is denoted by $d(A) \subseteq J$.
Each sequent is of the form $\vdash_{J} [\Delta], \Gamma$,
in which $d(\Gamma)=J$ and $d(\Delta) \subseteq J$
with $d(A)=d(A^\bot) \subseteq J$ for any pairwise-dual formulas $A$ and
$A^\bot$ in $\Delta$ within the stack. 

\noindent {\bf Note: 
}
The uniformity requirement that all formulas in
$\Gamma$ have the same domain $I$ does not apply to the stack $\Delta$
of the cut formulas. Formulas from different cuts have various
domains contained in $J$.

\smallskip

\noindent 
{\bf Axioms and cut: }

\begin{minipage}[t]{.8in}
$\vdash_J \bm{1}_{J}$
\end{minipage}
\begin{minipage}[t]{.8in}
$\vdash_\emptyset \Gamma, T_\emptyset$
\end{minipage}
\begin{minipage}[t]{2.5in}
$\infer[\cut]{
\vdash_J  [\Delta_1, \Delta_2, A, A^\perp],\, \, \Gamma_1, \Gamma_2}{
\vdash_J
[\Delta_1], \, \, \Gamma_1,  A & 
\vdash_J
[\Delta_2], \, \, A^{\perp}, \Gamma_2}$
\end{minipage}
\begin{minipage}[b]{1.2in}
Note that $d(A)=d(A^\perp)=J$ for cut formulas $A$ and $A^\perp$.
\end{minipage} \\

\noindent {\bf Multiplicative rules:} \\

\begin{minipage}[t]{1.5in}
$\infer[\bot_J]{\vdash_{J} [\Delta], \, \, \Gamma, \bot_J}{
\vdash_{J} [\Delta], \, \, \Gamma}$
\end{minipage}
\begin{minipage}[t]{2.3in}
$\infer[\otimes]{\vdash_{J} [\Delta_1, \Delta_2 \,], \, \, \Gamma_1,
\Gamma_2, A \otimes B}{
\vdash_{J} [\Delta_1 ], \, \, \Gamma_1, A
& 
\vdash_{J} [\Delta_2], \, \, \Gamma_2, B
}$
\end{minipage}
\begin{minipage}[t]{2.3in}
$\infer[\parr]{\vdash_{J} [\Delta], \, \, \Gamma, A \parr B}{
\vdash_{J} [\Delta], \, \, \Gamma, A, B}$
\end{minipage} \\


\noindent {\bf Additive rules:} \\

\begin{minipage}[t]{2.8in}
$\infer[\&]{\vdash_{J_1 + J_2} [\Delta_1, \Delta_2, \Sigma \,], \, \, \Gamma, A_1 \& A_2}{
\vdash_{J_1} [\Delta_1,\res{\Sigma}{J_1}\, ], \, \, \res{\Gamma}{J_1}, A_1
& 
\vdash_{J_2} [\Delta_2, \res{\Sigma}{J_2} \, ], \, \, \res{\Gamma}{J_2}, A_2
}$
\end{minipage} \\
Note that the superposed context $\Gamma$
encompasses the \textit{whole} domain $J=J_1 + J_2$,
while the superposed context $\Sigma$ in the stack
has a domain \textit{contained in} $J$. \\

\smallskip

\noindent
\begin{minipage}[t]{1.8in}
$\infer[\oplus_1]{
\vdash_J  [\Delta],\, \, \Gamma, A_1 \oplus A_2}{\vdash_J
[\Delta], \, \, \Gamma,  A_1}$
\end{minipage}
\vspace{2ex}
\begin{minipage}[t]{2in}
$\infer[\oplus_2]{
\vdash_J  [\Delta],\, \, \Gamma, A_1 \oplus A_2}{\vdash_J
[\Delta], \, \, \Gamma, A_2}$
\end{minipage}
 \begin{minipage}[b]{1.7in} Note
  $d(A_{3-i})=\emptyset$ in \\ each rule $\oplus_i$ ($i=1,2$).
\end{minipage}

$\MALLI$
has no propositional variables; 
the only atomic formulas are the constants.
Then, the usual identity axiom is readily derived:

\begin{lemma}[Identity]
$\vdash_J A, A^\perp$
is provable for any $\MALLI$ formula $A$
of domain $J$.
\end{lemma}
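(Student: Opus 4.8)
The plan is to prove the Identity Lemma by induction on the structure of the $\MALLI$ formula $A$, following the standard $\eta$-expansion argument but taking care that the indexed decorations (domains) match the grammar constraints and the side-conditions on the inference rules.

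\medskip

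\noindent\textbf{Approach and base cases.} I would first observe that since $\MALLI$ has no propositional variables, every formula is built from the constants $\bm{1}$, $\bot$, $\bm{0}$, $T$ by the connectives $\otimes$, $\wp$, $\oplus$, $\&$, so a structural induction on $A$ suffices. For the base cases: when $A=\bm{1}_J$, the axiom scheme $\vdash_J \bm{1}_J$ together with the $\bot_J$-rule applied to it gives $\vdash_J \bm{1}_J, \bot_J$, i.e.\ $\vdash_J A, A^\perp$ (recall $(\bm{1}_J)^\perp=\bot_J$ by De Morgan duality, and both have domain $J$); the case $A=\bot_J$ is symmetric. When $A=\bm{0}_\emptyset$, the axiom $\vdash_\emptyset \Gamma, T_\emptyset$ instantiated with $\Gamma = \bm{0}_\emptyset$ gives $\vdash_\emptyset \bm{0}_\emptyset, T_\emptyset$, which is exactly $\vdash_\emptyset A, A^\perp$ since $(\bm{0}_\emptyset)^\perp = T_\emptyset$ and $d(A)=\emptyset$; the case $A=T_\emptyset$ is again symmetric.

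\medskip

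\noindent\textbf{Inductive step for the multiplicatives.} For $A = B \otimes C$ of domain $J$ (so $d(B)=d(C)=J$, $A^\perp = B^\perp \wp C^\perp$), I would invoke the induction hypothesis to get $\vdash_J B, B^\perp$ and $\vdash_J C, C^\perp$, apply $\otimes$ to obtain $\vdash_J B \otimes C, B^\perp, C^\perp$ (with empty stacks, so the stack side-condition is vacuous), and then apply $\wp$ to the pair $B^\perp, C^\perp$, yielding $\vdash_J B \otimes C, B^\perp \wp C^\perp = \vdash_J A, A^\perp$. The case $A = B \wp C$ is the De Morgan dual and handled identically with the roles swapped.

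\medskip

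\noindent\textbf{Inductive step for the additives — the main obstacle.} For $A = B \oplus C$ of domain $J$, the grammar forces a partition $J = K + L$ with $d(B)=K$, $d(C)=L$, and $A^\perp = B^\perp \& C^\perp$ with $d(B^\perp)=K$, $d(C^\perp)=L$. This is where the argument is delicate: a naive $\eta$-expansion would try to use $\vdash B,B^\perp$ under $\oplus_1$, but the $\oplus_i$-rule insists that $d(A_{3-i})=\emptyset$, whereas here $C$ need not have empty domain. The fix is to use \emph{restriction}: the premises of the $\&$-rule for $A^\perp = B^\perp \& C^\perp$ with conclusion domain $J = K+L$ require a left premise of domain $K$ proving $\res{\Gamma}{K}, B^\perp$ and a right premise of domain $L$ proving $\res{\Gamma}{L}, C^\perp$, where $\Gamma = B \oplus C$. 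Now $\res{(B\oplus C)}{K} = \res{B}{K} \oplus \res{C}{K} = B \oplus \bm{0}_\emptyset$ (since $d(B)=K$ so $\res{B}{K}=B$, and $d(C)=L$ disjoint from $K$ so $\res{C}{K}=\bm{0}_\emptyset$), which makes the $\oplus_1$-rule applicable with its side-condition $d(\bm{0}_\emptyset)=\emptyset$ satisfied; applying it to $\vdash_K B, B^\perp$ (from the induction hypothesis, noting $d(B)=K$) gives the left premise, and symmetrically $\oplus_2$ on $\vdash_L C, C^\perp$ gives the right premise of domain $L$. The $\&$-rule then combines them into $\vdash_{K+L} (B \oplus C), (B^\perp \& C^\perp) = \vdash_J A, A^\perp$, taking all stacks and the superposed $\Sigma$ empty. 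The case $A = B \& C$ is the De Morgan dual. The one point that genuinely needs checking — and which I expect to be the only real friction — is that the restriction identities $\res{B}{K}=B$ and $\res{C}{K}=\bm{0}_\emptyset$ hold on the nose under the stated definition of restriction when $d(B)=K$ and $d(C)=L$ with $K\cap L=\emptyset$; this is a routine sub-induction on formula structure using the clauses defining $\res{\cdot}{\cdot}$ (in particular that $\res{\bm{1}_{J'}}{K}=\bm{1}_{J'\cap K}$ and $J'\cap K = J'$ when $J'\subseteq K$, while $J'\cap K=\emptyset$ collapses a literal to the appropriate zero/top constant), so I would state it as an auxiliary observation rather than belabor it.
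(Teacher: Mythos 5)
Your proof is correct and follows the standard $\eta$-expansion induction that the paper leaves implicit (it offers no written proof, only the remark that the identity ``is readily derived'' because every formula is built from constants); your handling of the additive case --- deriving the two $\&$-premises by $\oplus_1$ and $\oplus_2$ over the restricted context $\res{(B\oplus C)}{K}$ and $\res{(B\oplus C)}{L}$ --- is exactly the point where the indexed side-conditions have to be checked, and you check them. One small correction: for $d(C)=L$ disjoint from $K$, the restriction $\res{C}{K}$ is not literally the constant $\bm{0}_\emptyset$ but the formula $C$ with all its domains intersected with $K$ (e.g.\ $\res{\bm{1}_L}{K}=\bm{1}_{\emptyset}$, per the clause $\res{\bm{1}_{J}}{K}=\bm{1}_{J\cap K}$); this is harmless for your argument, since the $\oplus_i$-rule only requires $d(A_{3-i})=\emptyset$, which does hold.
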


\begin{lemma}[Restricting provable sequents] \label{respro}
If $\vdash_J [\Delta], \, \, \Gamma$ is provable, then so is \\
$\vdash_{J \cap K} [\res{\Delta}{K}],
  \, \, \res{\Gamma}{K}$
for any $K \subseteq I$.
\end{lemma}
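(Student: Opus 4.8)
The plan is to prove Lemma~\ref{respro} by induction on the derivation of $\vdash_J [\Delta], \Gamma$, using the structural compatibility of restriction with every inference rule of $\MALLIC$. First I would record the basic algebraic facts about restriction that make the induction go through: that $\res{(\res{A}{K})}{L} = \res{A}{K \cap L}$, that $\res{(A^\perp)}{K} = (\res{A}{K})^\perp$ (already noted in the excerpt), and --- crucially for the additive case --- that if $d(A) = J_1 + J_2$ then restriction commutes with the additive decomposition in the expected way, e.g.\ $\res{(A_1 \& A_2)}{K} = \res{A_1}{K} \& \res{A_2}{K}$ where the domains split as $(J_1 \cap K) + (J_2 \cap K)$, and similarly $\res{(A_1 \oplus A_2)}{K}$ still has one of its two immediate subformulas of empty domain. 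These are immediate from the inductive definition of $\res{\cdot}{K}$.

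For the induction itself I would go rule by rule. The axiom cases ($\vdash_J \bm{1}_J$ and $\vdash_\emptyset \Gamma, T_\emptyset$) are handled by observing $\res{\bm{1}_J}{K} = \bm{1}_{J \cap K}$ and $\res{T_\emptyset}{K} = T_\emptyset$ with $\emptyset \cap K = \emptyset$, so the restricted sequents are again instances of the axioms. For the derived identity axiom $\vdash_J A, A^\perp$ one may either invoke Lemma~2.5 again at the smaller domain $J \cap K$, or simply note the statement is closed under the derivation of that lemma. The multiplicative rules $\bot$, $\otimes$, $\wp$ and the $\cut$ rule are routine: apply the induction hypothesis to each premise with the \emph{same} $K$, then reassemble using the commutation of $\res{\cdot}{K}$ with $\otimes$, $\wp$, $\bot$, and with negation for the cut formula; for $\otimes$ and $\cut$ the context splits $\Gamma = \Gamma_1, \Gamma_2$ restrict componentwise, and the stack likewise. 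The $\oplus_i$ rules are equally direct, using that restriction preserves the side condition $d(A_{3-i}) = \emptyset$.

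The main obstacle --- and the step deserving genuine care --- is the $\&$-rule, because of the two different conventions for the superposed data: the context $\Gamma$ carries the whole domain $J = J_1 + J_2$, whereas the stack component $\Sigma$ only has a domain contained in $J$ and appears \emph{already restricted} as $\res{\Sigma}{J_1}$ and $\res{\Sigma}{J_2}$ in the premises. Given the conclusion $\vdash_{J_1 + J_2} [\Delta_1, \Delta_2, \Sigma], \Gamma, A_1 \& A_2$ and a target set $K$, I would set $K_1 = K \cap J_1$ and $K_2 = K \cap J_2$, so that $J \cap K = K_1 + K_2$ is again a disjoint union. Applying the induction hypothesis to the left premise with $K_1$ and to the right premise with $K_2$ yields $\vdash_{J_1 \cap K_1} [\res{\Delta_1}{K_1}, \res{(\res{\Sigma}{J_1})}{K_1}], \res{(\res{\Gamma}{J_1})}{K_1}, \res{A_1}{K_1}$ and symmetrically on the right. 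Now $\res{(\res{\Sigma}{J_1})}{K_1} = \res{\Sigma}{J_1 \cap K_1} = \res{\Sigma}{K_1}$ since $K_1 \subseteq J_1$, and likewise $\res{(\res{\Gamma}{J_1})}{K_1} = \res{\Gamma}{K_1}$; one then recognizes these as exactly the premises of an instance of the $\&$-rule with conclusion $\vdash_{K_1 + K_2} [\res{\Delta_1}{K}, \res{\Delta_2}{K}, \res{\Sigma}{K}], \res{\Gamma}{K}, \res{A_1}{K} \& \res{A_2}{K}$ --- using $\res{\Delta_i}{K_i} = \res{\Delta_i}{K}$ since $d(\Delta_i) \subseteq J_i$, and $\res{\Sigma}{K_i} = \res{\Sigma}{K}$ on the part of $\Sigma$'s domain lying in $J_i$, the two pieces reassembling to $\res{\Sigma}{K}$. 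Finally $\res{A_1}{K} \& \res{A_2}{K} = \res{(A_1 \& A_2)}{K}$, so the conclusion is the required restricted sequent. I expect all the bookkeeping to reduce to the single identity $\res{(\res{\cdot}{J_i})}{K} = \res{\cdot}{J_i \cap K}$ together with the disjointness of $J_1, J_2$, so once the $\&$-case is set up correctly the remaining rules fall into place mechanically.
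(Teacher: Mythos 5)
Your proof is correct, and it supplies exactly the routine structural induction that the paper leaves implicit (Lemma~\ref{respro} is stated without proof). The key observations you isolate --- $\res{(\res{A}{K})}{L}=\res{A}{K\cap L}$, compatibility of restriction with negation and with each connective, and the bookkeeping $\res{\Delta_i}{K_i}=\res{\Delta_i}{K}$, $\res{\Sigma}{K_i}=\res{(\res{\Sigma}{K})}{K_i}$ in the $\&$-case --- are precisely what is needed, and the remaining rules are handled correctly.
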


For each inference rule of $\MALLIC$, if the conclusion sequent has the
 domain $\emptyset$, then so does the premise sequent(s). Thus, the rules
 for sequents deriving the empty domain are identified with the rules
 of $\MALLC$. As a consequence, every $\MALLIC$-proof $\pi$ for
 $\vdash_\emptyset [\Delta], \Gamma$ 
 contains only sequents of the empty domain. 
Hence $\pi$ is considered as a
$\MALL$-proof for $\vdash [\Delta],\Gamma$. To sum up,
\begin{lemma}
$\MALLIC$ is a conservative extension of $\MALLC$. 
\end{lemma}
Accordingly, in the sequel $\MALLC$ is identified with $\MALLC(\emptyset)$.

\subsection{$\MALLIC$ and Relational Semantics $\Relc$}
It is well known that the category $\Rel$
of sets and relations constitutes a denotational semantics of $\MALL$,
that is, the interpretation is invariant,
$(\pf{\pi}{\Delta}{\Gamma})^* =
(\pf{\pi'}{\Delta'}{\Gamma})^*$,
for any reduction 
$\pf{\pi}{\Delta}{\Gamma} \rhd \pf{\pi'}{\Delta'}{\Gamma}$
of $\MALL$ cut elimination.
In particular, the denotation of $\pi$ is equal 
to that of a cut-free $\pi'$ when $\Delta'$ is empty.
The cut rule is interpreted by \textit{relational composition} in
$\Rel$, and this interpretation makes the semantics denotational.

\begin{definition}[Denotational interpretation $(\pf{\pi}{\Delta}{\Gamma})^*$
in $\Rel$]\label{denotrel} 
Every $\MALL$ proof $\pf{\pi}{\Delta}{\Gamma}$
of a sequent $\vdash [\Delta] \, \, \Gamma$
is interpreted as a subset of an associated set of the conclusion
 (without
 the cut list),
\begin{align} \label{relhide}
(\pf{\pi}{\Delta}{\Gamma})^* \subseteq \abs{\Gamma}
\end{align}

\end{definition}
Note that in the interpretation, the cut formulas $\Delta$
become invisible by virtue of the relational composition:
Two relations $R_1 \subseteq G_1 \times D$ and 
$R_2 \subseteq D \times G_2$ compose in $\Rel$
\begin{align*}
R_2 \Comp R_1 = \{(g_1, g_2) \mid \exists \, d \in D \,  (g_1, d) \in R_1
\wedge (d,g_2) \in R_2 \} \subseteq G_1 \times G_2 
\end{align*}
The interpretation $\pi^*$ of (\ref{relhide}) is known
in \cite{BE} as the relational interpretation of $\MALL$ proofs in a 
compact closed category $\Rel_\times$ with biproducts $+$.
The interpretation is specified as follows
accordingly to the $\MALL$ rules, for which we refer
to the above $\MALLC$-rules ridden of the cut-lists.
First, every formula $A$ is interpreted as a set $\abs{A}$,
and every sequence $\Gamma=A_1, \ldots, A_n$
as $\abs{\Gamma}=\abs{A} \times \cdots
 \times \abs{A_n}$:
When $A$ is $A_1 \otimes A_2$ or $A_1 \wp A_2$, 
$\abs{A}=\abs{A_1} \times \abs{A_2}$,
and when $A$ is $A_1 \oplus A_2$ or $A_1 \& A_2$, 
$\abs{A}= (\{ 1\} \times \abs{A_1}) \cup (\{ 2 \} \times \abs{A_2})$. \\
Then 

\noindent (axiom) $\pi^* =  \{ (a,a) \mid a \in \abs{A}  \} \subseteq \abs{A,A^\perp}$

\noindent (cut-rule) $\pi^* =  \{ (\gamma_1, \gamma_2) \mid  (\gamma_1,
a) \in \pi_1^* \, \mbox{and} \,
(a, \gamma_2) \in \pi_2^* \} \subseteq \abs{\Gamma_1, \Gamma_2}$

\noindent ($\&$-rule)
$\pi^* =  \{ (\gamma,(1, a)) \mid  (\gamma, a) \in \pi_1^*  \}
\cup \{ (\gamma, (2, a) \mid  (\gamma, a) \in \pi_2^*  \}
\subseteq \abs{\Gamma, A_1 \& A_2}$

\noindent ($\oplus_i$-rule)
$\pi^* =  \{ (\gamma,(i, a)) \mid  (\gamma, a) \in \pi_i^*  \}
\subseteq \abs{\Gamma, A_1 \oplus A_2}$

\noindent ($\otimes$-rule)
$\pi^* =  \{ (\gamma_1, \gamma_2, (a_1, a_2)) \mid 
(\gamma_1, a_1) \in \pi_1^* \, \mbox{and} \, (\gamma_2, a_2) \in \pi_2^*
\} \subseteq \abs{\Gamma, A_1 \otimes A_2}$

\noindent ($\parr$-rule)
$\pi^* =  \{ (\gamma, (a_1, a_2)) \mid 
(\gamma, a_1, a_2 ) \in \pi^{'*} \} 
\subseteq \abs{\Gamma, A_1 \parr A_2}$

\smallskip

Our aim in this paper is to investigate a dynamics of cuts hidden in
such a static categorical composition. We begin by interpreting
proofs in $\Rel$ but without performing cuts by means of relational composition.
To stress this interpretation with the unexecuted cuts,
the categorical framework is denoted
by $\Relc$,
in which the cut list $[ \Delta ]$ is interpreted explicitly.

To deal with the additives in $\Relc$, we have to work with a
sublist and the set of
all the sublists:
Let $\Delta$ be $C_1, C_1^\bot, \ldots , C_m, C_m^\bot $.
For a subset $S$ of $\{ 1, \ldots, m \}$,
let $\Delta_S$ denote the sublist $ \ldots C_i, C_i^\bot \ldots$
where $i$ ranges in $S$.
Then the set $\sbl{\Delta}$ of all the sublists (including
$\Delta$ and $\Delta_\emptyset$) is defined;
\begin{align}
\sbl{\Delta} := \displaystyle
\{ \Delta_S \mid S \subseteq \{1, \ldots, m\}
   \} \label{unionDelta}
\end{align}

\noindent Consequently, we interpret (\ref{unionDelta})
as an object in $\Rel$ in terms of 
the disjoint union of each interpretations of the sublists:
\begin{align}
\sblabs{\,\sbl{\Delta}} = \! \! \! 
{\sum_{S \subseteq \{1, \ldots, m\}}{ \! \! \! \abs{\Delta_S}}},  \label{IntunionDelta}
\end{align}
\noindent in which $\abs{\Delta_S}$, for a nonempty sequence,
is the usual interpretation of the sequence
in $\Rel$ and $\abs{\Delta_\emptyset}:=\{ * \}=\abs{\bot}$.
The disjoint sum $\Sigma$ 
is taken over
different $S$'s.

In what follows in this paper, when $S$ is clear from context,
a sublist $\Delta_S$
is often abbreviated by $\widehat{\Delta}$, whose hat
indicates a pairwise deletion of some cut formulas.

\bigskip

\begin{lemma} \label{listdiv}
If $\Delta=\Delta_1, \ldots,  \Delta_k$ such that $\Delta_i$ are lists
of pairwise-dual formulas,
then 
\begin{align*}
\sblabs{\sbl{\Delta}} \cong \sblabs{\sbl{\Delta_1}} \times \cdots
  \times 
\sblabs{\sbl{\Delta_k}}
\end{align*}
E.g., a particular choice of each $\Delta_i$ is $A_i, A_i^\perp$.
\end{lemma}

In what follows in Sections \ref{secsec} and \ref{trdsec},
$\cong$ denotes an iso modulo the symmetry of the set-theoretical
cartesian product. In the sequel, the symmetry 
corresponds to the exchange of formula occurrences.
As the exchange is always clear from the context
and fixed,
we use the terminologies
$\subseteq_{\cong}$ and $\in_{\cong}$
consistently as follows:
$A \subseteq_{\cong} B$ (resp. $a \in_{\cong} B$)
means that $A$ is a subset (resp. a member) of $\sigma(B)$
where $\sigma$ is the exchange for $\cong$.

\begin{definition}[Interpretation $\abs{\pf{\pi}{\Delta}{\Gamma}}$
of proofs with unexecuted cuts
in $\Relc$] \label{intRelc}
Every $\MALL$ proof $\pf{\pi}{\Delta}{\Gamma}$
of a sequent $\vdash [\Delta],  \, \, \Gamma$
is interpreted by
$$
\abs{\pf{\pi}{\Delta}{\Gamma}} 
\subseteq \sblabs{\sbl{\Delta}} \times \abs{\Gamma},
$$

\noindent 
which is defined inductively and in the same manner as in Definition
 \ref{denotrel}, except for the cut rule to make the interpretation
 differ from the standard (\ref{relhide}) 
in that $\Delta$ is visible without performing the relational
composition.

\noindent (cut rule) \\
When $\pi$ is
$\infer[\cut]{
\vdash [\Delta_1, \Delta_2, A, A^\perp],\, \, \Gamma_1, \Gamma_2}{
\deduce{\vdash
[\Delta_1], \, \, \Gamma_1,  A}{\pi^1} & 
\vdash
\deduce{[\Delta_2], \, \, A^{\perp}, \Gamma_2}{\pi^2}}
 $
\begin{flalign*}
\abs{\pf{\pi}{\Delta_1,\Delta_2 A,  A^\perp}{\Gamma_1,\Gamma_2}}
: \cong
\abs{\pf{\pi^1}{\Delta_1}{\Gamma_1,A}}
\times
\abs{\pf{\pi^2}{\Delta_2 }{A^\perp, \Gamma_2}}  & \\
\subseteq \sblabs{\sbl{\Delta_1}} \times \abs{\Gamma_1} \times \abs{A}
\times
\sblabs{\sbl{\Delta_2}} \times \abs{A^\perp} \times \abs{\Gamma_2} &
 \subseteq_{\cong} \sblabs{\sbl{\Delta_1,\Delta_2}} \times \abs{A} \times \abs{A^\perp} \times \abs{\Gamma_1}
\times \abs{\Gamma_2}
\end{flalign*}
The symmetry used for the definition is the exchange 
between the conclusion of the cut
and merging those of $\pi_i$'s.
In obtaining the last inclusion, Lemma \ref{listdiv}
is used because the two lists $\Delta_1$ and $\Delta_2$ are disjoint.

\noindent ($\&$-rule) \\
When $\pi$ is
$\infer[\&]{\vdash [\Delta_1, \Delta_2, \Sigma \,], \, \, \Gamma, A_1 \& A_2}{
\deduce{\vdash  [\Delta_1, \Sigma \, ], \, \, \Gamma, A_1}{\pi^1}
& 
\deduce{\vdash [\Delta_2, \Sigma \, ], \, \, \Gamma, A_2}{\pi^2}
}$

\begin{flalign*}
& \abs{
\pf{\pi}{\Delta_1,\Delta_2, \Sigma}{\Gamma, A_1 \& A_2}}
:=  & \\
& \{\, (\lambda_1, (1,a_1)) \mid (\lambda_1, a_1) \in
\abs{
\pf{\pi^1}{\Delta_1, \Delta}{\Gamma, A_1}}
\, \} 
 +
\{ \, (\lambda_2, (2,a_2)) \mid (\lambda_2, a_2) \in
\abs{
\pf{\pi^2}{\Delta_2, \Sigma}{\Gamma, A_2}} 
\, \}  \subseteq  \\ &
 (\, \sblabs{\sbl{\Delta_1, \Sigma}} \! \times \! \abs{\Gamma} 
\! \times \! \{ 1 \} \times \abs{A_1} \, )
\! + \!
(\, \sblabs{\sbl{\Delta_2, \Sigma}} \! \times \! \abs{\Gamma} \! \times \! 
\{2 \} \times \abs{A_2} \, )
\subseteq \sblabs{\sbl{\Delta_1,\Delta_2,\Sigma}}
\times \abs{\Gamma} \! \times \! \abs{A_1 \& A_2}
\end{flalign*}
For the last inclusion, the \textit{monotonicity}, $\sblabs{\sbl{\Delta_i,\Sigma}}
\subseteq
\sblabs{\sbl{\Delta_1,\Delta_2, \Sigma}}$
is used.
\end{definition}

\smallskip

We extend Bucciarelli--Ehrhard
translation of Definition \ref{BETrans}
to Definition \ref{exBETrans}
to accommodate cut formulas inside the stack.

\begin{definition}[Translation of indexed relation $\gamma$
to $\MALLI$ 
sequent $\formmean{\Gamma}{\gamma}$ \cite{BE}] \label{BETrans} 
To any $\MALL$ formula $A$ and a family $a \in \abs{A}^J$,
a formula $\formmean{A}{a}$ of $\MALLI$ is
associated, with domain $J$ so that 
$\res{\formmean{A}{a}}{\emptyset}$ is $A$.
For a sequence  $\Gamma=A_1, \ldots, A_n$
of $\MALL$ formulas,
every $\gamma \in \abs{\Gamma}^J$
is written uniquely as $\gamma= \gamma^1 \times  \cdots 
\times \gamma^n$ with $\gamma^m \in \abs{A_m}^J$, and
we set $\formmean{\Gamma}{\gamma}=
\formmean{A_1}{\gamma_1}
, \ldots ,
\formmean{A_n}{\gamma_n}$.
\end{definition}
-For $A=\bm{0}$ or $A=T$, if $J \not  =\emptyset$,
we have $\abs{A}^J = \emptyset$, and
$\formmean{A}{a}$ is undefined. If
$J = \emptyset$, $\abs{A}^J$
has exactly one element, namely, the empty family $\emptyset$,
and we set $\formmean{\bm{0}}{\emptyset}=\bm{0}$ and
$\formmean{T}{\emptyset}=T$. \\
-If $A=\bm{1}$ or $A= \>\perp$, $a$ is the constant family,
and we set $\formmean{\bm{1}}{(*)_J}=\bm{1}_J$
and $\formmean{\perp}{(*)_J}=\perp_J$. \\
-If $A=B \otimes C$, then
$a=b \times c$ with
$b \in \abs{B}^J$
and $c \in \abs{C}^J$,
and we set
$\formmean{A}{a} = \formmean{B}{b} \otimes \formmean{C}{c} $
which is a well-formed formula of $\MALLI$ of domain $J$. 
Here $b \times c$ denotes the mediating morphism of the set-theoretical
cartesian product. \\ 
Similarly, for $A=B \parr C$, we set
$\formmean{A}{a} = \formmean{B}{b} \parr \formmean{C}{c} $. \\
-If $A=B \oplus C$,
then $a=b+c$
with $b \in \abs{B}^K$
and $c \in \abs{C}^L$ and $K+L=J$.
Then we set $\formmean{A}{a}=\formmean{B}{b} \oplus
\formmean{C}{c}$
which is a well-formed formula of $\MALLI$ of domain $J$. \\
Similarly for $A = B \; \& \; C$, we set
$\formmean{A}{a}=\formmean{B}{b} \, \& \, 
\formmean{C}{c}$.

\smallskip

\noindent ({\bf Notation}) \\
Let $X$ be a set and $J=J_1 + J_2$.
Every $x \in X^J$ yields the restrictions $x_i= \res{x}{J_i} \in
X^{J_i}$ with $i=1,2$.
Conversely, the two restrictions allow us to recover $x$. 
We write this as $x=\merge{x_1}{x_2}$.

\begin{definition}[Translation to $\MALLIC$
 sequent $\inseqmeansepare{J}{\Delta}{\Gamma}{\delta}{\gamma}$]
 \label{exBETrans} 
Let $\Delta$ be a sequence of pairwise-dual $\MALL$ formulas
and $\delta \in  \sblabs{\sbl{\Delta}}^J$ for some
$J \subseteq I$.
Then the $\MALLI$ sequence $\sblformmean{\Delta}{\delta}$
of pairwise-dual formulas is associated  
such that $d(\sblformmean{\Delta}{\delta})
 \subseteq J$ and
$\res{ \sblformmean{\Delta}{\delta}}{\emptyset}$ is $\Delta$ as follows.

First, we write $\Delta=\Delta_1, \ldots ,\Delta_n$, where
each $\Delta_i$ is a list of two dual formulas $A_i$ and $A_i^\bot$. 
By Lemma \ref{listdiv},
$\delta =\delta^1 \times \cdots \times \delta^n$,
so $\delta^i \in \sblabs{\sbl{\Delta_i}}^J$.
Because 
$\sbl{\Delta_i}=\{\Delta_i, \Delta_\emptyset \}$,
we have $\sblabs{\sbl{\Delta_i}}=\abs{A_i, A_i^\perp}  + \{ * \}$.
Recall that $\{ * \}$ interprets the empty list in (\ref{IntunionDelta}).
Thus every $\delta^i$ makes $J$ divide into $J=J_i + K_i$
to yield $\delta^i=\merge{\res{\delta^i}{J_i}\!}{\, \, 
\res{\delta^i}{K_i}}$
so that $\res{\delta^i}{J_i} \in \abs{A_i, A_i^\perp}^{J_i}$ and
 $\res{\delta^i}{K_i} \in \abs{*}^{K_i}$. (explicitly $J_i=\{ x \mid
\delta_i (x) \in \abs{A_i, A_i^\perp} \}$
and $K_i = \{ x \mid \delta_i(x) = * \}$.)
Then, using the $\res{\delta^i}{J_i}\,$, we define

{\centering 
$\begin{aligned}
\sblformmean{\Delta}{\delta}
= \formmean{\Delta_1}{\res{\delta^1}{J_1}}, \, \ldots,
\, \formmean{\Delta_n}{\res{\delta^n}{J_n}},
 \end{aligned}$
 \par}
 in which the two dual formulas in each
 $\formmean{\Delta_i}{\res{\delta^n}{J_i}}$
have the same domain $J_i \subseteq J$.






\smallskip

 Then, by employing Definition \ref{BETrans},
 for a given $\MALL$ sequent $\vdash [ \Delta],  \Gamma$,
every $\nu \in (\intf{\sbl{\Delta}}{\Gamma})^J$
is associated with a $\MALLI$ sequent, for which we write
$\nu = \delta \times \gamma$, so that
$\delta \in \sblabs{\sbl{\Delta}}^J$
and
$\gamma \in \abs{\Gamma}^J$: 
\begin{align}
\inseqmean{J}{\Delta}{\Gamma}{\nu}
 = 
{\inseqmeansepare{J}{\Delta}{\Gamma}{\delta}{\gamma} }\label{eqBETrans} 
\end{align}


\noindent
Here $\inseqmean{J}{\Delta}{\Gamma}{\nu}$ restricted to $\emptyset$
is $\vdash [\Delta], \Gamma$.
Note that all the formulas in $\formmean{\Gamma}{\gamma}$
have domain $J$, while each formula
in $\sblformmean{\Delta}{\delta}$
has a domain contained in $J$.
The $\sblformmean{\Delta}{\delta}$'s inside the stack
become a list of
pairwise-dual $\MALLI$ formulas in which each pair
has the same domain. 
\end{definition}

The translations commute with restriction of indices,
and Lemma \ref{respro} can be restated:
\begin{lemma}[Restricting translation]
\begin{itemize}
\item[-]
For any $\gamma \in \abs{\Gamma}^J$,
it holds that
$\formmean{\Gamma}{\res{\gamma}{J \cap K}}
=
\res{\formmean{\Gamma}{\gamma}}{K}
$.
\item[-]
For any $\delta \in \sblabs{\sbl{\Delta}}^J$,
it holds that
$\sblformmean{\Delta}{\res{\delta}{J \cap K}}
=
\res{\sblformmean{\Delta}{\delta}}{K}
$.
\item[-]
If $\vdash_J [\, \sblformmean{\Delta}{\delta} \,], \, \,
	\formmean{\Gamma}{\gamma}$ is provable, then so is
$\vdash_{J \cap K} [\, \res{\sblformmean{\Delta}{\delta}}{J \cap K} \,], \, \,
	\res{\formmean{\Gamma}{\gamma}}{J \cap K}$.

\end{itemize}
\end{lemma}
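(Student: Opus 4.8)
The plan is to prove the three items in order: the first two by a structural induction that unwinds Definitions~\ref{BETrans} and~\ref{exBETrans} against the definition of restriction, and the third by combining them with Lemma~\ref{respro}. The single fact doing all the work is that restricting an index family to $J\cap K$ commutes with \emph{every} index-partition that occurs in the two translation definitions: if $J=K'+L'$ then $J\cap K=(K'\cap K)+(L'\cap K)$, and the restricted partition is exactly the one the restricted family induces. This matches, clause by clause, the equations $\res{(A\oplus B)}{K}=\res{A}{K}\oplus\res{B}{K}$, $\res{(A\&B)}{K}=\res{A}{K}\&\res{B}{K}$, $\res{(A\otimes B)}{K}=\res{A}{K}\otimes\res{B}{K}$, $\res{(A\wp B)}{K}=\res{A}{K}\wp\res{B}{K}$, together with $\res{\bot_J}{K}=\bot_{J\cap K}$ and $\res{\bm{1}_J}{K}=\bm{1}_{J\cap K}$, from the definition of restriction.

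\textbf{First item.} I would argue by induction on the $\MALL$ formula $A$, the sequence case being immediate since Definition~\ref{BETrans} writes $\formmean{\Gamma}{\gamma}$ and $\res{\Gamma}{K}$ componentwise. For $A=\bm{1}$ or $A=\bot$, $\gamma=(*)_J$ and $\res{\gamma}{J\cap K}=(*)_{J\cap K}$, so both sides are $\bm{1}_{J\cap K}$ (resp.\ $\bot_{J\cap K}$). For $A=\bm{0}$ or $A=T$: if $J\neq\emptyset$ there is nothing to check since $\abs{A}^J=\emptyset$; if $J=\emptyset$ then $J\cap K=\emptyset$ and both sides are $\bm{0}$ (resp.\ $T$), agreeing with $\res{\bm{0}_\emptyset}{K}=\bm{0}_\emptyset$. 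For $A=B\otimes C$, $\gamma=b\times c$, restriction distributes over $\times$, and $\formmean{B\otimes C}{\res{\gamma}{J\cap K}}=\formmean{B}{\res{b}{J\cap K}}\otimes\formmean{C}{\res{c}{J\cap K}}$; the induction hypotheses for $B$ and $C$ and the restriction clause for $\otimes$ finish it, and $\wp$ is identical. For $A=B\oplus C$, $\gamma=b+c$ with $b\in\abs{B}^{K'}$, $c\in\abs{C}^{L'}$, $K'+L'=J$; then $\res{\gamma}{J\cap K}=\res{b}{K'\cap K}+\res{c}{L'\cap K}$ with $(K'\cap K)+(L'\cap K)=J\cap K$, so $\formmean{B\oplus C}{\res{\gamma}{J\cap K}}=\formmean{B}{\res{b}{K'\cap K}}\oplus\formmean{C}{\res{c}{L'\cap K}}$, which by the induction hypotheses equals $\res{\formmean{B}{b}}{K}\oplus\res{\formmean{C}{c}}{K}=\res{\formmean{B\oplus C}{\gamma}}{K}$. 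The $\&$ case is the same with $\&$ for $\oplus$.

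\textbf{Second item} (stated for $\delta\in\abs{\sbl{\Delta}}^J$). I would reduce it to the first. Write $\Delta=\Delta_1,\dots,\Delta_n$ with $\Delta_i=A_i,A_i^\perp$; by Lemma~\ref{listdiv}, $\delta=\delta^1\times\cdots\times\delta^n$ and restriction distributes over this product, so it suffices to treat one block. Recall from Definition~\ref{exBETrans} that $\delta^i$ splits $J=J_i+K_i$ with $J_i=\{x\mid\delta^i(x)\in\abs{A_i,A_i^\perp}\}$ and $K_i=\{x\mid\delta^i(x)=*\}$, and that $\sblformmean{\Delta_i}{\delta^i}=\formmean{\Delta_i}{\res{\delta^i}{J_i}}$. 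From the explicit descriptions of $J_i$ and $K_i$, the active set of $\res{\delta^i}{J\cap K}$ is exactly $J_i\cap K$ (its $*$-part being $K_i\cap K$), so $\sblformmean{\Delta_i}{\res{\delta^i}{J\cap K}}=\formmean{\Delta_i}{\res{(\res{\delta^i}{J_i})}{J_i\cap K}}$; applying the first item to the two-formula sequence $A_i,A_i^\perp$ (with index set $J_i$) turns this into $\res{\formmean{\Delta_i}{\res{\delta^i}{J_i}}}{K}=\res{\sblformmean{\Delta_i}{\delta^i}}{K}$, where the placement of the duplicated $*$-indices is tracked by $\res{\bot_{J_i}}{K}=\bot_{J_i\cap K}$. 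Concatenating over $i$ gives the claim. This partition/indexing bookkeeping is the one genuinely fiddly point of the lemma, and the main obstacle; everything else is routine unwinding.

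\textbf{Third item.} Since every formula of $\formmean{\Gamma}{\gamma}$ has domain $J$ and every formula of $\sblformmean{\Delta}{\delta}$ has domain contained in $J$, restriction by $K$ and by $J\cap K$ agree on both, i.e.\ $\res{\sblformmean{\Delta}{\delta}}{K}=\res{\sblformmean{\Delta}{\delta}}{J\cap K}$ and $\res{\formmean{\Gamma}{\gamma}}{K}=\res{\formmean{\Gamma}{\gamma}}{J\cap K}$. Now apply Lemma~\ref{respro} to the provable sequent $\vdash_J[\,\sblformmean{\Delta}{\delta}\,],\,\formmean{\Gamma}{\gamma}$ with this $K$: it yields provability of $\vdash_{J\cap K}[\,\res{\sblformmean{\Delta}{\delta}}{K}\,],\,\res{\formmean{\Gamma}{\gamma}}{K}$, which by the previous sentence is exactly $\vdash_{J\cap K}[\,\res{\sblformmean{\Delta}{\delta}}{J\cap K}\,],\,\res{\formmean{\Gamma}{\gamma}}{J\cap K}$. (If the $\formmean{\Gamma}{\res{\gamma}{J\cap K}}$ presentation of the conclusion is preferred, items one and two identify it with the above.)
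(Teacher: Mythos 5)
Your proof is correct, and it is exactly the routine argument the paper intends: the paper states this lemma without proof (it is presented as an immediate restatement of Lemma~\ref{respro} once one observes that the translations of Definitions~\ref{BETrans} and~\ref{exBETrans} commute with restriction), and your structural induction on formulas, the blockwise reduction for the stack via Lemma~\ref{listdiv} with the observation that the active set of $\res{\delta^i}{J\cap K}$ is $J_i\cap K$, and the final appeal to Lemma~\ref{respro} together with $\res{A}{K}=\res{A}{J\cap K}$ for $d(A)\subseteq J$ supply precisely the omitted details.
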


\subsection{Fundamental lemma}
Indexed linear logic arises essentially due to
its tight connection to the relational semantics.
The connection is
realized by a fundamental lemma 
due to Bucciarelli \& Ehrhard (proposition 20 of \cite{BE}) establishing a correspondence between
indexed sets in $\Rel$ and indexed sequents in $\MALLI$.
The former is semantic in $\MALL$, while the latter is syntactic in
$\MALLI$.
The fundamental lemma is shown to
be preserved under our extended
syntax and semantics, designed to accommodate cut formulas in
$\MALLIC$ and in $\Relc$, respectively.

\begin{prop}[Fundamental lemma \`{a} la Bucciarelli--Ehrhard] 
\label{FunLem}~~~ \\
For $\nu \in (\intf{\sbl{\Delta}}{\Gamma})^J$ with $J \subseteq I$,
the following two statements are equivalent and
induce a relationship $\res{\rho}{\emptyset}=\pi$ between
$\pi$ of (i) and $\rho$ of (ii):

\begin{enumerate}
\item[(i)]
There exists a $\MALLC$ proof $\pi$ such that
$$\nu \in \abs{\pf{\pi}{\Delta}{\Gamma}}^J.$$

\item[(ii)]
There exists a $\MALLIC$ proof $\rho$ of the sequent
$$\inseqmean{J}{\Delta}{\Gamma}{\nu}.$$

\end{enumerate}

\end{prop}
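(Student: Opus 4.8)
The plan is to prove the equivalence by a simultaneous induction on the structure of the proof, establishing both directions at once, since the inductive step for each rule needs to translate a derivation of one system into a derivation of the other in lockstep. Concretely, I would set up the induction so that the hypothesis is exactly the biconditional: for every $\nu \in (\intf{\sbl{\Delta}}{\Gamma})^J$, there is a $\MALLC$ proof $\pi$ with $\nu \in_{\cong} \abs{\pf{\pi}{\Delta}{\Gamma}}^J$ if and only if there is a $\MALLIC$ proof $\rho$ of $\inseqmean{J}{\Delta}{\Gamma}{\nu}$, and moreover $\res{\rho}{\emptyset} = \pi$. The relationship $\res{\rho}{\emptyset} = \pi$ is maintained by construction: each $\MALLIC$ rule, when its indices are restricted to $\emptyset$, collapses to the corresponding $\MALLC$ rule (this is essentially the content of Definition~\ref{exBETrans} together with the observation that restriction commutes with the translation, as recorded in the Restricting-translation lemma), so it suffices to check that the constructions I give in each inductive case are compatible with $\res{(-)}{\emptyset}$.

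The base cases are the axioms. For $\ax$ on $\vdash A, A^\perp$ with $A = \bm{1}$ (or a constant built up from the grammar), $\abs{\ax}$ is the diagonal, and any family $\nu \in \abs{\ax}^J$ is the constant diagonal family, whose translation via Definition~\ref{BETrans} is $\vdash_J \bm{1}_J, \bot_J$, which is provable by the Identity lemma (Lemma stated right before Lemma~\ref{respro}); conversely a $\MALLIC$ identity proof of $\vdash_J A, A^\perp$ restricts at $\emptyset$ to the $\MALL$ axiom. The $T_\emptyset$ axiom is handled similarly (here $J$ is forced to be $\emptyset$ when the $T$-component is present, matching $\vdash_\emptyset \Gamma, T_\emptyset$). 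For the inductive step I would go rule by rule. The multiplicative rules ($\wp$, $\otimes$, $\bot$) are the straightforward cases: the interpretation in $\Relc$ of a $\otimes$-node is a cartesian product of the premise interpretations (with $\Delta$ split via Lemma~\ref{listdiv}), and a family $\nu$ over such a product decomposes indexwise into families over the two premises, to which the induction hypothesis applies; the $\MALLI$ $\otimes$-rule then assembles the two translated premise sequents, and Definition~\ref{BETrans} for $B \otimes C$ is exactly this assembly. The $\oplus_1, \oplus_2$ cases use the decomposition $a = b + c$ with $b \in \abs{B}^K$, $c \in \abs{C}^L$, $K + L = J$; for $\oplus_1$ the component landing in the right summand must be empty-indexed, forcing $d(A_2) = \emptyset$, which is exactly the side condition on the $\MALLI$ $\oplus_1$-rule, so the translation goes through.

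The main obstacle is the $\&$-rule, and more precisely the interaction between the superposed stack context $\Sigma$ and the indices. Here the $\Relc$ interpretation (the $\&$-case of Definition~\ref{intRelc}) is a disjoint sum of two pieces, one landing in $\abs{\sbl{\Delta_1,\Sigma}} \times \abs{\Gamma} \times \abs{A_1}$ and one in $\abs{\sbl{\Delta_2,\Sigma}} \times \abs{\Gamma} \times \abs{A_2}$, then coerced into $\abs{\sbl{\Delta_1,\Delta_2,\Sigma}} \times \abs{\Gamma} \times \abs{A_1 \& A_2}$ using monotonicity and distributivity. Given $\nu \in (\intf{\sbl{\Delta_1,\Delta_2,\Sigma}}{\Gamma, A_1\&A_2})^J$ in the image, the key move is to read off from $\nu$ the partition $J = J_1 + J_2$ according to whether the $A_1\&A_2$-component sits in the left or right summand, restrict $\nu$ to $J_1$ and $J_2$, and observe that each restriction is a family over the appropriate premise product — in particular the $\Sigma$-part of $\nu$ restricted to $J_i$ lands in $\abs{\sbl{\res{\Sigma}{J_i}}}$, matching the premise $\vdash_{J_i} [\Delta_i, \res{\Sigma}{J_i}], \res{\Gamma}{J_i}, A_i$ of the $\MALLI$ $\&$-rule. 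Applying the induction hypothesis to each restriction and then the $\MALLI$ $\&$-rule (with this very $\Sigma$, which is allowed to be neither empty nor maximal) yields $\rho$; conversely, unwinding a $\MALLI$ $\&$-node gives back the disjoint-sum structure of $\abs{\pf{\pi}{}{}}$. The delicate points I expect to have to argue carefully are: (a) that restriction of $\nu$ really does commute with the $\sbl{-}$ construction in the way Definition~\ref{exBETrans} presupposes (so that $\res{\delta^i}{J_i}$ has the claimed type), using Lemma~\ref{listdiv} componentwise; (b) that the $\Sigma$ chosen syntactically and the $\Sigma$-part visible in $\nu$ agree, which is where the non-uniformity of the stack domain (flagged in the Note after the $\&$-rule) matters; and (c) that the whole construction is stable under $\res{(-)}{\emptyset}$, which reduces to the $\emptyset$-case of the Restricting-translation lemma. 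Everything else — the $\cut$-rule in particular — is then analogous to $\otimes$, since $\cut$ is interpreted parallel to $\otimes$ in $\Relc$ (the cut formulas are merely kept visible in the stack rather than composed away), so no new difficulty arises beyond bookkeeping of the stack.
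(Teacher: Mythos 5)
Your proposal is correct and follows essentially the same route as the paper, which proves the two directions as separate lemmas (\ref{iimpii} and \ref{iiimpi}) by induction on $\pi$ and on $\rho$ respectively, with exactly the case analysis you describe: product decomposition of $\nu$ for $\cut$ and $\otimes$, and the partition $J = J_1 + J_2$ read off from the $\&$-component together with the restriction of the superposed $\Sigma$ to each $J_i$. Your merging of the two inductions into one ``simultaneous'' biconditional induction is only a presentational difference, since the rules of the two systems correspond one-to-one under $\res{(-)}{\emptyset}$.
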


\begin{proof}{}
 See lemmas  \ref{iimpii} and \ref{iiimpi} in the Appendix
 \ref{apFunLem}.
\end{proof}

\section{Lifting $\MALL$ reduction over indices} \label{trdsec}
This section describes how our indexed syntax $\MALLIC$
analyzes Gentzen-style reduction of cut elimination
for nonindexed $\MALL$.
Every $\MALL$ reduction with cut elimination
is shown to be lifted to a directed transformation between two $\MALLI$ proofs.
These transformations diminish sets of the indices of proofs overall.

\begin{definition}[$\MALLIC$ proof transformation $\indrd$ with
 diminishing sets of indices]
 \label{defindrd}
A $\MALLIC$ transformation $\indrd$ with diminishing sets of indices,
written as $\inpf{\rho}{J}{\Delta}{\Gamma} \indrd 
\inpf{\rho'}{J'}{\Delta'}{\Gamma}$, is
a transformation from one $\MALLIC$ proof $\rho$ for $\inseq{J}{\Delta}{\Gamma}$
to another, $\rho'$ for $\inseq{J'}{\Delta'}{\Gamma}$ with $J' \subseteq J$,
satisfying the following condition:
\begin{description}
\item[-]
(Restriction to the empty domain) \\
Restricting the two $\MALLIC$ proofs to $\emptyset$ gives rise to a $\MALLC$ reduction
$\pf{\pi}{\Delta}{\Gamma} \, \rhd \, \pf{\pi'}{\Delta'}{\Gamma}$ by
cut-elimination. \\
Schematically, this can be written: \hspace{3ex}
$
\xymatrix{
\inpf{\rho}{J}{\Delta}{\Gamma}
\ar[d]_{\res{}{\emptyset}}
 & \indrd   & 
\inpf{\rho'}{J'}{\Delta'}{\Gamma}
\ar[d]^{\res{}{\emptyset}} \\
\pf{\pi}{\Delta}{\Gamma}
& 
 \rhd & 
\pf{\pi'}{\Delta'}{\Gamma}
}$


\end{description}
The transformation $\rho \, \indrd \, \rho'$ is called a
 \textit{lifting} of $\pi \, \rhd \, \pi'$.
The lifting simulates a $\MALL$ proof reduction for cut elimination
in terms of $\MALLI$ proof transformation.
\end{definition}

The lifting in Definition \ref{defindrd}
is not unique for a given $\MALLC$ reduction,
as any subset $J''$ of $J'$,
$\rho \, \indrd \, \res{\rho'}{J''}$ obviously becomes a lifting
for $\rho$ and $\rho'$ under this definition.

\smallskip

\begin{example} \label{case2}
The following is a $\MALLIC$ reduction with diminishing sets of indices
whose restriction to $\emptyset$ is a Gentzen reduction
eliminating the pairwise-dual additive connectives $\&$ and $\oplus$
in the cut formulas: \end{example}
$$\infer[\cut]{\vdash_{J_1 + J_2} [
\Delta_1, \Delta_2, \Omega, \Delta_3, (A_1 \& A_2),  (A_2^\bot \oplus A_1^\bot)], \Gamma, \Xi}
{
\infer[$\&$]{
\vdash_{J_1 + J_2} [\Delta_1, \Delta_2, \Omega],\Gamma, A_1 \& A_2 }
{\left\{
\begin{array}{c}
\infer*[\pi^i]{\vdash_{J_i} [\Delta_i, \res{\Omega}{J_i}], \res{\Gamma}{J_i}, A_i}{}
\end{array}
\right\}^{i=1,2}
}
& 
\infer*[\pi^3]{\infer[\oplus_1]{\vdash_{J_1 + J_2} [\Delta_3], \Xi , 
 A_1^\bot \oplus  A_2^\bot  }{
 \vdash_{J_1 + J_2} [\Delta_3],\Xi, A_1^\bot}}{}
 }$$
$$\deduce{\vspace{1.5ex}}{\indrd} \hspace{4cm} \infer[\cut]{
\vdash_{J_1} [
\Delta_1, \Omega,  \res{\Delta_3}{J_1}, A_1, \res{A_1^\bot}{J_1}], \Gamma, \Xi}{
\infer*[\pi^1]{\vdash_{J_1} [\Delta_1, \Omega], \Gamma, A_1}{}
& 
\infer*[\pi^3]{\vdash_{J_1} [\res{\Delta_3}{J_1}],\res{\Xi}{J_1},
\res{A_1^\bot}{J_1}}{}
}$$
The sets of the indices are diminished from $J_1 + J_2$ to $J_1$
as a result of erasing the subproof $\pi^2$
within the proof transformation.

\begin{prop}[Lifting to indexed transformation] \label{ltit}
Let $\nu \in \abs{\pf{\pi}{\Delta}{\Gamma}}^J$
and consider a $\MALL$ reduction $\pf{\pi}{\Delta}{\Gamma} \, \rhd \,
\pf{\pi'}{\Delta'}{\Gamma}$. Then there exist
$J' \subseteq J$ and
$\nu' \in \abs{\pf{\pi}{\Delta'}{\Gamma}}^{J'}$
lifting the given reduction:
\end{prop}
$$
\xymatrix{
\inpfmean{\rho}{J}{\Delta}{\Gamma}{\nu}
\ar[d]_{\res{}{\emptyset}}
 & \indrd   & 
\inpfmean{\rho'}{J'}{\Delta'}{\Gamma}{\nu'}
\ar[d]^{\res{}{\emptyset}} \\
\pf{\pi}{\Delta}{\Gamma}
& 
 \rhd & 
\pf{\pi'}{\Delta'}{\Gamma}
}$$
$\rho$ and $\rho'$ are $\MALL$ proofs ensured
by the fundamental lemma (Proposition \ref{FunLem})
for the sequents
$\inseqmean{J}{\Delta}{\Gamma}{\nu}$
and 
$\inseqmean{J'}{\Delta'}{\Gamma}{\nu'}$, respectively.
Hence,
we can also denote the lifting by

{\centering 
$\begin{aligned}
\nu \in \abs{\pf{\pi}{\Delta}{\Gamma}}^J
& \quad \indrd \quad &  \nu' \in \abs{\pf{\pi'}{\Delta'}{\Gamma}}^{J'}.
\end{aligned}$
\par}

\smallskip

\noindent Note: There is no straight connection between $\nu$
and $\nu'$  such as the former is the restriction to the latter.
\begin{proof}{}
For every kind of reduction $\rhd$, we can directly construct $\nu'$
together with $J'$. There are three crucial cases:

\noindent (Crucial case 1)
$$
\infer[\cut]{
\vdash [\Delta, B^\bot, A], B, \Gamma}{
\infer{\vdash B, B^\bot}{}
&
\infer*[\pi']{\vdash [\Delta], A, \Gamma}{}
}$$
(with $A$ and $B$ being different occurrences of the same formula)
reduces to 
$$\infer*[\pi']{\vdash [\Delta], B, \Gamma}{}$$
(identifying the occurrence of $A$ with $B$).

\smallskip

Let $\epsilon \in \abs{\ax_{B,B^\perp}}^J$
and $\tau \in \abs{\pf{\pi'}{\Delta}{A,\Gamma}}^J$.
Then, for each $j \in J$, we have $\epsilon_j=(b_j, b_j)$
with $b_j \in \abs{B}=\abs{B^\perp}$,
and $\tau_j=(\delta_j, a_j, \lambda_j)$
with $\delta_j \in \sblabs{\sbl{\Delta}}$,
$a_j \in \abs{A}$ and $\lambda_j \in \abs{\Gamma}$.
Note that $\nu_j=(\delta_j, b_j, a_j, b_j, \lambda_j)$.
We define
$$\begin{aligned}
J'=\{ j \in J \mid b_j = a_j\}  &  \quad  \text{and} & \nu'=\res{\tau}{J'}. 
\end{aligned}$$

\smallskip

\noindent (Crucial case 2) \\
This case is the $\MALL$ reduction arisen by
Example \ref{case2} above, when 
restricting to the empty domain $\emptyset$
and identifying ${\sf MALL}(\emptyset)$
with $\MALL$.

By $\pi$'s last rule, 
$\nu \cong \tau
 \times \lambda$ with $\tau \in \abs{\&(\pi^1,\pi^2)}^J$
and $\lambda \in \abs{\oplus_1 \! (\pi^3)}^J$,
where the conclusion of $\pi^i$  with $i=1,2$ is
$\vdash [\Delta_i, \Omega],
\Gamma, A_i$
and that of $\pi^3$ is
$\vdash [\Delta_3],\Xi, A_1^\bot$.
Then the $\&$-rule of the left premise divides $J$
 into $J=J_1 + J_2$. 
 We define 
\begin{align*}
& J'=J_1 \quad  \text{and} \quad   \nu'= \res{\tau}{J_1}
\! \times {\lambda'} \\
& \mbox{where $\lambda' \in \abs{\pi^3}^{J'}$ is defined by
$\lambda'_j := (x, a)$ if
$\lambda_j = (x,(1, a))$.
}\end{align*}

\noindent (Crucial case 3)
Here $\pi$
$$\infer[\cut]{ \vdash [\Xi, \Sigma_1, \Sigma_2,  \Omega, 
A, A^\bot], \Delta, \Gamma, B_1 \& B_2}
{\infer*[\rho]{\vdash [\Xi], \Delta, A}{}
& 
\infer[$\&$]{
\vdash [\Sigma_1, \Sigma_2, \Omega ], A^\bot, \Gamma, B_1 \& B_2 }
{
\left\{
\begin{array}{c}
\infer*[\pi^i]{\vdash [\Sigma_i, \Omega ], A^\bot, \Gamma, B_i}{}
\end{array}
\right\}^{i=1,2}
}
}$$ 
reduces to $\pi'$ 
$$\infer[\&]{
\vdash [
\Sigma_1, \Sigma_2, A, A^\bot, \, A, A^\bot,
\Xi, \Omega ], \Delta, \Gamma, B_1 \& B_2
}{
\left\{
\begin{array}{c}
\infer[\cut]{
\vdash [\Xi,  \Sigma_i, \Omega, A, A^\bot], \Delta, \Gamma, B_i
}{
\infer*[\rho]{\vdash [\Xi], \Delta, A}{}
&
\infer*[\pi^i]{\vdash [\Sigma_i, \Omega], A^\bot, \Gamma, B_i}{}
}
\end{array}
\right\}^{i=1,2}
}$$
Note that in the last $\&$-rule of $\pi'$,
$\Xi$ and $\Omega$ inside the stack are chosen to be superposed. \\
By $\pi$'s last rule, $\nu \cong \lambda \times \tau$,
so $\lambda \in \abs{\pf{\rho}{\Xi}{\Delta, A}}^J$
and $\tau \in \abs{\&(\pi^1,\pi^2)}^J$. The last $\&$-rule of the right
 premise divides $J$ into $J=J_1 + J_2$
so that $\tau \cong \merge{\tau_1}{\tau_2}$ and $\tau_i \in 
\abs{\pf{\pi^i}{\Sigma_i, \Omega}{A^\perp, \Gamma, B_i}}^{J_i}$.
Then $\res{\lambda}{J_i} \times \tau_i
\in_{\cong} \abs{\cut(\rho,\pi^i)}^{J_i}$. We define

{\centering 
$\begin{aligned}
J'=J & \quad  \text{and} & 
\nu' \cong
\merge{
(\res{\lambda}{J_1} \! \times \tau_1)}{
(\res{\lambda}{J_2} \! \times \tau_2)}.
\end{aligned}$
 \par}
\end{proof}

\section{$\MALL$ GoI Interpretation} \label{secGoIin}
\subsection{Execution formula with zero action on symmetries of cuts}
\subsubsection{Interpretation of indexed point
in $\MALL$ proof}~\\
Our categorical framework is a minimal part of the Haghverdi--Scott GoI
situation \cite{HS06} with a reflexive object $U$ in a 
traced symmetric monoidal category ${\cal C}$ with tensor unit $I$.
Ours in addition requires that ${\cal C}$ has zero morphisms,
in particular, a zero endomorphism $0_{U}$ on $U$:
\begin{align}
({\cal C}, \,  \otimes,\, I, \, s_{U,U}, \,   j: \! U \otimes U \lhd U
 \! :k, \,  0_{U}) \label{GoIsituzero}
\end{align}
\noindent $s_{U,U}$ is a symmetry $U \otimes U
\longrightarrow U \otimes U$ of tensor product.
$j: \! U^2 \lhd U \! : k$ denotes a pair of
morphisms $j$ and $k$ respectively from $U^2$ to $U$ and the other way
around. $j$ and $k$ are called respectively {\em co-retraction}
and {\em retraction}
for the reflexive $U$
when $k \Comp j=U \otimes U=U^2$.
The $m$-ary tensor folding
$\underbrace{\star \otimes \cdots \otimes \star}_{m}$
 is denoted by $\star^m$
both for object $\star$ or morphism $\star$.
The trace structure will be introduced later in (\ref{trfam}). 
 
We require the commutativity of the pair $(j,k)$ and the zero $0_U$;
\begin{align}
k \Comp 0_U \Comp j = 0_U \otimes 0_U \label{compz}
\end{align}
Indeed, (\ref{compz}) is equivalent to the two commutativity $j \Comp (0_U
\otimes \, 0_U) = 0_U \Comp j$ and $(0_U \otimes \, 0_U) \Comp k = k \Comp \,
0_U$.

\bigskip

{\bf Note:}
The zero morphism $0_U$ is absorbing with respect to composition, 
but not with respect to tensor. That is $f \otimes 0_{U}$
and $0_{U} \otimes f$
are not in general $0_{U^2}$ for any endomorphism
$f$ on $U$.
\begin{lemma}[tensoring zero] \label{tenzero}
$0_{U} \otimes 0_U = 0_{U^2}$. More generally,
$0_U^m = 0_{U^m}$ for any natural number $m$.
\end{lemma}
\begin{proof}{}
The first assertion is derived by
the condition (\ref{compz}). The general assertion is by iterating the
 condition $(k^n \otimes U) \Comp 
(k \Comp 0_U \Comp j) \Comp
(j^n \otimes U)= 0_U^{n+2}$.
\end{proof}

\smallskip

The zero endomorphism $0_{U}$
acts on the symmetry $s$ as follows.
\begin{definition}[zero-action $(s_{U,U})^0$] \label{zeroact}
$0$ action on the symmetry $s_{U,U}$ on $U^2$ 
is defined to {\em annihilate} the symmetry $s_{U,U}$ to the zero
 endomorphism on $U^2$.
\begin{align*}
(s_{U,U})^0 := 
 0_{U \otimes U}
\end{align*}
Alternatively, the action is defined to be the following 
decomposition 
in terms of conjugation (both precomposing and composing):
\begin{align*}
(s_{U,U})^0 
:=
(0_{U} \otimes  0_{U}) \, \Comp \, s_{U,U} \, \Comp \,  (0_{U} \otimes
 0_{U})
=0_{U} \otimes 0_{U}
= 0_{U \otimes U}
\end{align*}
   \end{definition}
We abbreviate $s_{U,U}$ and 
$(s_{U,U})^{0}$ 
as $s$ and $s^0$, respectively. 

To avoid collapsing the categorical framework
whose GoI interpretation becomes
the degenerate zero, we assume
\begin{align}
\text{$s$ is nonzero;} & \text{
that is, $s$ and $s^0$ are distinct endomorphisms on $U^2$
in ${\cal C}$.} \label{noncollapse}
\end{align}
This is a technical assumption for the main theorem
(Theorem \ref{mainthm}) of this paper
to characterise the diminution of the index set in terms of the 
convergence to the zero, distinguishable from the other morphisms.

\smallskip

The zero morphism, which is required in our framework, exists 
in crucial examples of GoI situations: 
 (i) $\Rel_+$ is $\Rel$ with the disjoint union $+$ of sets
as $\otimes$
and a reflexive object $\mathbb{N}$.
The \textit{empty relation} on $\mathbb{N}$ is the zero morphism.
Furthermore $0_{\mathbb{N}} + 0_{\mathbb{N}} = 0_{\mathbb{N} +
\mathbb{N}}$ sufficient to the condition (\ref{compz}).
(ii) The monoidal subcategories $\Pfn$ and $\PInj$ of $\Rel_+$,
both in which resides the zero morphism. $\PInj$ is known to be equivalent to the original
category ${\sf Hilb_2}$ of Hilbert spaces and partial isometries
for Girard's GoI 1 \cite{Gi89}. 

\noindent {\bf Note:} The above examples of GoI situations happen to be
sum-style monoidal structures \cite{HS11},
whose $\otimes$ is given by the disjoint union. 
The style is known to capture the notion of feedback 
as data flow in terms of streams of tokens around graphical networks.
However our categorical framework (\ref{GoIsituzero})
in the present paper is the general one,
hence does not assume that the monoidal product is sum-style.

\smallskip


\smallskip

The $i$-th constituent $x_i$ of 
$\bm{x}=(x_{1}, \ldots , x_{\ell})
\in \abs{\pf{\pi}{\Delta}{\Gamma}}$
of Definition \ref{intRelc}
corresponds, in terms of the membership relation,
to the $i$-th occurrence of formulas $\widehat{\Delta},\Gamma$
with a unique sublist $\widehat{\Delta}$.

\noindent

\begin{lemma}[Tag of $x_i$ with $\bm{x}=(x_1, \ldots ,  x_\ell)
\in \abs{\widehat{\Delta}} \times \abs{\Gamma}$
for $\bm{x}
\in \abs{\pf{\pi}{\Delta}{\Gamma}}$]
Every element $\bm{x}=(x_1, \ldots ,  x_\ell)
\in \abs{\pf{\pi}{\Delta}{\Gamma}}$ interpreting $\pi$
in Definition \ref{intRelc}
belongs to $\abs{\widehat{\Delta}} \times \abs{\Gamma}$
with a unique sublist $\widehat{\Delta}$ of $\Delta$.
That is, for $\bm{x}=(x_1, \ldots ,  x_\ell)$, there exists 
a unique sublist $\widehat{\Delta}$ such that 
the $i$-th constituent $x_i \in \abs{A_i}$ for the $i$-th formula $A_i$ in $\widehat{\Delta}$
(resp. in $\Gamma$) when $i \leq 2m$ (resp. $i > 2m$), where
$2m$ is the number of formulas in $\widehat{\Delta}$.
The formula $A_i$ is called the {\em tag} of the $i$-th constituent
$x_i$ of $\bm{x}$ in
 $\abs{\pi}$.  Note the sublist
 $\widehat{\Delta}$ is determined not only by $\pi$ but also by $\bm{x}$, as shown in the
 construction (\ref{conschedelta})
in the following proof.
\end{lemma}
\begin{proof}{}
As $\abs{\pf{\pi}{\Delta}{\Gamma}} \cong
\abs{\pf{\pi}{\Delta}{\Gamma}}^{ \{ * \}}$,
by (\ref{eqBETrans}) in Definition \ref{exBETrans}
when $J$ is the singleton set $\{ * \}$, every
$\bm{x}=(x_1, \ldots ,  x_\ell)$ factors $\bm{x} = \bm{x}' \times \bm{x}''$
so that 
$$\begin{aligned}
\vdash_{\{ * \}} \formmean{([\Delta], \Gamma)}{\bm{x}}
= {\inseqmeansepare{\{*\}}{\Delta}{\Gamma}{\bm{x}'}{\bm{x}''}}
\end{aligned}$$
While all the formulas in the sequence 
$\formmean{\Gamma}{\bm{x}''}$ of $\MALLI$ formulas
has the domain $\{ *\}$,
each formula in the sequence $\sblformmean{\Delta}{\bm{x}'}$
has the domain either $\{ * \}$ or $\emptyset$. Thus the
 unique sublist $\widehat{\Delta}$ is determined by the following two
steps: (i) Ridding $\sblformmean{\Delta}{\bm{x}'}$
of all the formulas $D$ such that $d(D)=\emptyset$,
which yields the subsequence of $\sblformmean{\Delta}{\bm{x}'}$.
(ii) $\widehat{\Delta}$ is defined to be the
subsequence of (i) restricted to $\emptyset$ (i.e., forgetting the domain)
in order to obtain non-indexed formulas.
\noindent 
\begin{align} \label{conschedelta}
\text{\parbox{.85\textwidth}
{To be short for (i) and (ii), 
the sublist $\widehat{\Delta}$ is
the unique one making $\sblformmean{\widehat{\Delta}}{\bm{x}'}$
agree with $\sblformmean{\Delta}{\bm{x}'}$ ridden of all the formulas
 whose domain is $\emptyset$.}}
\end{align}
Note whenever a cut formula occurs in $\widehat{\Delta}$,
so does its dual formula, hence
$\bm{x}'=(x_1, \ldots ,  x_{2m})$ for a natural number $m$,
then $\bm{x}''=(x_{2m+1}, \ldots ,  x_\ell)$.
By Definitions \ref{BETrans} and \ref{exBETrans},
the membership relation for the assertion of $x_i$
and the $i$-th formula in $\widehat{\Delta}, \Gamma$
follows.
\end{proof}

In what follows, 
we make permutations
$(x_{\tau(1)}, \ldots , x_{\tau(\ell)})$
 among the constituents
implicit so that $\bm{x}$ is up to the permutation.
This is because the permutation corresponds to the exchange rule
eliminated from our syntax.
The permutations will be reflected by the symmetry of monoidal product
of ${\cal C}$,
in the following interpretation $\Mean{\bm{x}}$, which we denote by $\cong$.

\begin{definition}\label{endopartia}{\bf Endomorphism $\Mean{\bm{x}}$ on tensor folding 
$U$'s 
and tensor folding $\sigma_{\bm{x}}$ of symmetry $s$ and of zero $s^0$}
 \begin{itemize}
 \item 
Every $\bm{x}=(x_1, \ldots ,  x_\ell)
\in \abs{\pf{\pi}{\Delta}{\Gamma}}$
is interpreted as
 an endomorphism $\Mean{\bm{x}}_\pi$
on the tensor product $U^\ell$
 together with an endomorphism $\sigma_{\bm{x}}^\pi$
on a subfactor 
$U^{2m}$ of $U^\ell$. The endomorphism $\sigma_{\bm{x}}^\pi$ interprets cut rules in
$\pf{\pi}{\Delta}{\Gamma}$ and is 
an $m$-ary tensor folding of morphisms which are either
$s^1=s$ or $s^0=0$ 
(cf. Definition \ref{zeroact}) both on $U^2$:
\begin{align}
 \Mean{\bm{x}}_\pi: U^\ell \longrightarrow U^\ell & &  \mbox{and} & &
\sigma_{\bm{x}}^\pi= \otimes_{i=1}^{m} \; s^{\eta(i)} & & 
\mbox{where $\eta_{\bm{x}}$ is a $\{0,1 \}$-valued function.} \label{sigma}
\end{align}
To distinguish each $i$-th component $U$ of $U^\ell$,
we label each component $U_{x_i}$ with $x_i$ by abuse 
of notation, since the label $x_i$
is always clearly specified to designate the $i$-th constituent of $\bm{x}$.
Then $x_i \in \abs{A_i}$ so that $A_i$ is the tag of $x_i$.
Under this labelling, $\eta(i)$ of $s^{\eta(i)}$
on $U_a \otimes U_{a'}$ is defined to be 
the Kronecker delta
$\delta_{a,a'}$,
where the tags of $a$ and $a'$ are pairwise dual formulas in
$\widehat{\Delta}$. That is, 
\begin{align}
 \sigma_{\bm{x}}^\pi = \otimes 
 \; s^{\delta_{a,a'}} &  & \mbox{where 
$a$ and $a'$ range 
so that their tags are pairwise
} \label{spsigma} \\ & &  
\mbox{dual cut formulas in $\widehat{\Delta}$.
} \nonumber
\end{align}
We define $(\Mean{\bm{x}}_\pi, \sigma_{\bm{x}}^\pi)$
by induction on the construction of the proof $\pi$. 
\item
We simultaneously define that a component 
$U_{x_i}$ such that the tag of $x_i$ is a formula in $\Gamma$
is {\em contracted} by the induction on $\pi$
\footnote{ For the choice of $x_i$ (i.e., a choice of a formula (not in
the cut-list $\Delta$ but) in $\Gamma$ ),
the construction is free from cut.}.
Since $U_{x_i}$ appears 
both in the co-domain and in the domain
of $\Mean{\bm{x}}$,
every contracted component 
in the domain (resp. co-domain) of $\Mean{\bm{x}}$
is a domain (resp. co-domain) of a unique retraction
(resp. co-retraction), called {\em
associated retraction} (resp. {\em associated co-retraction})
\footnote{assoc-ret (resp. assoc-coret) for short}.
We simply say $i$ is {\em contracted} when so is $U_{x_i}$.
\end{itemize}
\end{definition}
In the definition, $\pi_1$ and $\pi_2$
denote the two premise proofs of the binary rules,
and $\pi'$ of the unary rules.

\noindent (Axiom) \\
$\bm{x}=(\bar{*},*) \in \abs{A^\perp, A}$ with $A=1$ and $A^\perp =
\perp$. 
We define $\Mean{\bm{x}}_\pi$ to be a symmetry 
$s_{U_{\bar{*}},U_*}$ on $U_{\bar{*}} \otimes U_*$ of ${\cal C}$.
Because $\pi$ is cut-free, $\sigma_{\bm{x}}^\pi$ is empty by definition.

\noindent $\bm{x}$ has no contracted component
 so that neither $U_*$ nor $U_{\bar{*}}$ are contracted.

\smallskip

\noindent(Cut rule) \\
$\bm{x}= (\bm{v}^1, \bm{v}^2, a, a',\bm{w}^1  ,\bm{w}^2 )
\in \abs{\widehat{\Delta_1}} \times
\abs{\widehat{\Delta_2}} \times \abs{A} \times \abs{A^\perp}
\times \abs{\Gamma_1} \times \abs{\Gamma_2}
$, so $\bm{x}_1= (\bm{v}^1,\bm{w}^1,a)$ and 
$\bm{x}_2=  (\bm{v}^2,
 a', \bm{w}^2 )$ belong respectively to
$\abs{\pi_1}$ and to $\abs{\pi_2}$. \\
We define 
 \begin{align*}
&  \Mean{\bm{x}}_\pi \cong \Mean{\bm{x}_1}_{\pi_1} \otimes 
\Mean{\bm{x}_2}_{\pi_2} \quad
  \text{and}  
&  \sigma_{\bm{x}}^\pi \text{~is~}
  \sigma_{\bm{x}_1}^{\pi_1} \otimes \sigma_{\bm{x}_2}^{\pi_2} \otimes 
(s_{U_a,U_{a'}})^{\delta_{a,a'}}
  \end{align*}
That is, 
if 
$a=a'$ (resp. $a \not =a'$),  then 
$\sigma_{\bm{x}}^\pi$ on $U_a \otimes U_{a'}$
is $s$ (resp. $s^0$), and 
$\sigma_{\bm{x}}^\pi$ on the remaining components is
$\sigma_{\bm{x}_1} \otimes \sigma_{\bm{x}_2}$ .
Note the definition makes sense because
$\sigma_{\bm{x}_1}^{\pi_1} \otimes \sigma_{\bm{x}_2}^{\pi_2}$ acts on the domain
distinct both from $U_{a}$ and $U_{a'}$.

We say the cut (of the last rule of $\pi$) {\em matches} (resp. {\em mismatches})
in $\bm{x}$ if $a = a'$ (resp. otherwise).

\smallskip

\noindent($\parr$-rule) \\
$\bm{x}=(\bm{v},(a,b))$,
so that $\bm{x'}=(\bm{v}, a, b) \in \abs{\pi'}$.
Note $A \parr B$ is the tag of $(a,b)$ in $\pi$, while $A$ (resp. $B$)
is the tag of $a$ (resp. $b$) in the premise.
$\Mean{\bm{x}}_\pi$ is
obtained directly from
$\Mean{\bm{x'}}_{\pi'}$ on $U^{\ell +1}= U^\ell \times U_{(a,b)}$
by the retraction $U_{a} \otimes U_{b} \lhd
U_{(a,b)}$.
That is,
$\Mean{\bm{x}}_\pi = \Mean{\bm{x'}}_{\pi'}^{(j,k)}=(U^{\ell -1} \otimes j)
\, \Comp \, \Mean{\bm{x'}}_{\pi'} \, \Comp \, (U^{\ell -1} \otimes k)$.
We also define $\sigma_{\bm{x}}^\pi$ by $\sigma_{\bm{x'}}^{\pi'}$.  

\smallskip

\noindent($\otimes$-rule) \\
$\bm{x}= (\bm{v}^1, \bm{v}^2, \bm{w}^1, \bm{w}^2,
(a, b))
\in \abs{\widehat{\Delta_1}} \times
\abs{\widehat{\Delta_2}} 
\times \abs{\Gamma_1} \times \abs{\Gamma_2}
\times \abs{A} \times \abs{B}$,
so that $\bm{x}_1= (\bm{v}^1,\bm{w}^1, a)$ and 
$\bm{x}_2 = (\bm{v}^2, \bm{w}^2 , b)$ are respectively from
$\abs{\pi_1}$ and $\abs{\pi_2}$.
Note $A \otimes B$ is the tag of $(a,b)$ in $\pi$, while $A$ (resp. $B$)
is the tag of $a$ (resp. $b$) in $\pi_1$ (resp. in $\pi_2$).
The endomorphism $\Mean{\bm{x}}$ is obtained directly from
$\Mean{\bm{x}_1}_{\pi_1} \otimes \Mean{\bm{x}_2}_{\pi_2}$ on $U^{\ell+1}
= U^\ell \times U_{(a,b)}$
by the retraction $U_{a} \otimes U_{b} \lhd
U_{(a, b)}$.
That is,
$\Mean{\bm{x}}_\pi \cong (\Mean{\bm{x}_1}_{\pi_1}  
\otimes \Mean{\bm{x}_2}_{\pi_2})^{(j,k)}
=(U^{\ell -1} \otimes j)
\, \Comp \, (\Mean{\bm{x}_1}_{\pi_1}  \otimes \Mean{\bm{x}_2}_{\pi_2})
\, \Comp \, (U^{\ell -1} \otimes k)$. We also define $\sigma_{\bm{x}}^\pi$ by
$\sigma_{\bm{x}_1}^{\pi_1} \otimes \sigma_{\bm{x}_2}^{\pi_2}$.

\bigskip
 
In the above both multiplicatives rules ($\otimes$ and $\parr$),
the introduced  $U_{(a,b)}$ in the domain (resp. co-domain) is 
a contracted component, and the assoc-ret
 (resp.  assoc-coret) is $k :  U_{(a,b)} \rhd U_a \otimes U_b$ 
 (resp. $j : U_a \otimes U_b \lhd U_{(a,b)}$).
 Other contracted
   components are those of
$\Mean{(\bm{v},a,b)}$ for $\parr$
and $\Mean{(\bm{v}_1,\bm{w}_1,a)}$ 
and $\Mean{(\bm{v}_2,\bm{w}_2,b)}$ for $\otimes$
distinct from the components $U_{a}$ and
 $U_{b}$. 
Note that $(\bm{v},a,b) \in \abs{\mbox{premise of $\parr$}}$
and $(\bm{v}_1,\bm{w}_1,a) \in
\abs{\mbox{left premise of $\otimes$}}$
and 
$(\bm{v}_2,\bm{w}_2,b) \in
\abs{\mbox{right premise of $\otimes$}}$.

\smallskip

\noindent($\&$-rule) \\
$\bm{x}$ is either $(\bm{v},(1, a))$
or $(\bm{v},(2, a))$, so that $(\bm{v}, a )$ are
either from  $\abs{\pi_1}$ or
$\abs{\pi_2}$,
 respectively.
Note $A_1 \& A_2$ is the tag of $(i,a)$ in $\pi$, while $A_i$
is the tag of $a$ in $\pi_1$ or in $\pi_2$ when $i=1$
or $i=2$, respectively.
We define $\Mean{\bm{x}}_\pi=\Mean{(\bm{v}, a)}_{\pi_i}$
by relabelling the component $U_a$ either by $U_{(1,a)}$ or $U_{(2,a)}$
for the domain (equally for the codomain) of $\Mean{\bm{x}}_\pi$.
We also define $\sigma_{\bm{x}}^\pi$ by $\sigma_{(\bm{v}, a)}^{\pi_i}$.

\smallskip

\noindent($\oplus_i$-rule) 
Same as $\&$-rule but using the unique premise $\pi'$ deterministically.

\bigskip

In the above both additives rules ($\&$ and $\oplus_i$), 
contracted components are those of $\Mean{(\bm{v}, a)}$ under the
 relabelling $U_{a}$ by $U_{(i, a)}$ for the component
of the domain (equally of the codomain).
Note that $(\bm{v}, a)$ belongs to one of 
$\abs{\mbox{left premise}}$ and 
$\abs{\mbox{right premise}}$ in $\&$-rule depending on
$i=1$ or $i=2$, 
and obviously to $\abs{\mbox{the unique premise}}$ in $\oplus_i$-rule.

\bigskip

In the sequel, 
the pair of Definition \ref{endopartia}
is simply written  
$(\Mean{\bm{x}},\sigma_{\bm{x}})$ 
by omitting $\pi$,  
since the proof $\pi$ will be
always specified clearly from the context.

\smallskip

\noindent({\bf Remark on Def \ref{endopartia} }) 
[{\bf The endomorphism $\Mean{\bm{x}}$ as an I/O box}] \\
The endomorphism $\Mean{\bm{x}}$ is seen as an input/output (I/O) box
on the $(n+2m)$-ary tensor folding of $U$,
whose inputs/outputs are the formulas occurring in $\Gamma, \widehat{\Delta}$,
in which $\Gamma$ contains $n$ occurrences of formulas,
and a sublist $\widehat{\Delta}$
contains $2m$ occurrences of (pairwise dual) formulas.
The formulas are the tags of $x_i$s where $\bm{x}=(x_1, \ldots ,
      x_\ell) \in \abs{\pf{\pi}{\Delta}{\Gamma}}$. \\
The endomorphism $\sigma_{\bm{x}}$
is seen as a more special box
consisting of $m$-ary tensor folding of $\{ s, s^0\}$
for the I/O formulas in the sublist $\widehat{\Delta}$. 
See Figure \ref{firstfigs} below for $(\Mean{\bm{x}}, \sigma_{\bm{x}})$.

\smallskip

A characterisation of contracted component is derived:
\begin{lemma} \label{lemtag}
$U_{x_i}$ is a contracted component of $\Mean{\bm{x}}$ for $\bm{x} \in
 \Mean{\pi}$
if and only if $x_i$'s tag
contains a multiplicative connective (i.e., $\otimes$ or $\parr$).
\end{lemma}
\begin{proof}{}
Straightforward accordingly to the inductive step of Definition
\ref{endopartia}, in which
the retraction and the co-retraction are used
only for multiplicatives-rules ($\parr$ or $\otimes$)
so that $\Mean{\bm{x}}$ is constructed 
$\Mean{\bm{x'}}^{(j,k)}$ or
$(\Mean{\bm{x}_1}  \otimes \Mean{\bm{x}_2})^{(j,k)}$,
respectively.
\end{proof}


\noindent({\bf
labelling associated retractions
$\labelret{i}$ and co-retractions $\labelcoret{i}$}) \\
Every associated retraction $\rhd$ (resp. associated co-retraction
$\lhd$) is by Definition \ref{endopartia}
uniquely labelled
with a contracted $i$ such that the tag of
$x_i$ is a formula in $\Gamma$.
The labelling is written
$\labelret{i}$ (resp. $\labelcoret{i}$).
In what follows, the labelling is made implicit
except when an explicit labelling makes an explanation
easier to understand.

\smallskip

\noindent ({\bf assoc-rets and assoc-corets in I/O box $\Mean{\bm{x}}$})  \\
When the endomorphism $\Mean{\bm{x}}$ is seen as the I/O box,
the assoc-rets and the assoc-corets are those $\rhd$'s and $\lhd$'s
whose domains and co-domains lie  
respectively among the inputs and the outputs of $\Mean{\bm{x}}$.
By the construction of $\Mean{\bm{x}}$, they lie pairwise
in the inputs and the outputs. See Figure  \ref{firstfigs} 
for $\Mean{\bm{x}}$ depicting the occurrence of the assoc-rets $\rhd$'s and
the assoc-corets $\lhd$'s.

\begin{figure}[!htbp]
$\begin{array}{c}
\xymatrix@W=1pc @H=.1pc @R=.1pc{  &  \hspace{-3ex}   \rhd \ar@<1.5ex>[r]
& 
\hspace{3.2ex}  \lhd \\
 \Gamma \hspace{-10ex} \vspace{5ex}  &  &  & \hspace{-12ex} \Gamma     \\
 &  \hspace{-3ex} \rhd  &   \hspace{3.2ex} \lhd \\
  &  &     \\ 
\widehat{\Delta} \hspace{-10ex} \vspace{5ex}  & &  & \hspace{-12ex} \widehat{\Delta}  \\
& \ar@<-1.3ex>[r]   &  \ar@{}[uuuuul]|(.5){\txt{\normalsize $\Mean{\bm{x}}$}}
\save"1,2"."6,3"*[F]\frm{}\restore  } 
\end{array}$
$\begin{array}{c}
\vspace{-3ex}
\xymatrix@W=.1pc @H=.1pc @R=.1pc{
 &     \ar[drr]  &   &   \\
  &       \ar[urr]  &    &  &    \\
\widehat{\Delta} \hspace{-5ex}   & \ar[drr]  &   & &   \hspace{-7ex}
 \widehat{\Delta}  \\
 &       \ar[urr]  &    & \ar@{}[uuull]|(.6){\txt{\normalsize$\sigma_{\bm{x}}$ $\vdots$}}
\save"1,2"."4,4"*[F]\frm{}\restore    
}
\end{array}$ \\
\begin{tabular}{cc}
$\Mean{\bm{x}}$ with assoc-rets $\rhd$
  and assoc-corets $\lhd$
& 
$\sigma_{\bm{x}}$
whose each cross is either $s$ or $s^0$
\end{tabular}
\caption{$(\Mean{\bm{x}}, \sigma_{\bm{x}})$ of Definition
 \ref{endopartia}
and assoc-rets
  and assoc-corets of Definition \ref{endopartia} }
\label{firstfigs}
\end{figure}

\noindent ({\bf Convention omitting $\operatorname{Id}_U$s})
When an indicated occurrence of a contracted component $U$ is
clear from the context,
$\rhd \Comp \Mean{\bm{x}}$ (resp. $\Mean{\bm{x}} \Comp \lhd$) 
is an abbreviation for the composition 
~$(\operatorname{Id}_U \otimes \cdots \otimes \operatorname{Id}_U \otimes \rhd \otimes \operatorname{Id}_U \otimes \cdots \otimes \operatorname{Id}_U )
 \Comp \Mean{\bm{x}}$
(resp. $\Mean{\bm{x}} \Comp
(\operatorname{Id}_U \otimes \cdots \otimes \operatorname{Id}_U \otimes \lhd \otimes \operatorname{Id}_U \otimes \cdots \otimes \operatorname{Id}_U )
$, where the domain of
$\rhd$ (resp. the codomain of $\lhd$) is the
contracted $U$.
This abbreviation is generalised for plural
indicated occurrences of
contracted components $U_1, \ldots U_r$
in $U^\ell$ as follows:
$ (\otimes^{r} \rhd)
\Comp \Mean{\bm{x}}$
(resp. $\Mean{\bm{x}} \Comp (\otimes^{r} \lhd )$)
stands for
the composition (resp. precomposition) to
$\Mean{\bm{x}}$
by the morphism tensoring $\rhd$ (resp. $\lhd$)
on the contracted components indicated
and $\operatorname{Id}_U$ on the remaining components.
Note because $r \leq \ell$, the abbreviation is for omitting identities on
$U$s. 
\renewcommand\windowpagestuff{%
\flushleft
\vspace{1ex}
\scalebox{.8}{ 
\begin{minipage}{0.3\hsize}
\xymatrix@C=1pc@R=.5pc
{ \ar@{}[ddr] |{\txt{\normalsize $\Mean{\bm{x}}$}}
     &     & \\    
\ar@{}[urr]|(.7){\hspace{.9ex}  \raisebox{-.9ex}{\txt{\scriptsize
$\Comp \rhd$}}}
\ar@{}[urr]|(.9){ \raisebox{-4ex}{\txt{\scriptsize 
$\vdots$}}}
        &      &   \\
\ar@{}[urr]|(.7){\hspace{.9ex}  \raisebox{-.9ex}{\txt{\scriptsize
$\Comp \rhd$}}}
     &   &     
\save"1,1"."3,2"*[F]\frm{}\restore  
}
\end{minipage}
\hspace{2ex} resp. \hspace{2ex}
\begin{minipage}{0.2\hsize}
\xymatrix@C=1pc@R=.5pc
{ 
&  \ar@{}[ddr] |{\txt{\normalsize $\Mean{\bm{x}}$}}
      &  
  \\    
 &              &  
\ar@{}[ull]|(.8){\hspace{.9ex}  \raisebox{-.9ex}{\txt{\scriptsize $\lhd
 \Comp$}}}
\ar@{}[ull]|(.9){ \raisebox{-4ex}{\txt{\scriptsize 
$\vdots$}}}
     \\
     &  &     
\ar@{}[ull]|(.8){\hspace{.9ex}  \raisebox{-.9ex}{\txt{\scriptsize 
$\lhd \Comp$}}}\save"1,2"."3,3"*[F]\frm{}\restore  
}   
\end{minipage}
}
}
\opencutleft
\begin{cutout}{0}{0pt}{\dimexpr.7\linewidth\relax}{3}
\vspace{1ex}
In the sequel, the two abbreviations are pictured as in the left
hand respectively.
Using a notation of the $r$-ary tensor folding of $\rhd$
(resp. $\lhd$), it is also written by
$\rhd^r \Comp \Mean{\bm{x}}$
(resp. $\Mean{\bm{x}} \Comp \lhd^r$).
This convention is equally employed
when indicated occurrences of assoc-(co)rets
are clear from the context.
\end{cutout}

Under this convention,
for any contracted component in the co-domain (resp. domain) of
$\Mean{\bm{x}}$, it holds;
\begin{align}
\lhd \Comp \rhd \Comp \Mean{\bm{x}}=\Mean{\bm{x}}
&  & 
(resp. \quad \Mean{\bm{x}} \Comp \lhd \Comp \rhd = \Mean{\bm{x}})
\label{asrcr}
\end{align}
That is, 
$\lhd \Comp \rhd$
composes (resp. precomposes) with any retraction
(resp. co-retraction) as the identity.
In other word, $\lhd \Comp \rhd$ is a projector on a contracted component
in the codomain (resp. domain) by composition (resp. precomposition).
Pictorially,    
\begin{figure}[!htbp]
\begin{align*}
\begin{minipage}[c]{0.2\hsize}
\xymatrix@C=1pc@R=.5pc
{      
\ar@{}[ddr] |(.6){\txt{\normalsize $\Mean{\bm{x}}$}}
\ar@{}[drr] | {\hspace{.9ex}  \raisebox{-.9ex}{\txt{\scriptsize $\labelcoret{i}$}}}    
  &   & \\    
        &   \ar@{}[r]^(.8){ \,\, \, \Comp \rhd \Comp \lhd}   &   \\
     &  &    
\save"1,1"."3,2"*[F]\frm{}\restore 
}
\end{minipage}
\hspace{0.02\textwidth}
\begin{minipage}[c]{0.05\hsize}
=
\end{minipage}
 & 
\begin{minipage}[c]{0.2\hsize}
\xymatrix@C=1pc@R=.5pc
{     
\ar@{}[ddr] |(.6){\txt{\normalsize $\Mean{\bm{x}}$}}
\ar@{}[drr] | {\hspace{.9ex}\txt{\scriptsize $\labelcoret{i}$}}    
     &   & \\    
        &      &   \\
     &  &    
\save"1,1"."3,2"*[F]\frm{}\restore 
}
\end{minipage}
& resp. 
\begin{minipage}[c]{0.2\hsize}
\xymatrix@C=1pc@R=.5pc
{   
& \ar@{}[ddr] |(.6){\txt{\normalsize $\Mean{\bm{x}}$}}
      &    \ar@{}[dll] | {\hspace{.1ex}  
\raisebox{-.9ex}{\txt{\scriptsize $\labelret{i}$}}}     \\    
 \ar@{}[r]^(.2){\rhd \Comp \lhd \Comp }
 &              &   \\
     &  &     
\save"1,2"."3,3"*[F]\frm{}\restore 
}
\end{minipage}
\hspace{0.05\textwidth}
\begin{minipage}[c]{0.05\hsize}
=
\end{minipage}
\hspace{-4ex}
\begin{minipage}[c]{0.1\hsize}
\xymatrix@C=1pc@R=.5pc
{     
& \ar@{}[ddr] |(.6){\txt{\normalsize $\Mean{\bm{x}}$}}
     &  
\ar@{}[dll] | {\hspace{.1ex}  
\raisebox{-.9ex}{\txt{\scriptsize $\labelret{i}$}}}
    \\    
  &      &         \\
  &   &      
\save"1,2"."3,3"*[F]\frm{}\restore 
}
\end{minipage}
\end{align*}
\caption{Equation (\ref{asrcr})}
\end{figure}
In Equation~(\ref{asrcr}),
the assoc-coret (resp. assoc-ret) occurs explicitly
as the last composed $\lhd$ (resp. the first precomposed $\rhd$).
Thus, the left most $\lhd$ (resp. right most $\rhd$)
in (\ref{asrcr}) is seen labelled
$\labelcoret{i}$ (resp. $\labelret{i}$)
such that the tag of $x_i$ is a formula occurrence in $\Gamma$,
where $\bm{x}=(x_1, \ldots ,x_\ell)
 \in \abs{\pf{\pi}{\Delta}{\Gamma}}$.
Since every contracted component occurs
pairwise 
in the co-domain and the domain of $\Mean{\bm{x}}$,
the two equations can be
written successively all at once;
\begin{align*}
\labelcoret{i} \Comp \rhd \Comp \Mean{\bm{x}} \Comp \lhd \Comp \labelret{i} =
 \Mean{\bm{x}} 
\end{align*}
\noindent 
The co-domain and the domain $U^\ell$ of $\Mean{\bm{x}}$
have in general several contracted components $U$s
labelled with $x_i$s, where $i$ ranges in
the set $\mathfrak{r}$ of the contraced $i$s.
Thus,
for all the several pairs of contracted components
in the domain and the co-domain of $\Mean{\bm{x}}$,
the parallel compositions and precompositions with
$\lhd \Comp
\rhd$s to each contracted components 
act as the identity on $\Mean{\bm{x}}$:
\begin{align}
 \Mean{\bm{x}} &  = 
(\otimes_{i \in  \mathfrak{r}} (\labelcoret{i} \! \Comp \rhd) )
\Comp \Mean{\bm{x}} \Comp
( \otimes_{i \in  \mathfrak{r}} (\lhd \Comp \labelret{i})) \nonumber \\  
 & = 
(\lhd \Comp \rhd)^r \Comp \Mean{\bm{x}} \Comp (\lhd \Comp \rhd)^r =
(\lhd^r \! \Comp \rhd^r) \Comp \Mean{\bm{x}} \Comp (\lhd^r \! \Comp \rhd^r)
\label{pluasrcr}
\end{align}
where $r$ is the cardinality of $\mathfrak{r}$,
hence is a number of the contracted components,
and the third equality is by $(\lhd \Comp \rhd)^r=\lhd^r \! \Comp \rhd^r$,
as $(-)^r$ is the $r$-ary tensor folding.
Since 
the last composed $\lhd^r$ (resp. the first precomposed $\rhd^r$)
in the rightmost expression of (\ref{pluasrcr})
are the explicit occurrences
of the assoc-rets (resp. assoc-corets),
the endomorphism $\Mean{\bm{x}}$ 
is written so that
all the assoc-rets $\rhd^r$ and the assoc-corets $\lhd^r$
can be made explicit as follows:
\begin{align} 
\Mean{\bm{x}}= 
\lhd^r \Comp \Mean{\bm{x}}^{\mathrm{o}}
\Comp \rhd^r  & & \txt{where} && 
\Mean{\bm{x}}^{\mathrm{o}} := 
\rhd^r \Comp \Mean{\bm{x}}
\Comp \lhd^r \label{mutual}
\end{align}
Roughly speaking, $\Mean{\bm{x}}^{\mathrm{o}}$ is $\Mean{\bm{x}}$
stripped of all the assoc-rets and assoc-corets. which is depicted in the
following Figure \ref{figmutual}:

\begin{figure}[!hbtp]
\begin{align*}
\scalebox{.8}{
\begin{minipage}[c]{0.2\hsize}
\xymatrix@C=.3pc@R=.1pc{
\ar@{}[ddddrr]|(.8){\txt{\normalsize $\Mean{\bm{x}}$}}
\ar@{}[dddrrr]|(.95){\vdots}
& & & 
\ar@{}[dddlll]|(.95){\vdots}
\\
& & &  \\ 
\rhd
 &  
& &  \lhd  \\
  &  &  &      \\
 \rhd &   &  &   \lhd \\
&   &  \\
 & & &   
\save"1,1"."7,4"*[F]\frm{}\restore
  }
\end{minipage}
}
\hspace{0.04\textwidth}
\begin{minipage}[c]{0.08\hsize}
=
\end{minipage}
\scalebox{.85}{
\begin{minipage}[c]{0.2\hsize}
\xymatrix@C=.3pc@R=.1pc{
& *=0{} 
\ar@{-}[rrr]
\ar@{-}[dd]
\ar@{}[ddddrr]|(.8){\txt{\normalsize
$\Mean{\bm{x}}^{\mathrm{o}}$}}
& & & 
 *=0{}  \ar@{-}[dd] & 
\\
& & & & &    \\ 
& \rhd
\ar@{-}[dd]
 &  
 & &  \lhd \ar@{-}[dd] &   \\
*=0{\vdots} &    &   &  &     & *=0{\vdots}   \\
&  \rhd &   &  &   \lhd &  \\
&  &      &   &  & \\
& *=0{} 
\ar@{-}[uu]
\ar@{-}[rrr] & & &  *=0{}  \ar@{-}[uu]  & 
  }
\end{minipage}
}
\hspace{0.02\textwidth}
\begin{minipage}[c]{0.08\hsize}
where
\end{minipage}
\scalebox{.85}{
\begin{minipage}[c]{0.2\hsize}
 \xymatrix@C=.3pc@R=.1pc{
& 
 *=0{} 
\ar@{-}[rrr]
\ar@{-}[dd]
\ar@{}[ddddrr]|(.8){\txt{\normalsize
$\Mean{\bm{x}}^{\mathrm{o}}$}}
 & & &  
 *=0{}  \ar@{-}[dd] 
& \\
& &  & & &    \\ 
& \mbox{\phantom{$\rhd$}}  \ar@{-}[dd]
  &  
 & &     \ar@{-}[dd] \mbox{\phantom{$\lhd$}} &  \\
*=0{\hspace{3ex}\vdots} &    &   &  &     & *=0{\hspace{-2ex}\vdots}    \\
& \mbox{\phantom{$\rhd$}}  &   &  &  \mbox{\phantom{$\lhd$}}  \\   &
& &  &   &   \\ 
& 
 *=0{} 
\ar@{-}[uu]
\ar@{-}[rrr] & & &  *=0{}  \ar@{-}[uu] &  
  }
\end{minipage}
}
\begin{minipage}[c]{0.05\hsize}
=
\end{minipage}
\scalebox{.8}{
\begin{minipage}[c]{0.2\hsize}
 \xymatrix@C=.3pc@R=.1pc{
&  \ar@{}[ddddrr]|(.8){\txt{\normalsize $\Mean{\bm{x}}$}}
& & & 
&  \\
&   & & &  &  \\ 
*=0{\lhd \Comp} &    \rhd
 &  
& &  \lhd  & *=0{\Comp \rhd}   \\
*=0{\vdots}
 &   &  &  &      & *=0{\vdots}  \\
*=0{\lhd \Comp}  &   \rhd &   &  &   \lhd & *=0{\Comp \rhd}  \\
 &  &  &   & &   \\
 & &     & &    
& \save"1,2"."7,5"*[F]\frm{}\restore 
  }
\end{minipage}
}
\end{align*}
\caption{Equation (\ref{mutual})}
\label{figmutual}
\end{figure}
\subsubsection{The action $\zr{\bm{x}}$ annihilating  associated
   (co)retractions} \label{subsub2}~\\
This subsection is concerned with defining the 
action $\zr{\bm{x}}$ (Definition \ref{zeroret})
over the associated retractions (resp. co-retractions)
in Definition \ref{endopartia} above.
The action $\zr{\bm{x}}$
arises from $\sigma_{\bm{x}}$
of (\ref{sigma}) when the feedback on the trace of ${\cal C}$
is taken into account,
and annihilates, using the zero morphism $0_U$,
a certain class of retractions and co-retractions.
This class is defined in Definition \ref{zeroret} 
in terms of zero input and output.



In what follows, we shall see
how feedback stemming from Gentzen cut-elimination
for a $\MALL$ proof $\pi$
acts on the assoc-rets
and the assoc-corets of $\Mean{\bm{x}}$ for
$\bm{x} \in \abs{\pf{\pi}{\Delta}{\Gamma}}$.
The action is stipulated in terms of the zero morphism
added in our framework. First,
in a categorical framework of Girard's GoI project,
the feedback is modelled by
the trace structure (cf. \cite{HS06})
defined by the seven axioms below:
\begin{align}
{\sf Tr}_{X,Y}^Z : 
{\cal C}(X \otimes Z, Y \otimes Z) \longrightarrow 
{\cal C}(X, Y) \label{trfam}
\end{align} 
There are three kinds of naturality axioms: {\em naturality} in $X$ and
{\em naturality} in $Y$,
and {\em dinaturality} in $Z$. The other axioms are {\em vanishing I,II},
{\em superposing} and {\em yanking}. See
Appendix \ref{traceaxioms} for the seven axioms:
the three naturalities and the four axioms.

In our setting of Definition \ref{endopartia},
the endomorphism 
$\Mean{\bm{x}}$ is on $U^{n+2m}$ so that $n$ and $2m$
are the numbers of formulas respectively in $\Gamma$ and in
a sublist $\widehat{\Delta}$, and $\sigma_{\bm{x}}$
is on the subfactor $U^{2m}$. Then the feedback is calculated by;
\begin{align}
\ex{\sigma}{\bm{x}} :=
\TR{(\operatorname{Id} \otimes \sigma_{\bm{x}})
 \Comp \Mean{\bm{x}}}{U^{2m}}{U^{n}}{U^{n}} \label{oriex}
\end{align}
See Figure \ref{firstfigs2}.
\begin{figure}[!hbt]
$$\xymatrix
@W=1pc @H=.1pc @R=.1pc
{  &  \hspace{-3ex}   \rhd \ar@<1.5ex>[r]
& 
\hspace{3.2ex}  \lhd \\
 \Gamma \hspace{-10ex} \vspace{5ex}  &  &  & \hspace{-12ex} \Gamma &      \\
 &  \hspace{-3ex} \rhd  &   \hspace{3.2ex} 
\lhd
 \\
  &  &  & &   
\ar`r[ddd]+/r3pc/`[ddd]+/d2pc/ `[ulll]+/l3pc/`[lll][lll]
  &  \\ 
\widehat{\Delta} \hspace{-10ex} \vspace{5ex}  & & \ar@<2.5ex>[r]_{\txt{\normalsize $\widehat{\Delta}$}} \ar@<-2.5ex>[r]  & 
   &
 &  \hspace{-10ex} \widehat{\Delta} 
\save"4,4"."6,5"*[F]\frm{}\restore    
 & \\
& \ar@<-1.3ex>[r]   &  \ar@{}[uuuuul]|(.5){\txt{\normalsize
 $\Mean{\bm{x}}$}}
& &  \ar@{}[uul]|{\txt{\normalsize$\sigma_{\bm{x}}$}}
\save"1,2"."6,3"*[F]\frm{}\restore 
\ar`r[d]+/r2pc/`[d]+/d1pc/ `[ulll]+/l2pc/`[lll][lll]
 &  \\
 & & & & &  }
$$
\caption{Equation(\ref{oriex}):
$\ex{\sigma}{\bm{x}}$ with feedback}
\label{firstfigs2}
\end{figure}

\smallskip

Note that when $\bm{x} \in \abs{\pf{\pi}{\Delta}{\Gamma}}$
comes from a proof $\pi$ of the multiplicative fragment,
the equation is exactly the GoI interpretation of the
proof $\pf{\pi}{\Delta}{\Gamma}$ (cf. \cite{HS06}).
This is because in the multiplicative fragment,
the index set $I$ becomes redundantly the singleton $\{ * \}$,
thus $\abs{ \pf{\pi}{\Delta}{\Gamma}}=\{ \bm{x}  \}$,
whereby $\sigma_{\bm{x}}$ is a simple tensor folding of
the symmetry $s$ (free of $0$ morphism).

By the naturalities of traces,
the assoc-corets (resp. the assoc-rets) of
$\Mean{\bm{x}}$
commute with
${\sf Tr}^{U^{2m}}_{U^{n}, U^{n}}$, hence 
taking a trace of
(\ref{asrcr}) composed with
$\operatorname{Id} \otimes \sigma_{\bm{x}}$
yields for any $i$ such that $i$-th component
of $\bm{x}=(x_1, \ldots ,x_\ell)$ is contracted.
\begin{align*}
\TR{
(\operatorname{Id} \otimes \sigma_{\bm{x}})
 \Comp (
\labelcoret{i} \Comp \rhd \Comp \Mean{\bm{x}}
)}{U^{2m}}{U^{n}}{U^{n}}
& =  
\labelcoret{i} \Comp
\TR{
(\operatorname{Id} \otimes \sigma_{\bm{x}})
 \Comp (\rhd \Comp \Mean{\bm{x}})}{U^{2m}}{U^{n}}{U^{n}}  \\ \nonumber
& = 
\labelcoret{i} \Comp \rhd \Comp \ex{\sigma}{\bm{x}}
\nonumber  
\\
 (\mbox{resp.} \, 
\TR{
(\operatorname{Id} \otimes \sigma_{\bm{x}})
 \Comp (
\Mean{\bm{x}} \Comp \lhd \Comp \labelret{i}
)}{U^{2m}}{U^{n}}{U^{n}} 
& = 
\ex{\sigma}{\bm{x}} \Comp \lhd \Comp \labelret{i}). \nonumber
\end{align*}
Thus all the assoc-rets and assoc-corets of $\Mean{\bm{x}}$ are written
explicitly;
\begin{align}
\TR{
(\operatorname{Id} \otimes \sigma_{\bm{x}})
 \Comp 
\Mean{\bm{x}}
}{U^{2m}}{U^{n}}{U^{n}}
& = 
(\otimes_{i \in \mathfrak{r}}  \, \labelcoret{i})  \Comp
\TR{ (\operatorname{Id} \otimes \sigma_{\bm{x}})
 \Comp \Mean{\bm{x}}^{\mathrm{o}} }{U^{2m}}{U^{n}}{U^{n}}
\Comp 
(\otimes_{i \in  \mathfrak{r}} \,  \labelret{i} ) \nonumber \\
& =
\lhd^r \Comp
\TR{ (\operatorname{Id} \otimes \sigma_{\bm{x}})
 \Comp \Mean{\bm{x}}^{\mathrm{o}} }{U^{2m}}{U^{n}}{U^{n}}
\Comp \rhd^r \label{allnatasrcr} 
\end{align}
where 
$\mathfrak{r}$ is the set of all 
contracted $i$s and $r$ is the cardinality of $\mathfrak{r}$. \\
Equation(\ref{allnatasrcr}) is depicted in Figure 
\ref{figallnatasrcr},
in which the dotted squares are the scopes of the traces
and the shifting of the scopes are naturalities of the $\rhd$s and the $\lhd$s;
\begin{figure}[!htbp]
\begin{minipage}{0.35\hsize}
\scalebox{.8}{
$\xymatrix
@W=.1pc @H=.1pc @R=.1pc
{  & & & & &  \\
 &  \hspace{-1ex}   \triangleright 
& 
\hspace{1ex}  \triangleleft \\
  \hspace{-10ex} \vspace{5ex}  &  &  & \hspace{-12ex}  &      \\
 &  \hspace{-1ex} \triangleright  &   \hspace{1ex} 
\triangleleft 
 \\
  &  &  & &   
\ar`r[ddd]+/r3pc/`[ddd]+/d2pc/ `[ulll]+/l3pc/`[lll][lll]
  &  \\ 
 \hspace{-10ex} \vspace{5ex}  & & \ar@<2.5ex>[r] \ar@<-2.5ex>[r]  & 
   &
 &  \hspace{-10ex} 
\save"5,4"."7,5"*[F]\frm{}\restore    
 & \\
& 
&  \ar@{}[uuuuul]|(.5){\txt{\normalsize
 $\Mean{\bm{x}}$}}
& &  \ar@{}[uul]|{\txt{\normalsize$\sigma_{\bm{x}}$}}
\save"2,2"."7,3"*[F]\frm{}\restore 
\ar`r[d]+/r2pc/`[d]+/d1pc/ `[ulll]+/l2pc/`[lll][lll]
 &  \\
 & & & & &  \\  & & & & &  \\  & & & & &  \\
& & & & &  \\
\save"1,1"."11,6"*[F--]\frm{}\restore    
  }
$}
\end{minipage} 
\begin{minipage} {0.1\hsize}
\hspace{6ex} =
\end{minipage} 
\begin{minipage}{0.4\hsize}
\scalebox{.8}{
$
\xymatrix
@W=.1pc @H=.1pc @R=.1pc
{ &  \hspace{-1ex}   \triangleright 
&   
\hspace{1ex}  \triangleleft \\
  \hspace{-10ex} \vspace{5ex}  &  &  & \hspace{-12ex}  &      \\
 &  \hspace{-1ex} \triangleright  &   \hspace{1ex} 
\triangleleft 
 \\
  &  &  & &   
\ar`r[ddd]+/r3pc/`[ddd]+/d2pc/ `[ulll]+/l3pc/`[lll][lll]
  &  \\ 
 \hspace{-10ex} \vspace{5ex}  & & \ar@<2.5ex>[r] \ar@<-2.5ex>[r]  & 
   &
 &  \hspace{-10ex} 
\save"4,4"."6,5"*[F]\frm{}\restore    
 & \\
& 
&  \ar@{}[uuuuul]|(.5){\txt{\normalsize
 $\Mean{\bm{x}}$}}
& &  \ar@{}[uul]|{\txt{\normalsize$\sigma_{\bm{x}}$}}
\save"1,2"."6,3"*[F]\frm{}\restore 
\ar`r[d]+/r2pc/`[d]+/d1pc/ `[ulll]+/l2pc/`[lll][lll]
 &  \\
 & & & & &  \\
 & & & & &  \\  & & & & &  \\  & & & & &  \\
\save"4,1"."10,6"*[F--]\frm{}\restore    
 }
$}
\end{minipage}
\caption{Equation (\ref{allnatasrcr}) : naturality of assoc-rets and assoc-corets} 
\label{figallnatasrcr}
\end{figure}

While inside the sole $\Mean{\bm{x}}$,
the assoc-rets and the assoc-corets (written explicitly in (\ref{mutual}))
do not interact with zero morphisms  because
the construction of $\Mean{\bm{x}}$ of Definition \ref{endopartia}
is free from the zero morphisms.
Remember that the zero morphisms reside only in $\sigma_{\bm{x}}$
as subfactors (cf.(\ref{sigma})).
However when they are put inside the context
$\TR{
(\operatorname{Id} \otimes \sigma_{\bm{x}})
 \Comp -}{U^{2m}}{U^{n}}{U^{n}}$
(written explicitly in (\ref{allnatasrcr})),
they may interact with zero morphisms arising
from $\sigma_{\bm{x}}$ 
via the feedback of the trace.
That is, the trace in a monoidal category
takes feedback into account, hence
makes the zeros stemming from $\sigma_{\bm{x}}$
interact with 
the assoc-rets and
the assoc-corets of $\Mean{\bm{x}}$. 
This yields a certain action $\epsilon_{\bm{x}}$ on
the assoc-(co)rets of $\Mean{\bm{x}}$, 
as defined in Definition \ref{zeroret} below.



\begin{definition} \label{zio}
{\bf zero input (resp. output) of
assoc-coret (resp. assoc-ret) w.r.t the interpretation
$\sigma_{\bm{x}}$ of cuts} 

\noindent ({\bf zero input of assoc-coret $\lhd$}) ~
An assoc-coret $\lhd$ of $\Mean{\bm{x}}$
is said to have {\em zero input} w.r.t $\sigma_{\bm{x}}$
when $\lhd$ decomposes in $\ex{\sigma}{\bm{x}}$ either as
$\lhd \Comp (0_U \! \otimes U)$ or as
$\lhd \Comp (U \otimes 0_U)$.

\noindent ({\bf zero output of assoc-ret $\rhd$}) ~
An assoc-ret $\rhd$ of $\Mean{\bm{x}}$
is said to have {\em zero output} w.r.t $\sigma_{\bm{x}}$
when $\rhd$ decomposes in $\ex{\sigma}{\bm{x}}$ either
as $(0_U \! \otimes U) \Comp \rhd$ or as $(U \otimes 0_U) \Comp \rhd$.
\end{definition}
Why do we use the terminology zero input (resp. output) ? 
The object $U \otimes U$ of $\lhd$'s domain (resp. $\rhd$'s co-domain)
can be regarded as having two inputs (resp. outputs),
one left component $U$ and the other right one.
Then the decomposition in each case says that one of two inputs (resp. outputs)
is zero. \\
Pictorially,
\scalebox{.8}{
$\begin{array}{r}
\stackrel{0_U}{\longrightarrow}   \\
 \stackrel{\text{\normalsize $\longrightarrow$}}{\text{\scriptsize $U$}} 
  \end{array}
\! \! \! \raisebox{.4ex}{$\lhd$}$
} or
\scalebox{.8}{
$ \begin{array}{r}
\stackrel{U}{\longrightarrow}   \\
 \stackrel{\text{\normalsize $\longrightarrow$}}{\text{\scriptsize $0_U$}} 
  \end{array}
\! \! \! \raisebox{.4ex}{$\lhd$}$}
for the zero input
and 
\scalebox{.8}{
$\raisebox{.4ex}{$\rhd$} \! \! \!
\begin{array}{l}
\stackrel{0_U}{\longrightarrow}   \\
 \stackrel{\text{\normalsize $\longrightarrow$}}{\text{\scriptsize $U$}} 
\end{array}$
} or
\scalebox{.8}{
$\raisebox{.4ex}{$\rhd$} \! \! \!
\begin{array}{l}
\stackrel{U}{\longrightarrow}   \\
 \stackrel{\text{\normalsize $\longrightarrow$}}{\text{\scriptsize $0_U$}} 
\end{array}$
}
for the zero output.


Note that Definition \ref{zio} is alternatively stated as follows:
When the assoc-coret (resp. assoc-ret)
is written explicitly as $\ex{\sigma}{\bm{x}} = \lhd \Comp {\sf g}$
(resp. $\ex{\sigma}{\bm{x}} = {\sf g} \Comp \rhd$) 
(cf. (\ref{asrcr})),
the assoc-coret $\lhd$ (resp. assoc-ret $\rhd$)
has zero input (resp. output) iff
either $0_U \otimes U$
or $U \otimes 0_U$ acts trivially 
 on ${\sf g}$ by composing
(resp. precomposing) to the
indicated component $U \otimes U$.


\begin{example} \label{exzio}
Let $\pi_{[A \& A, A^\perp \oplus A^\perp] \, \,
A^\perp, A \otimes B, B^\perp}
$ be a proof obtained by
a $\otimes$-rule
between $\pi_1$ of Section \ref{prolo} (the first paragraph)
and $ax_{B^\perp, B}$.
Let $\bm{x} := \nu_2 \times (\bar{\star}, \star)
\in \abs{\pi}$, where
$\nu_2$ is in Section \ref{prolo} and $(\bar{\star}, \star) \in \abs{
ax_{B^\perp, B}}$. 
\bigskip
\renewcommand\windowpagestuff{%
\flushleft
\vspace{3ex}
\begin{minipage}{0.4\hsize}
\xymatrix@C=1.3pc@R=1pc
{&   \ar[rd] &       &    &   \\
& \ar[ru]^(.4){s} \ar@{<-} `l []-<2em,0cm> `[u]+<0cm,1em> `[rr]+<1.5em,0cm> `[l] [rr] 
 &       \ar[rd]  &    &          \\
&    \ar[rd]   & \ar[ru]^(.3){s^0}   &
\ar `r []-<1.5em,0cm> `[dd]-<0cm,1em> `[ll]-<2em,0cm>   `[r] [ll] 
      &   \\
&  \rhd \hspace{-1.5ex}  \ar[ru]^(.4){s}  
\ar[rd]  &  \hspace{-1.5ex} \lhd  & &    \\
 & \ar[ru]^(.4){s}  &     &   & 
\save"1,2"."5,3"*[F.]\frm{}\restore }
\end{minipage}
\begin{minipage}{0.1\hsize}
=
\end{minipage}
\begin{minipage}{0.1\hsize}
\xymatrix@C=1.3pc@R=1pc
{   \ar[rd]   &  \\
 \rhd \hspace{-1.5ex}  \ar[ru]^(.35){s^0}  
\ar[rd]  &  \hspace{-1.5ex} \lhd \\
 \ar[ru]^(.4){s}  & 
   & 
   }
\end{minipage}
}\opencutleft
\begin{cutout}{0}{0pt}{\dimexpr.55\linewidth\relax}{7}
\noindent
\bigskip
Then $\Mean{\bm{x}}$ has the unique pair of assoc-ret and
assoc-coret both interpreting the $\otimes$-rule.
See the left-hand dotted rectangle representing
$\Mean{x}$ with the assoc-ret and the assoc-coret.
The pair of assoc-ret and the assoc-coret appears explicitly
in the second and the third
of the following equations (in which 
$\sigma_{\bm{x}}=s^0_{U,U}$):
\end{cutout}
\bigskip
$\begin{aligned}
\TR{
(\operatorname{Id} \otimes \sigma_{\bm{x}})
 \Comp \Mean{\bm{x}}}{U^2}{U^3}{U^3}
= 
\lhd \Comp
\TR{ (\operatorname{Id} \otimes \sigma_{\bm{x}})
 \Comp \Mean{\bm{x}}^{\mathrm{o}}}{U^2}{U^4}{U^4}
\Comp \rhd
= 
\lhd \Comp
(s_{U,U}^0 \otimes s_{U,U})
\Comp \rhd
\end{aligned}$, 
where 
$\Mean{\bm{x}}^{\mathrm{o}}$ is 
$\Mean{\bm{x}}$ without
the assoc-ret and the assoc-coret. \\
See the above figure whose LHS and RHS are 
the first and the last equations, respectively,
The assoc-coret (resp. assoc-ret) has zero input (resp. output)
because the right picture depicts
$\lhd$ (resp. $\rhd$) having a zero input (resp. output) 
from the northwest (resp. to the northeast).
Hence $(0_U \otimes U)$ composes (resp. precomposes)
to $s_{U,U}^0 \otimes s_{U,U}$ trivially. 
\end{example}

\begin{definition}[action $\zr{\bm{x}}$ on
assoc-rets and assoc-corets of $\Mean{\bm{x}}$] \label{zeroret} 
The endomorphism
$\sigma_{\bm{x}}$ of Definition \ref{endopartia} for $\bm{x} \in
 \abs{\pf{\pi}{\Delta}{\Gamma}}$ yields the  following action
$\zr{\bm{x}}$ on the assoc-rets and the assoc-corets
of $\Mean{\bm{x}}$.
The action $\epsilon_{\bm{x}}$ acts
on each assoc-ret and assoc-coret as {\em either zero}
 or the {\em identity} by (pre)composition on
them, as follows: 
\begin{center}
\begin{tabular}{lcr}
 
$ \rhd^{\zr{\bm{x}}} =
\begin{cases}
 \rhd^0 &  \text{if $\rhd$
 has a zero output} \\
& \hfill \text{w.r.t $\sigma_{\bm{x}}$}  \\
 \rhd & \text{otherwise}
\end{cases}$
&  \hspace{3ex} &
$\lhd^{\zr{\bm{x}}}   =
\begin{cases}
 \lhd^0 &  \text{if $\lhd$
has a zero input} \\
 &  \hfill \text{w.r.t $\sigma_{\bm{x}}$}   \\
\lhd  & \text{otherwise}
\end{cases}$
\end{tabular}
\end{center}
where {\em zero actions $\rhd^0$ and $\lhd^0$} are
defined respectively as follows:
\begin{align*}
\rhd^0 :=
0_{U, U\otimes U}
= k \, \Comp \, 0_U 
= (0_{U} \otimes 0_{U}) \, \Comp \, k
& & 
\lhd^0 :=
0_{U\otimes U, U} = 0_U \, \Comp \, j = j \, \Comp \, (0_{U}
	       \otimes 0_{U})  
\end{align*}
That is, the zero
{\em annihilates}
 the pair of assoc-ret and assoc-coret 
$j: U \otimes U \lhd U :  k$
to the pair of the zero morphisms $j^0: U \otimes U \lhd^0 U: k^0$,
where $j^0=\lhd^0$ and $k^0=\rhd^0$.



\end{definition}

The action $\epsilon_{\bm{x}}$ of Definition \ref{zeroret} is
by definition conjugate
on the pairwise tensor foldings
$(\otimes_{i \in \mathfrak{r}} \, \labelret{i}, \otimes_{i \in \mathfrak{r}} \, \labelcoret{i} )
=(\rhd^r, \lhd^r)$ 
of the assoc-rets and the assoc-corets represented in (\ref{mutual}),
where
$\mathfrak{r}= \{ i \mid 
\mbox{$i$ is contracted with 
$\bm{x}=(x_1, \ldots ,x_\ell)$} \}$.
Hence
we define to formulate the action on $\Mean{\bm{x}}$
by conjugation:
\begin{align}
 \Mean{\bm{x}}^{\zr{\bm{x}}} :=
(\lhd^{\zr{\bm{x}}})^r
\Comp
\Mean{\bm{x}}^{\mathrm{o}}
\Comp
(\rhd^{\zr{\bm{x}}})^r \label{eex}
\end{align}
Pictorially, 
$$\begin{aligned}
\begin{minipage}[c]{0.1\hsize} 
$\Mean{\bm{x}}^{\zr{\bm{x}}}$
\end{minipage}
\begin{minipage}[c]{0.05\hsize} 
:=
\end{minipage}
\scalebox{.8}{
\begin{minipage}[c]{0.2\hsize}
\xymatrix@C=.3pc@R=.1pc{
& *=0{} 
\ar@{-}[rrr]
\ar@{-}[dd]
\ar@{}[ddddrr]|(.8){\txt{\normalsize
$\Mean{\bm{x}}^{\mathrm{o}}$}}
& & & 
 *=0{}  \ar@{-}[dd] & 
\\
& & & & &    \\ 
& \rhd^{\zr{\bm{x}}}
\ar@{-}[dd]
 &  
 & &  \quad \lhd^{\zr{\bm{x}}} \ar@{-}[dd] &   \\
*=0{\hspace{3ex} \vdots} &    &   &  &     & *=0{\hspace{-3ex} \vdots}   \\
&  \rhd^{\zr{\bm{x}}} &   &  &  \quad  \lhd^{\zr{\bm{x}}} &  \\
&  &      &   &  & \\
& *=0{} 
\ar@{-}[uu]
\ar@{-}[rrr] & & &  *=0{}  \ar@{-}[uu]  & 
  }
\end{minipage}
}
\end{aligned}$$
This action of $\epsilon_{\bm{x}}$,
by naturalities, 
extends to the action $\epsilon_{\bm{x}}$
on the corresponding retractions and co-retractions
in (\ref{allnatasrcr}):
\begin{align*}
(\TR{( \operatorname{Id}  
 \otimes \sigma_{\bm{x}})
 \Comp \Mean{\bm{x}}}{U^{2m}}{U^{n}}{U^{n}})^{\zr{\bm{x}}} 
& :=
(\lhd^{\zr{\bm{x}}})^r
\Comp \TR{( \operatorname{Id}  
 \otimes \sigma_{\bm{x}})
 \Comp \Mean{\bm{x}}^{\mathrm{o}}}{U^{2m}}{U^{n}}{U^{n}}
(\rhd^{\zr{\bm{x}}})^r  \\
& = 
\TR{
(\lhd^{\zr{\bm{x}}})^r
\Comp
( \operatorname{Id}  
 \otimes \sigma_{\bm{x}})
 \Comp 
\Mean{\bm{x}}^{\mathrm{o}}
\Comp
(\rhd^{\zr{\bm{x}}})^r
}{U^{2m}}{U^{n}}{U^{n}} 
& \mbox{by nats}\\
& = 
\TR{( \operatorname{Id}  
 \otimes \sigma_{\bm{x}})
 \Comp 
(\lhd^{\zr{\bm{x}}})^r
\Comp
\Mean{\bm{x}}^{\mathrm{o}}
\Comp
(\rhd^{\zr{\bm{x}}})^r
}{U^{2m}}{U^{n}}{U^{n}} 
& \mbox{by (\ref{allnatasrcr})} \\
& = 
\TR{(\operatorname{Id} \otimes \sigma_{\bm{x}} )
 \Comp \Mean{\bm{x}}^{\zr{\bm{x}}}}{U^{2m}}{U^{n}}{U^{n}}
& \mbox{by (\ref{eex})}
 \end{align*}


Note by (\ref{oriex}) that the LHS of the first equation is
$\ex{\sigma}{\bm{x}}^{\zr{\bm{x}}}$. 
Recall that $r$ is the number of the assoc-rets $\labelret{i}$
(equally the assoc-corets $\labelcoret{i}$) of $\Mean{\bm{x}}$
such that $i$ is contracted.

\subsubsection{The Execution formula}
\begin{definition}[Execution formula $\Ex{\sigma}{\bm{x}}$
for $\bm{x} \in \abs{\pf{\pi}{\Delta}{\Gamma}}$] \label{Exx}~~\\
For every $\bm{x}
\in \abs{\pf{\pi}{\Delta}{\Gamma}}$,
the endomorphism $\Ex{\sigma}{\bm{x}}$ is defined by
\begin{align*}
\Ex{\sigma}{\bm{x}}
& := \ex{\sigma}{\bm{x}}^{\zr{\bm{x}}} \\
& =
\TR{
(\operatorname{Id} \otimes \sigma_{\bm{x}})
 \Comp \Mean{\bm{x}}^{\zr{\bm{x}}}}{U^{2m}}{U^{n}}{U^{n}},  
\end{align*}
where 
$(\Mean{\bm{x}}, \sigma_{\bm{x}})$
is the pair of the endomorphism on $U^{n+2m}$ and on the subfactor
$U^{2m}$ in Definition \ref{endopartia}
and $\zr{\bm{x}}$ is the action in Definition \ref{zeroret} 
on the assoc-rets and the assoc-corets of $\Mean{\bm{x}}$.
The domains (resp. the co-domains) of the assoc-rets (resp. the assoc-corets)
lie among the subfactor $U^n$ in the domain (resp. the co-domain) of $\Mean{\bm{x}}$.
See Figure \ref{extwofigs}.
\end{definition}

\begin{figure}[htbp]
\begin{center}
$\xymatrix
@W=1pc @H=.1pc @R=.1pc
{  &  \hspace{-3ex}   \rhd \ar@<1.5ex>[r]
& 
\hspace{3.2ex}  \lhd \\
 \Gamma \hspace{-10ex} \vspace{5ex}  &  &  & \hspace{-12ex} \Gamma &      \\
 &  \hspace{-3ex} \blacktriangleright  &   \hspace{3.2ex} 
\blacktriangleleft 
 \\
  &  &  & &   
\ar`r[ddd]+/r3pc/`[ddd]+/d2pc/ `[ulll]+/l3pc/`[lll][lll]
  &  \\ 
\widehat{\Delta} \hspace{-10ex} \vspace{5ex}  & & \ar@<2.5ex>[r]_{\txt{\normalsize $\widehat{\Delta}$}} \ar@<-2.5ex>[r]  & 
   &
 &  \hspace{-10ex} \widehat{\Delta} 
\save"4,4"."6,5"*[F]\frm{}\restore    
 & \\
& \ar@<-1.3ex>[r]   &  \ar@{}[uuuuul]|(.5){\txt{\normalsize
 $\Mean{\bm{x}}^{\zr{\bm{x}}}$}}
& &  \ar@{}[uul]|{\txt{\normalsize$\sigma_{\bm{x}}$}}
\save"1,2"."6,3"*[F]\frm{}\restore 
\ar`r[d]+/r2pc/`[d]+/d1pc/ `[ulll]+/l2pc/`[lll][lll]
 &  \\
 & & & & &  }
$
\end{center}
\caption{ Execution formula
$\Ex{\sigma}{\bm{x}}$,  where
   $\blacktriangleright$ (resp. $\blacktriangleleft$)
  denotes $\rhd^0$ (resp. $\lhd^0$)}
\label{extwofigs}
\end{figure}

\begin{example}
Let $\bm{x}$ be of
Example \ref{exzio}. 
Since $\zr{\bm{x}}$ acts as zero both on the unique assoc-ret $\rhd$
and on the unique assoc-coret $\lhd$,
$\begin{aligned}
\Ex{\sigma}{\bm{x}} =
\lhd^0 \Comp (s_{U,U}^0 \otimes s_{U,U})
\Comp \rhd^0= 0_{U^3,U^3}.
\end{aligned}$ 
\end{example}

Finally, the execution formula is run point-wise
for every enumerated set $\nu$ in interpretation of a proof in
$\Relc$.

\begin{definition}[Execution formula $\inEx{\nu}{J}$
for $\nu \in \abs{\pf{\pi}{\Delta}{\Gamma}}^J$]~~\\
Let $\pf{\pi}{\Delta}{\Gamma}$ be a $\MALLC$ proof.
For every  $\nu \in \abs{\pf{\pi}{\Delta}{\Gamma}}^J$,
$\Ex{\sigma}{\nu} \in \abs{\Gamma}^J$
is defined indexwise by:
\begin{align*}
(\, \inEx{\nu}{J} \, )_j =  
\ptEx{\nu_j} & \quad \text{for every index $j \in J$}
\end{align*}
\end{definition}

\subsection{Zero Convergence of Execution Formula}
This subsection concerns the main proposition
(Proposition \ref{convzero}), which says that
communicating two proofs via mismatched pair
yields zero convergence of Ex.
We start with the tracing zero lemma derivable
from some trace axioms. 

\begin{lemma}[tracing zero] \label{trz}
For any natural number $n \geq 1$, 
\begin{align} 
\TR{0_{U^{n+1}}}{U}{U^n}{U^n}&  =  0_{U^n} 
\end{align}
\end{lemma}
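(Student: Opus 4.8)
The plan is to prove $\TR{0_{U^{n+1}}}{U}{U^n}{U^n} = 0_{U^n}$ by reducing the $n$-fold case to the case $n=1$ using the \emph{superposing} axiom, and then handling the base case directly via \emph{yanking} together with the absorbing property of the zero morphism.

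First I would treat the base case $n=1$. Here we must show $\TR{0_{U^{2}}}{U}{U}{U} = 0_{U}$. The idea is to factor the zero endomorphism on $U^2$ through $U^2$ in a way that exposes a symmetry, so that the yanking axiom applies. Concretely, write $0_{U^2} = 0_{U^2} \Comp s_{U,U} \Comp 0_{U^2}$ (valid since $s$ precomposed and postcomposed with zero gives zero, by the absorbing law used already in Definition \ref{zeroact}). Then, using naturality of the trace in the first and second arguments, one can pull the outer zero factors out of the trace: $\TR{0_{U^2}}{U}{U}{U} = \TR{0_{U^2} \Comp s_{U,U} \Comp 0_{U^2}}{U}{U}{U} = 0_U \Comp \left( \TR{s_{U,U}}{U}{U}{U} \right) \Comp 0_U$. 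By the yanking axiom $\TR{s_{U,U}}{U}{U}{U} = \operatorname{Id}_U$, so the right-hand side collapses to $0_U \Comp \operatorname{Id}_U \Comp 0_U = 0_U$ by the absorbing property, establishing the base case.

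Next I would do the inductive step on $n$. Suppose the identity holds for some $n \geq 1$; we want it for $n+1$, i.e. $\TR{0_{U^{n+2}}}{U}{U^{n+1}}{U^{n+1}} = 0_{U^{n+1}}$. Here $0_{U^{n+2}} = 0_{U^n} \otimes 0_{U^2}$ (uniqueness of zero morphisms, so any decomposition of the object into a tensor gives the zero as the corresponding tensor of zeros). Applying the \emph{superposing} axiom, which says $\TR{f \otimes g}{Z}{X \otimes W}{Y \otimes W} = f \otimes \TR{g}{Z}{W}{W}$ (tracing out $Z$ only in the $g$-part), with $f = 0_{U^n}$ and $g = 0_{U^2}$ traced over the last $U$ factor, gives $\TR{0_{U^n} \otimes 0_{U^2}}{U}{U^{n+1}}{U^{n+1}} = 0_{U^n} \otimes \TR{0_{U^2}}{U}{U}{U}$. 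By the base case the second factor is $0_U$, so the whole expression is $0_{U^n} \otimes 0_U = 0_{U^{n+1}}$, completing the induction. (Alternatively, one could avoid an explicit induction and directly apply superposing once to split off the traced $U$ from an $n$-fold non-traced block, so that only the $n=1$ computation is ever needed; I would present it in whichever form is cleanest given the axiom statements in Appendix \ref{traceaxioms}.)

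The main obstacle I anticipate is purely bookkeeping: making sure the tensor-factor placement in the superposing and naturality axioms matches the convention in which $\TR{(-)}{U}{U^n}{U^n}$ traces out the \emph{last} $U$ factor of $U^{n+1}$, and that the zero morphism decomposes along exactly that tensor split. Once the decomposition $0_{U^{n+1}} = 0_{U^n} \otimes 0_{U^2}$ (over the right factor) is lined up with the superposing axiom's form, the rest is immediate from yanking plus the absorbing law for zero, both of which are already available. No genuinely hard step is expected; the lemma is a routine consequence of the trace axioms once the base case is dispatched via yanking.
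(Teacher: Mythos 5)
Your overall strategy coincides with the paper's: reduce to $n=1$ by superposing (the paper does this in a single application rather than a formal induction, but that is the alternative you yourself mention), then evaluate $\TR{0_{U^2}}{U}{U}{U}$ by factoring the zero through the symmetry and invoking naturality plus yanking. The one genuine gap is in your base case. You write $0_{U^2}=0_{U^2}\Comp s_{U,U}\Comp 0_{U^2}$ and then claim that ``naturality of the trace in the first and second arguments'' lets you pull both outer factors out, yielding $0_U\Comp\TR{s_{U,U}}{U}{U}{U}\Comp 0_U$. But naturality in $X$ and in $Y$ (Appendix \ref{traceaxioms}) only extracts morphisms of the form $g\otimes Z$, i.e.\ morphisms that are the \emph{identity on the traced leg} $Z=U$; your outer factors are $0_{U^2}=0_U\otimes 0_U$, which also kills the traced component, so the cited axioms do not apply to them as written. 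Handling the $0_U$ on the traced factor would require a further appeal to dinaturality, which you do not make.

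The paper avoids this by choosing a different decomposition, namely equation (\ref{prps}):
\begin{align*}
(0_U \otimes U) \Comp s_{U,U} \Comp (0_U \otimes U) = 0_U \otimes 0_U = 0_{U^2},
\end{align*}
in which the factors to be extracted are exactly of the form $g\otimes U$ with $g=0_U$, so naturality in $X$ and in $Y$ applies verbatim and leaves $\TR{s_{U,U}}{U}{U}{U}=U$ by yanking, giving $0_U\Comp U\Comp 0_U=0_U$. Your conclusion is correct and the repair is a one-line substitution, but as stated the key step of the base case is not licensed by the axioms you cite; this identity (\ref{prps}) is also reused later in the paper (Lemma \ref{sublem}), so it is worth isolating rather than eliding.
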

\begin{proof}{}
First by Lemma \ref{tenzero} $0_{U^m}=(0_U)^m$ for any natural number
 $m$.
Then by superposing $\TR{0_{U^{n+2}}}{U}{U^{n+1}}{U^{n+1}}=
0_{U^n}  \otimes \TR{0_{U^2}}{U}{U}{U}$, it suffices to prove
the assertion for $n=1$.
Second, observe
the equation\footnote{In a more general setting, the
natural iso
$(b \otimes U) \Comp s_{U,U} \Comp (a \otimes U)
\cong a \otimes b
\cong
(U \otimes a) \Comp s_{U,U} \Comp (U \otimes b)$,
for any endomorphisms $a$ and $b$ on $U$}
\begin{align}
(0_U \otimes U) \Comp s_{U,U} \Comp (0_U \otimes U)
= 0_U \otimes 0_U
= (U \otimes 0_U) \Comp s_{U,U} \Comp (U \otimes 0_U)
  \label{prps}
\end{align}
Thus $\TR{0_{U^2}}{U}{U}{U}=
0_U  \Comp \TR{s_{U,U}}{U}{U}{U} \Comp \,
0_U =
0_U \Comp U \Comp \, 0_U$,
where the first equation is by naturalities 
and the second equation is by yanking.
\end{proof}

\begin{figure}[!htb]
\begin{tabular}{lcr}
\begin{tabular}{c}
\begin{minipage}{0.3\hsize}
\xymatrix@C=1pc@R=.2pc
{   &     & & & &  \\
&       \hspace{-1ex}  \rhd
\ar@{}[dr] |(.6){\txt{\normalsize $\Excon_1$}}
     & \hspace{.5ex}  \lhd & & \\    
& \hspace{-1ex}  
\blacktriangleright
      & \hspace{.5ex}  \blacktriangleleft  &  & \\
& 
\ar@{<-} `l []-<2em,0cm> `[uuu]+<0cm,1em> `[rr]+<1.5em,0cm> `[l] [rr] 
&  \ar[rdd]^(.35){s^0}  &  
  &  
    &   \\
 & & & &  & = \bm{0}  \\
&  \ar@{}[ddr] |(.55){\txt{\normalsize $\Excon_2$}}  & \ar[ruu] & 
\ar `r []-<1.5em,0cm> `[ddd]-<0cm,1em> `[ll]-<2em,0cm>   `[r] [ll]   & \\
& \rhd
     &  \lhd \hspace{-1ex}
 & & \\    
 & \blacktriangleright      &    \blacktriangleleft \hspace{-1ex} 
 &    &  &  \\
  &     & & & & 
\save"2,2"."4,3"*[F]\frm{}\restore 
\save"6,2"."8,3"*[F]\frm{}\restore   
}
\end{minipage}
\\ \\
Proposition \ref{convzero}: $\Excon_i$ denotes
$\Ex{\sigma
}{\bm{x}_i}$. 
\end{tabular}
 & \hspace{1ex} & 
\begin{tabular}{c}
\begin{minipage}{0.3\hsize}
\xymatrix@C=1pc@R=.2pc
{  &  \hspace{-1ex}  \rhd
\ar@{}[dr] |{\txt{\normalsize ${\sf ex}^\delta$}}
     & \hspace{.5ex}  \lhd & \\    
& \hspace{-1ex}  \blacktriangleright      & \hspace{.5ex}  
\blacktriangleleft  & &  = \bm{0} \\
\ar[r]^0 & & 
      \ar[r]^0     &    
\save"1,2"."3,3"*[F]\frm{}\restore 
}
\end{minipage} \\ \\
Lemma \ref{sublem} : ${\sf ex}$ denotes
${\sf ex}(\sigma, \bm{x})$.
\\ \\
\begin{tabular}{c}
\begin{minipage}{0.3\hsize}
\xymatrix@C=1pc@R=.2pc
{   
\ar@{}[ddr] |{f} 
     &   &  & 
\\   
       & 
  &  & = 
\\
 &  
      \ar[r]^0     &   
\ar `r []-<1.5em,0cm> `[d]-<0cm,1em> `[ll]-<2em,0cm>   `[r] [ll]  & \\
 & & & 
\save"1,1"."3,2"*[F]\frm{}\restore 
}
\end{minipage} 
\begin{minipage}{0.3\hsize}
\xymatrix@C=1pc@R=.2pc
{  &  
\ar@{}[ddr] |{f} 
     &   & \\    
&       & 
  & &  
\\
\ar[r]^0 & & 
      \ar[r]^0     &    
\save"1,2"."3,3"*[F]\frm{}\restore \\
& & & \\ \\
}
\end{minipage} 
\\
$0$'s are $0_U$, $0_{I,U}$, $0_{U,I}$ from left to right. \\
Equation(\ref{vanishtr})
\end{tabular} 
 \end{tabular}
\end{tabular}
\caption{Prop \ref{convzero} and Lem \ref{sublem}
and Equation (\ref{vanishtr}) pictorially,
where $\blacktriangleright$ and $\blacktriangleleft$
denote respectively $\rhd^0$ and $\lhd^0$.}
\label{picconvzero}
\end{figure}

We prepare the following Lemma \ref{sublem}, which will directly entail
the main Proposition \ref{convzero}.

\begin{definition}[action $\delta_{\bm{x}}$] \label{actdelta}
For $\bm{x} \in \abs{\pf{\pi}{\Delta}{\Gamma}}$,
let us put $\Mean{\bm{x}}$ into the  context
$ (\operatorname{Id} \otimes \, 0_U ) \Comp  (-) \Comp
 (\operatorname{Id} \otimes \, 0_U )$,
allowing interaction of the assoc-rets and the assoc-corets of $\Mean{\bm{x}}$
with the two zeros $0_U$ in the context. 
Zero input (resp. zero output) of assoc-ret (resp. assoc-coret) in this context
is defined in the same manner, yielding
the action, say $\delta_{\bm{x}}$,
on the assoc-rets and the assoc-corets of $\Mean{\bm{x}}$
same as in Definition \ref{zeroret} (but simpler without the feed
 back): That is, the morphism $\rhd^{\delta_{\bm{x}}}$
is defined to be $\rhd^0$ (resp. $\rhd$)
if $\rhd$ decomposes in 
$(\operatorname{Id} \otimes \, 0_U ) \Comp \, 
\Mean{\bm{x}}
\Comp \,  (\operatorname{Id} \otimes \, 0_U )$
either as 
$(0_U \otimes \, U ) \Comp \rhd$
or 
$(U \otimes \, 0_U ) \Comp \rhd$
(resp. otherwise).
Symmetrically,
$\lhd^{\delta_{\bm{x}}}$
is defined to be $\lhd^0$ (resp. $\lhd$)
if $\lhd$ decomposes in 
$(\operatorname{Id} \otimes \, 0_U ) \Comp \, 
\Mean{\bm{x}}
\Comp \,  (\operatorname{Id} \otimes \, 0_U )$
either as 
$\lhd \Comp (0_U \otimes \, U )$
or 
$\lhd \Comp (U \otimes \, 0_U )$
(resp. otherwise).
\end{definition}

\begin{lemma}[lemma for Prop \ref{convzero}] \label{sublem}
\begin{align*}
 (\operatorname{Id} \otimes \, 0_U ) \Comp \, 
\ex{\sigma}{\bm{x}}^{\delta_{\bm{x}}}
\Comp \,  (\operatorname{Id} \otimes \, 0_U )
&  = 0_{U^n}, 
\end{align*}
where $\ex{\sigma}{\bm{x}}^{\delta_{\bm{x}}}$
is $\ex{\sigma}{\bm{x}}$ of (\ref{oriex})
whose $\Mean{\bm{x}}$ is replaced by 
$\Mean{\bm{x}}^{\delta_{\bm{x}}}$
using the action $\delta_{\bm{x}}$ of Definition \ref{actdelta}.

See Figure \ref{picconvzero} (upper-right) depicting the equation.
The lemma holds up to the the permutations $\tau$ on $U^n$
so that the left $\ex{\sigma}{\bm{x}}$ is read by   
$\tau^{-1} \Comp \, \ex{\sigma}{\bm{x}}
\Comp \, \tau$. Hence the assertion is independent of the choice of $U$ for
the $0_U$. The choice is of one formula occurrence from $\Gamma$,
as each occurrence is interpreted by the distinct $U$.
\end{lemma}
\begin{proof}{}
Induction on the construction of $\pi$ for $\bm{x}$
in Definition \ref{endopartia}.
In the proof, Equation(\ref{prps}) in the proof of Lemma \ref{trz} is used.
In the following, for $i=1,2$,  
$\bm{x}_i$ are the premises of $\bm{x}$ (i.e., 
$\bm{x}_1 = \bm{y}$ and
$\bm{x}_2 = \bm{z}$ 
in Definition
\ref{endopartia}), and 
$\excon_i$ denotes $\ex{\sigma}{\bm{x}_i}$.

\noindent (axiom) \\
$( \operatorname{Id} \otimes \, 0_U ) \Comp
\Mean{ax} \Comp 
( \operatorname{Id} \otimes \, 0_U ) =
( \operatorname{Id} \otimes \, 0_U ) \Comp
s_{U,U} \Comp 
( \operatorname{Id} \otimes \, 0_U )
= 0_U \otimes 0_U$.

\smallskip

\noindent ($\otimes$-rule) 
(case 1) $U$ is introduced by the $\otimes$-rule.\\
$\begin{aligned}
& (\operatorname{Id_1} \otimes \, 0_U \otimes \operatorname{Id_2}) 
\Comp \lhd \Comp ( \excon_1 \otimes \excon_2 ) \Comp \rhd
 \Comp
(\operatorname{Id_1} \otimes \, 0_U \otimes \operatorname{Id_2})  \\
& = 
\lhd \Comp ( ((\operatorname{Id_1} \otimes \, 0_U) \Comp \, {\sf ex_1}
\Comp (\operatorname{Id_1} \otimes \, 0_U) )
\otimes 
((\operatorname{Id_2} \otimes \, 0_U) \Comp \, {\sf ex_2}
\Comp (\operatorname{Id_2} \otimes \, 0_U) ) \Comp \rhd
= 0_{U^{n_1}} \otimes 0_{U^{n_2}}
\end{aligned}$ \\
The last equation is
by I.H.'s on $\Mean{\bm{x}_1}$ and $\Mean{\bm{x}_2}$.

\smallskip

\noindent ($\otimes$-rule) (case 2) other than case 1: \\
In this case, 
the $U$ for the $0_U$ of $\operatorname{Id}  \otimes 0_U$
is a factor from the (co)domain
of $\Mean{\bm{x}_i}$. 
We assume without loss of generality
that $i=1$. Then,
$(\operatorname{Id} \otimes \, 0_U)\Comp \, {\sf ex_1} \Comp
(\operatorname{Id} \otimes \, 0_U)=0_{U^{n_1}}$
by I.H on $\bm{x}_1$.
This directly implies that 
the co-retraction and the retraction $(j,k)$
interpreting the $\otimes$-rule are acted by $\delta$ as zero,
denoted by $(j^0, k^0)$, 
since $j$'s output and $k$'s input both on $\Mean{\bm{x}_1}$
are zeros by the I.H. Hence, when $(j,k)$ is written by $(\lhd,\rhd)$, \\
$\begin{aligned}
\lhd^0 \Comp (0_{U^{n_1}} \otimes {\sf ex_2})
\Comp \rhd^0
& =   
\lhd \Comp (0_{U^{n_1}} \otimes 
(0_U \otimes \operatorname{Id}) \Comp \, {\sf ex_2}
\Comp (0_U \otimes \operatorname{Id}))
\Comp \rhd \\
& =
\lhd \Comp (0_{U^{n_1}} \otimes 0_{U^{n_2}} ) \Comp \rhd
= 0_{U^n} 
\end{aligned}$ \\
The first equation is by the assumption and
the second equation is by I.H. on $\Mean{\bm{x_2}}$.

\smallskip

\noindent (cut-rule) \\
By the rule,
\begin{align*}
 {\sf ex}(\sigma, \bm{x})
= \TR{(\operatorname{Id}_1 \otimes s_{U,U} \otimes \operatorname{Id}_2)
\Comp ( \excon_1 \otimes \excon_2) 
}{U^2}{U^n}{U^n}
\end{align*}
We assume without loss of generality
that the $U$ for the $0_U$
of the
$\operatorname{Id} \otimes \, 0_U$ 
is a factor from the (co)domain 
of $\Mean{\bm{x}_1}$. 
Then,
LHS of the assertion is equal to \\
$\begin{array}{llr}
&
(\operatorname{Id} \otimes \, 0_U)  \Comp 
\TR{(\operatorname{Id}_1 \otimes s_{U,U} \otimes \operatorname{Id}_2)
\Comp ( \excon_1^\delta \otimes \excon_2^\delta) 
}{U^2}{U^n}{U^n}
\Comp (\operatorname{Id} \otimes \, 0_U)  \\
& =  \TR{(\operatorname{Id} \otimes \, 0_U) 
\Comp
(
(\operatorname{Id_1} \otimes s_{U,U} \otimes \operatorname{Id_2}) 
\Comp
(\excon_1^\delta \otimes \excon_2^\delta)
) \Comp (\operatorname{Id} \otimes \, 0_U) 
}{U^2}{U^n}{U^n} &  \text{naturalities} \\
&  =
\TR{(\operatorname{Id_1} \otimes s_{U,U} \otimes \operatorname{Id_2}) 
\Comp
(
(\operatorname{Id} \otimes \, 0_{U}) 
\Comp
\,  \excon_1^\delta \, \Comp (\operatorname{Id} \otimes \, 0_U))
\otimes  \excon_2^\delta) 
}{U^2}{U^n}{U^n}   &  \text{by the asm.} \\
& = \TR{(\operatorname{Id_1} \otimes s_{U,U} \otimes \operatorname{Id_2}) 
\Comp
(0_{U^{n_1+1}} \otimes \excon_2^\delta)
}{U^2}{U^n}{U^n} 
& 
 \text{I.H. on $\Mean{\bm{x_1}}$}  \\
&  =
\TR{(\operatorname{Id_1} \otimes
( 0_{U} \otimes U) \Comp s_{U,U} \Comp ( 0_{U} \otimes U)
\otimes \operatorname{Id_2})
\Comp (0_{U^{n_1+1}} \otimes \excon_2^\delta)
}{U^2}{U^n}{U^n} 
&   \text{
dinaturality}
\end{array}$ \\
$\begin{array}{llr}
& =
\TR{0_{U^{n_1+1}}}{U}{U^{n_1}}{U^{n_1}} 
\otimes
\TR{
(0_U \otimes \operatorname{Id}) \Comp \, \excon_2^\delta
}{U}{U^{n_2}}{U^{n_2}} 
&   \text{(\ref{prps}) and superposing} \\
& = \TR{0_{U^{n_1+1}}}{U}{U^{n_1}}{U^{n_1}} 
\otimes
\TR{
(0_U \otimes \operatorname{Id}) \Comp \, \excon_2^\delta
\Comp (0_U \otimes \operatorname{Id})
}{U}{U^{n_2}}{U^{n_2}} 
 &  \text{dinaturality} \\
& =
0_{U^{n_1}}
\otimes
0_{U^{n_2}}
&  \text{I.H. on $\Mean{\bm{x}_2}$} 
\end{array}$ 
\smallskip

\noindent 
The first dinaturality is via the decomposition
$0_{U^{n_1+1}}= 
(U^{n_1} \otimes 0_U)\Comp
\, 0_{U^{n_1}}
\Comp \,
(U^{n_1} \otimes 0_U)$
and the second dinaturality is via the decomposition $0_U = 0_U \Comp \, 0_U$.

\smallskip

\noindent ($\parr$-rule and additives) \\ 
Direct from the construction. 

\noindent See Figure \ref{picprf}
for a pictorial proof depicting the above rewriting in each case.
\end{proof}

\begin{figure}[!ht]
\noindent (axiom) 
\hfill ($\otimes$ case 1)

\begin{minipage}{0.3\hsize}
\xymatrix@C=1.6pc@R=.9pc
{ &    \ar[rd]   &   &  \\
\ar[r]^0 & \ar[ru]
  &     \ar[r]^0 &     
   }
\end{minipage}
=
\begin{minipage}{0.3\hsize}
\xymatrix@C=1.6pc@R=.9pc
{ 
\ar[r]^0 \ar@{}[dr]|(.4){\otimes} &    \\
\ar[r]^0 &     
   }
\end{minipage}
\hfill \begin{minipage}{0.3\hsize}
\xymatrix@C=.9pc@R=.1pc{ 
&  & & \\
& & & \\
\ar[r]^0 & 
 \rhd
   &  
\lhd \ar[r]^0 
&    \\ 
    &   &    &  \\   
     &  &    &
\save"1,2"."2,3"*[F]\frm{}\restore   
\save"4,2"."5,3"*[F]\frm{}\restore 
}
\end{minipage}
= 
\begin{minipage}{0.3\hsize}
\xymatrix@C=.9pc@R=.1pc{ 
&  & & \\
\ar@<-1ex>[r]^0 & & \ar@<-1ex>[r]^0 & \\
\rhd & 
&   & \lhd  
   \\ 
\ar@<.6ex>[r]_0    &   &  \ar@<.6ex>[r]_0   &  \\   
     &  &    &
\save"1,2"."2,3"*[F]\frm{}\restore   
\save"4,2"."5,3"*[F]\frm{}\restore 
}
\end{minipage}
=
\begin{minipage}{0.3\hsize}
\xymatrix@C=1pc@R=.1pc{ 
\bm{0} \\
\rhd \lhd \\
\bm{0}
\save"1,1"."1,1"*[F.]\frm{}\restore   
\save"3,1"."3,1"*[F.]\frm{}\restore   
}
\end{minipage}

\smallskip

\noindent ($\otimes$ case 2)

\begin{minipage}{0.3\hsize}
\xymatrix@C=.9pc@R=.1pc{ 
\ar[r]^0 &  &  \ar[r]^0 & \\
& & & \\
 & 
 \rhd
& \lhd  &  
   \\ 
    &   &    &  \\   
     &  &    &
\save"1,2"."2,3"*[F]\frm{}\restore   
\save"4,2"."5,3"*[F]\frm{}\restore 
}
\end{minipage}
= 
\begin{minipage}{0.3\hsize}
\xymatrix@C=.9pc@R=.1pc{ 
 \ar@{}[dr] |{\txt{\normalsize $\bm{0}$}}   &    \\
 &  \\ 
 \rhd^0
& ^0\lhd \\
& \\
& 
\save"1,1"."2,2"*[F.]\frm{}\restore   
\save"4,1"."5,2"*[F]\frm{}\restore 
}
\end{minipage}
= 
\begin{minipage}{0.3\hsize}
\xymatrix@C=.9pc@R=.1pc{ 
&  \ar@{}[dr] |{\txt{\normalsize $\bm{0}$}}  & & \\
& & 
& \\
\rhd & 
&   & \lhd  
   \\ 
\ar@<.6ex>[r]_0    &   &  \ar@<.6ex>[r]_0   &  \\   
     &  &    &
\save"1,2"."2,3"*[F.]\frm{}\restore   
\save"4,2"."5,3"*[F]\frm{}\restore 
}
\end{minipage}
=
\begin{minipage}{0.3\hsize}
\xymatrix@C=1pc@R=.1pc{ 
\bm{0} \\
\rhd \lhd \\
\bm{0}
\save"1,1"."1,1"*[F.]\frm{}\restore   
\save"3,1"."3,1"*[F.]\frm{}\restore   
}
\end{minipage}
\smallskip

\noindent (cut) 
\smallskip

\noindent
\begin{minipage}{0.2\hsize}
\xymatrix@C=.7pc@R=.8pc
{  \ar@<-1ex>[r]^0 &    &        \ar@<-1ex>[r]^0 &    &   \\
&  \ar@{<-} `l []-<2em,0cm> `[u]+<0cm,1em> `[rr]+<1.5em,0cm> `[l] [rr] 
 &       \ar[rd]  &    &       =   \\ 
&     &   \ar[ru]  &
\ar `r []-<1.5em,0cm> `[d]-<0cm,1em> `[ll]-<2em,0cm>   `[r] [ll] 
      &   \\
 &   &     &   & 
\save"1,2"."2,3"*[F]\frm{}\restore 
\save"3,2"."4,3"*[F]\frm{}\restore 
}
\end{minipage}
\begin{minipage}{0.2\hsize}
\xymatrix@C=.7pc@R=.8pc
{    & \ar@{}[dr] |{\txt{\normalsize $\bm{0}$}}    &          &    &   \\
&  \ar@{<-} `l []-<2em,0cm> `[u]+<0cm,1em> `[rr]+<1.5em,0cm> `[l] [rr] 
 &        \ar[rd]  &    &          \\
&     &   \ar[ru]  &
\ar `r []-<1.5em,0cm> `[d]-<0cm,1em> `[ll]-<2em,0cm>   `[r] [ll] 
      &   \\
 &   &     &   & 
\save"1,2"."2,3"*[F.]\frm{}\restore 
\save"3,2"."4,3"*[F]\frm{}\restore 
}
\end{minipage}
=
\begin{minipage}{0.2\hsize}
\xymatrix@C=.7pc@R=.8pc
{&     & \ar@{}[dr] |{\txt{\normalsize $\bm{0}$}}    &
      &  &   \\
& \ar[r]^0  
\ar@{<-} `l []-<2em,0cm> `[u]+<0cm,1em> `[rrrr]+<1.5em,0cm>
 `[l] [rrrr]
& &  
 \ar[r]^0    &    \ar[rd]  &    &          \\
& &     &    & \ar[ru]  &
\ar `r []-<1.5em,0cm> `[d]-<0cm,1em> `[lll]-<2em,0cm>   `[r] [lll] 
      &   \\
&  &   &    &   &   & 
\save"1,3"."2,4"*[F.]\frm{}\restore 
\save"3,3"."4,4"*[F]\frm{}\restore 
}
\end{minipage}
$\stackrel{dinat}{=}$
\begin{minipage}{0.2\hsize}
\xymatrix@C=.7pc@R=.8pc
{     & \ar@{}[dr] |{\txt{\normalsize $\bm{0}$}}    &
      &  &   \\
   &  \ar@{<-} `l []-<2em,0cm> `[u]+<0cm,1em> `[rrrr]+<1.5em,0cm>
 `[l] [rrrr]  & 
 \ar[r]^0    &    \ar[rd]  &  \ar[r]^0  &          \\
&     &    & \ar[ru]  &
\ar `r []-<1.5em,0cm> `[d]-<0cm,1em> `[lll]-<2em,0cm>   `[r] [lll] 
      &   \\
 &   &    &   &   & 
\save"1,2"."2,3"*[F.]\frm{}\restore 
\save"3,2"."4,3"*[F]\frm{}\restore 
}
\end{minipage}

\bigskip

\noindent =
\begin{minipage}{0.2\hsize}
\xymatrix@C=.7pc@R=.8pc
{     & \ar@{}[dr] |{\txt{\normalsize $\bm{0}$}}    &
      &  &   \\
 & \ar@<-2ex>@{}[r] |(.5){\txt{\normalsize $\otimes$}}
  \ar@{<-} `l []-<2em,0cm> `[u]+<0cm,1em> `[rr]+<1.5em,0cm> `[l] [rr] 
 &        \ar[r]^0  &    &          \\
&     &   \ar[r]^0  &
\ar `r []-<1.5em,0cm> `[d]-<0cm,1em> `[ll]-<2em,0cm>   `[r] [ll] 
      &   \\
 &   &     &   & 
\save"1,2"."2,3"*[F.]\frm{}\restore 
\save"3,2"."4,3"*[F]\frm{}\restore 
}
\end{minipage}
$\stackrel{dinat}{=}$
\begin{minipage}{0.3\hsize}
\xymatrix@C=.7pc@R=.8pc
{     & \ar@{}[dr] |{\txt{\normalsize $\bm{0}$}}    &
      &  &   \\
 &
\ar@<-1ex>@{}[r] |(.5){\txt{\normalsize $\otimes$}}
 &       
 &    &          \\\
\ar[r]^0  &     &   \ar[r]^0  &
\ar `r []-<1.5em,0cm> `[d]-<0cm,1em> `[lll]-<2em,0cm>   `[r] [lll] 
      &   \\
 &   &     &   & 
\save"3,2"."4,3"*[F]\frm{}\restore 
}
\end{minipage}
=\begin{minipage}{0.3\hsize}
\xymatrix@C=.7pc@R=.8pc
{     & \ar@{}[dr] |{\txt{\normalsize $\bm{0}$}}    &
      &     \\
 & \ar@<-1ex>@{}[r] |(.5){\txt{\normalsize $\otimes$}}
 &        
&             \\
&  \ar@{}[dr] |{\txt{\normalsize $\bm{0}$}}     &   
  \ar `r []-<1.5em,0cm> `[d]-<0cm,1em> `[l]-<2em,0cm>   `[r] [l] 
  &      \\
 &   &     &   
\save"3,2"."4,3"*[F.]\frm{}\restore 
}
\end{minipage}
=
\begin{minipage}{0.3\hsize}
\xymatrix@C=.7pc@R=.8pc
{  \ar@{}[dr] |{\txt{\normalsize $\bm{0}$}}    &
         \\
  \ar@<-1ex>@{}[r] |(.5){\txt{\normalsize $\otimes$}}
 &       
          \\
  \ar@{}[dr] |{\txt{\normalsize $\bm{0}$}}     &   \\
 & }
\end{minipage}

\smallskip
\noindent
In all the cases above, the upper (resp. lower) square
represents $\excon_1$ (resp. $\excon_2$), and 
each $\bm{0}$ inside the dotted square arises by an induction
hypothesis.
\caption{Pictorial Proof of Lemma \ref{sublem}}
\label{picprf}
\end{figure}

\begin{prop}[Mismatch gives rise to
zero convergence of Ex] ~\\ \label{convzero}
For two $\MALL$ proofs $\pf{\pi^i}{\Delta_i}{\Gamma_i,A_i}$
with $i=1,2$, let $\bm{x}_i = \lambda_i \times  (\gamma_i, a_i) 
\in \abs{\pf{\pi^i}{\Delta_i}{\Gamma_i, A_i}}$
so that 
$a_i \in \abs{A_i}$ with $a_1 \not = a_2$
and $A_1$ and $A_2$ are pairwise dual formulas.
Then
\begin{align*}
&
 \Ex{\sigma_{a_1,a_2}}{
\Ex{\sigma
}{\bm{x}_1 }
\otimes
\Ex{\sigma
}{\bm{x}_2 }
} = 0_{U^{n_1+n_2}} \hfill & &  \text{where $\sigma_{a_1,a_2} =
s^{\delta_{a_1,a_2}}$}
\end{align*}
\end{prop}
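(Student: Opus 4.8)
The plan is to read off the statement from the tracing-zero Lemma~\ref{trz} and the preparatory Lemma~\ref{sublem}, exploiting the fact that a location mismatch collapses the interpreting symmetry of the cut to a zero morphism.

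Since $a_1 \neq a_2$, the Kronecker delta $\delta_{a_1,a_2}$ vanishes, so $\sigma_{a_1,a_2} = s^{\delta_{a_1,a_2}} = s^0$, which by Definition~\ref{zeroact} is the zero endomorphism $0_{U^2}$ on $U^2 = U_{a_1} \otimes U_{a_2}$, and by uniqueness of zero morphisms $0_{U^2} = 0_{U_{a_1}} \otimes 0_{U_{a_2}}$. Put $f_i := \Ex{\sigma}{\bm{x}_i}$, an endomorphism of $U^{n_i} \otimes U_{a_i}$ (where $n_i$ is the number of formulas of $\Gamma_i$), and $n := n_1 + n_2$. Unfolding the outer execution formula, which traces out the two cut wires $U_{a_1},U_{a_2}$ against the symmetry $\sigma_{a_1,a_2}$, the left-hand side becomes
\begin{align*}
\TR{(\operatorname{Id}_{U^{n}} \otimes 0_{U_{a_1}} \otimes 0_{U_{a_2}}) \Comp (f_1 \otimes f_2)^{\epsilon}}{U^{2}}{U^{n}}{U^{n}},
\end{align*}
where $\epsilon$ is the induced zero action of Definition~\ref{zeroret} on the assoc-rets and assoc-corets of $f_1 \otimes f_2$ lying over the formulas of $\Gamma_1, \Gamma_2$.

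The idea is now to push the two zeros around the feedback loop so that each premise is presented with one of its own conclusion wires fed by $0_U$ on both sides. Using $0_{U^2} = 0_{U^2} \Comp 0_{U^2}$ and dinaturality of the trace in the traced object $U^2$, the displayed morphism equals
\begin{align*}
\TR{(\operatorname{Id}_{U^{n}} \otimes 0_{U^2}) \Comp (f_1 \otimes f_2)^{\epsilon} \Comp (\operatorname{Id}_{U^{n}} \otimes 0_{U^2})}{U^{2}}{U^{n}}{U^{n}}.
\end{align*}
Since $U_{a_1}$ and $U_{a_2}$ sit in distinct tensor factors and are conclusion wires of $f_1$ and $f_2$ respectively, the morphism under this trace rearranges, up to the symmetries of ${\cal C}$ which we keep implicit, into the tensor $g_1 \otimes g_2$ with
\begin{align*}
g_i := (\operatorname{Id}_{U^{n_i}} \otimes 0_{U_{a_i}}) \Comp f_i \Comp (\operatorname{Id}_{U^{n_i}} \otimes 0_{U_{a_i}}) .
\end{align*}
Each $g_i$ is exactly the morphism treated in Lemma~\ref{sublem}, the distinguished formula of the conclusion $\Gamma_i, A_i$ of $\pi^i$ being $A_i$; the residual action $\epsilon$ on the $\Gamma_i$-assoc-(co)rets only enlarges the class of assoc-(co)rets annihilated in that context, so the inductive proof of Lemma~\ref{sublem} goes through unchanged with these extra zeros. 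Hence $g_i = 0_{U^{n_i+1}}$, the morphism under the trace is $0_{U^{n_1+1}} \otimes 0_{U^{n_2+1}} = 0_{U^{n+2}}$, and
\begin{align*}
\Ex{\sigma_{a_1,a_2}}{\Ex{\sigma}{\bm{x}_1} \otimes \Ex{\sigma}{\bm{x}_2}} = \TR{0_{U^{n+2}}}{U^{2}}{U^{n}}{U^{n}} = 0_{U^{n}},
\end{align*}
the last equality by two applications of the tracing-zero Lemma~\ref{trz}, one traced copy of $U$ at a time, via vanishing~II.

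The step I expect to be the main obstacle is the identification carried out in the third paragraph: that the location mismatch $a_1 \neq a_2$ is, modulo the trace axioms, literally a $0_U$-sandwich of one conclusion wire of each premise interpretation $\Ex{\sigma}{\bm{x}_i}$, so that Lemma~\ref{sublem} applies to each premise separately, and that the outer zero action $\epsilon$ does not disturb this. Once this is made precise, the remaining work is the routine dinaturality rewriting and the tracing-zero computation.
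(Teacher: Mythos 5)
Your proof is correct and follows essentially the same route as the paper's: the mismatch collapses $\sigma_{a_1,a_2}$ to $0_{U^2}$, a dinaturality step (via $0 = 0 \Comp 0$) converts this into a two-sided $0_U$-sandwich of each premise's execution, and Lemma~\ref{sublem} then yields the zero. The only cosmetic difference is the order of bookkeeping: the paper first splits the trace by superposing and eliminates it factorwise with equation~(\ref{vanishtr}), whereas you split the morphism under the single big trace into $g_1 \otimes g_2$ and discharge the trace at the very end with Lemma~\ref{trz}.
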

\smallskip
Note that the LHS of the assertion is, by Definition \ref{Exx},
the following, in which
$\epsilon$ is the action arising from $\sigma_{a_1,a_2}
=s_{U_{a_1},U_{a_2}}^0$ of Definition
 \ref{zeroret}:
\begin{align} 
&  \TR{ (\operatorname{Id} \otimes \sigma_{a_1,a_2}
\otimes \sigma_{\bm{x}_1} \otimes \sigma_{\bm{x}_2})
\Comp (\Mean{\bm{x}_1}^{\zr{\bm{x}_1}} \otimes
\Mean{\bm{x}_2}^{\zr{\bm{x}_1}})^\epsilon
}{U^{2(1+m_1 +m_2)}}{U^{n_1 + n_2}}{U^{n_1 + n_2}} \label{eqprf} \\
& =
\TR{(\operatorname{Id} \otimes \sigma_{a_1,a_2})
\Comp ((\Ex{\sigma}{\bm{x}_1} \otimes
 \Ex{\sigma}{\bm{x}_2})^\epsilon
 }{U^{2(1+m_1 +m_2)}}{U^{n_1 + n_2}}{U^{n_1 + n_2}} \quad 
\txt{by nats and vanish IIs}  \nonumber
\end{align}
In the above $\Mean{\bm{x}_j}$ is an endomorphism on $U^{2(m_j+1)+n_j}$
with the subfactor $U^{2m_j}$ for $\sigma_{\bm{x}_j}$.


\smallskip

Before the proof of Proposition \ref{convzero},
let us observe a general equation derivable from certain trace axioms
(dinaturality and yanking), where $f: X \otimes U \longrightarrow Y
\otimes U$
and 
$0_{U,I}$ (resp. $0_{I,U}$) is the zero morphism from $U$
to the tensor unit $I$ (resp. the other way around).
\begin{align} \label{vanishtr}
\TR{(\operatorname{Id} \otimes \, 0_{U}) \Comp f}{U}{X}{Y} & = 
 (\operatorname{Id} \otimes \, 0_{U,I}) \Comp  f
\Comp \, (\operatorname{Id} \otimes \, 0_{I,U})
\end{align}
Note first that the zero morphisms $0_{U,I}$ and $0_{I,U}$ above are derivable from
$0_U$ using the trace: \\
$\begin{aligned}
0_{U,I} = \TR{U \otimes U \stt{0_U \otimes 0_U} U \otimes U 
\stt{j} U \cong I \otimes U }{U}{U}{I} \\
 0_{I, U} = \TR{I \otimes U \cong U \stt{k}   U \otimes U 
 \stt{0_U \otimes 0_U}  U \otimes U   }{U}{U}{I}
\end{aligned}$

See Figure \ref{picconvzero} (lower-right) depicting
Equation (\ref{vanishtr}).

\smallskip

\noindent (proof of (\ref{vanishtr})) \\
By the decomposition $0_U=0_{I,U} \Comp \, 0_{U,I}$,
the LHS is
$\TR{(\operatorname{Id} \otimes \, 0_{U,I}) \Comp f
\Comp (\operatorname{Id} \otimes \, 0_{I,U})
}{I}{X \otimes I}{Y \otimes I}$, by dinaturality,
which is equal to
the RHS by vanishing.
 \hfill (end of proof of (\ref{vanishtr}))

\smallskip

Finally we go to:
\begin{proof}{}[Proof of Proposition \ref{convzero}]
We prove the following instance of the proposition 
using Equation (\ref{eqprf}),
where $n=n_1+n_2$, 
since $\sigma_{a_1,a_2}=s^0= 0_U^2$:
\begin{align} 
\TR{( \operatorname{Id} \otimes \, 0_U^2  ) \Comp 
( \Excon_1
\otimes
\Excon_2)^\epsilon
}{U^2}{U^n}{U^n} =0_{U^{n_1+n_2}} & &  
\txt{where $\Excon_i = \Ex{\sigma}{\bm{x}_i}$
 }  \nonumber
\intertext{For this, it suffices to show the following stronger equation,
as $\excon_i$ is $\Excon_i$ ridden of the zero action 
on the assoc-rets and the assoc-corets (cf. Definition \ref{zeroret}):}
\TR{(\operatorname{Id} \otimes \, 0_U^2 ) \Comp 
( \excon_1
\otimes
\excon_2)^\epsilon
}{U^2}{U^n}{U^n} =0_{U^{n_1+n_2}} & & 
\txt{where $\excon_i = \ex{\sigma}{\bm{x}_i}$} \nonumber
\intertext{By superposing (after the distribution of $\epsilon$ over $\otimes$),
the LHS is equal to}
\TR{(\operatorname{Id} \otimes \, 0_U ) \Comp \, \excon_1^\epsilon}{U}{U^{n_1}}{U^{n_1}}
 \otimes
\TR{ (\operatorname{Id} \otimes \, 0_U) \Comp \, {\sf
 ex}_2^\epsilon}{U}{U^{n_2}}{U^{n_2}} \label{tentrace}
\end{align}
On the other hand, Lemma \ref{sublem} and (\ref{vanishtr}) say
for all $i=1,2$
\begin{align*}
\TR{(0_U \! \otimes \operatorname{Id}) \Comp \, {\sf
 ex}_i^\delta}{U}{U^{n_i}}{U^{n_i}}= 0_{U^{n_i}}
\end{align*}
Since the two actions $\epsilon$ and $\delta$ coincide again by
(\ref{vanishtr}), 
the formula (\ref{tentrace}) becomes equal to $0_{U^{n_1}} \otimes 0_{U^{n_2}}=0_{U^{n}}$.
\end{proof}


\subsection{Main Theorem}
This section concerns the main theorem of this paper.

\begin{theorem}[Ex is invariant and diminishes sets of indices] \label{mainthm}~~\\
Let $\nu \in \abs{\pf{\pi}{\Delta}{\Gamma}}^J
\indrd \nu' \in \abs{\pf{\pi'}{\Delta'}{\Gamma}}^{J'}
$ be any $\MALLIC$ proof transformation. 
Then
\begin{enumerate}
\item[(i)]
\begin{align*}
\res{\inEx{\nu}{J}}{J'} = \inEx{\nu'}{J'} \quad &
\quad  \text{and} \quad  &  \quad
\forall j \in J \setminus J' \quad
\ptEx{\nu_j}= 0 \nonumber
\end{align*}

\item[(ii)] In particular, when $\pi'$ is cut-free so that $\Delta'$
is empty, then
\begin{align*}
\res{\inEx{\nu}{J}}{J'} =\Mean{\nu'} 
\quad &
\quad  \text{and} \quad 
& J'  = \{ j \in J \mid \ptEx{\nu_j} \not = 0 \, \,
 \} 
\end{align*}

\end{enumerate}

\end{theorem}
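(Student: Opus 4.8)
The plan is to reduce the theorem to an indexwise statement about a \emph{single} reduction step and then discharge that statement with the structural results of Section~\ref{secGoIin}. Since $\indrd$ is a composite of the elementary liftings produced in Proposition~\ref{ltit}, and since $\inEx{\nu}{J}$ is computed pointwise in $j$, it suffices to prove, for one lifting $\nu\in\abs{\pf{\pi}{\Delta}{\Gamma}}^{J}\indrd\nu'\in\abs{\pf{\pi'}{\Delta'}{\Gamma}}^{J'}$ as in Proposition~\ref{ltit}: (a) $\Ex{\sigma}{\nu_j}=\Ex{\sigma}{\nu'_j}$ for every $j\in J'$; and (b) $\Ex{\sigma}{\nu_j}=0$ for every $j\in J\setminus J'$, both as equalities of endomorphisms on $U^{n}$, with $n$ the number of formulas of $\Gamma$. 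Assembling (a) and (b) along the steps of $\indrd$ then gives part~(i): a surviving index keeps the same execution value by (a), while an index that disappears at some step is already $0$ there by (b), hence $0$ from the start by (a) applied to the earlier steps. For part~(ii), cut-freeness of $\pi'$ means $\Delta'$ is empty, so $\sigma_{\nu'_j}$ is empty and $\zr{\nu'_j}$ acts trivially, whence $\Ex{\sigma}{\nu'_j}=\Mean{\nu'_j}$; combined with part~(i) this yields $\res{\inEx{\nu}{J}}{J'}=\Mean{\nu'}$. For the index set, $\Mean{\nu'_j}$ is a tensor folding of symmetries and (co)retractions free of zero morphisms, hence nonzero by the non-collapse assumption~(\ref{noncollapse}) and a retraction argument ($k\Comp j=\operatorname{Id}$), so $j\in J'$ forces $\ptEx{\nu_j}\neq0$; the converse is the second half of part~(i); thus $J'=\{\,j\in J\mid\ptEx{\nu_j}\neq0\,\}$.

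For (a) I would go case by case through Proposition~\ref{ltit}. For the multiplicative principal reduction ($\otimes$ against $\wp$) and the commutative reductions (a cut meeting a logical rule on one of its non-cut formulas), the value at $j$ of each cut pair, hence the match/mismatch data recorded in $\sigma_{\nu_j}$, is unchanged, while $\Mean{\nu_j}$ and $\Mean{\nu'_j}$ differ only by the placement of the assoc-(co)retractions relative to the traced cut factors and by the identity $k\Comp j=\operatorname{Id}$; the equality follows from naturality and dinaturality of the trace, exactly as in the multiplicative GoI of~\cite{HS06,Gi89}. For the axiom cut (Crucial case~1) at $j\in J'$, i.e.\ $b_j=a_j$, the redex cut is a match, so $\sigma_{\nu_j}$ carries the symmetry $s$ there, and tracing its composite with the two axiom symmetries short-circuits the $A$-port of $\pi'$ onto the $B$-conclusion port --- this is the ``generalized yanking'' computation already displayed in Section~\ref{prolo} (equations~(\ref{ex3}), (\ref{ex2}), (\ref{ex11})) --- whence $\Ex{\sigma}{\nu_j}=\Ex{\sigma}{\nu'_j}$. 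For the $\&/\oplus_1$ cut (Crucial case~2, $j\in J_1=J'$), both sides of the redex cut carry a $(1,\cdot)$-injection, so after the identifications $U_{(1,a)}=U_a$ built into Definition~\ref{endopartia} for $\&$ and $\oplus_1$ the configuration at $j$ is that of $\cut(\pi^1,\pi^3)$, i.e.\ $\nu'_j$, while $\pi^2$ contributes nothing at $j\in J_1$ since the $\&$-interpretation is a disjoint sum. For the commutative $\&$-cut (Crucial case~3, $J'=J$), at $j\in J_i$ both $\nu_j$ and $\nu'_j$ assemble the same data $\Mean{\rho_j}\otimes\Mean{\pi^i_j}$ with the same $A$-$A^{\perp}$ cut and the same $B_1\&B_2$-injection, so the equality again follows from Vanishing~II and naturality.

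For (b), the observation is that whenever an index is lost the redex cut \emph{mismatches} at that index --- in Crucial case~1, $j\notin J'$ means $b_j\neq a_j$; in Crucial case~2, $j\in J_2$ pits the injection $(2,\cdot)$ against $(1,\cdot)$ --- so the relevant component of $\sigma_{\nu_j}$ equals $s^{0}=0$ by Definition~\ref{endopartia}. Decomposing the trace over the cut list by Vanishing~II so that this mismatched cut is traced last, one finds that $\Ex{\sigma}{\nu_j}=0$ is exactly Proposition~\ref{convzero}; unwound, it follows from Lemma~\ref{sublem}, the tracing-zero Lemma~\ref{trz}, and equation~(\ref{vanishtr}).

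I expect the main obstacle to be the invariance (a), not the vanishing (b), which is already packaged in Proposition~\ref{convzero}. Concretely, one must check rule by rule, by induction on the construction of the redex as in Definition~\ref{endopartia} and Lemma~\ref{sublem}, that the action $\zr{\bm{x}}$ on the assoc-rets and assoc-corets is transported faithfully across each reduction --- so a zero input or zero output created upstream is neither destroyed nor spuriously produced --- and that the ``generalized yanking'' step in the axiom-cut case is a genuine equality of traced endomorphisms and not merely a diagram chase. The remaining friction is bookkeeping: which sublist $\widehat{\Delta}$ of cut formulas is live at a given index, and how the superposed copies $\Sigma$, $\Xi$, $\Omega$ of the $\&$-reductions are matched across $\indrd$.
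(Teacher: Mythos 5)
Your proposal is correct and follows essentially the same route as the paper: reduce to the elementary liftings of Proposition~\ref{ltit}, obtain invariance at surviving indices by yanking/naturality as in the multiplicative GoI of \cite{HS06}, obtain $\ptEx{\nu_j}=0$ at erased indices from the mismatch ($b_j\neq a_j$ in case~1, $(2,a_2)\neq(1,a_1)$ in case~2) via Proposition~\ref{convzero}, and derive (ii) from (i) using the emptiness of $\sigma_{\nu'}$ for cut-free $\pi'$ together with the non-collapse assumption~(\ref{noncollapse}). Your added remarks on composing elementary steps and on where the bookkeeping lives are consistent with, and slightly more explicit than, the paper's own terse argument.
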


\begin{proof}{}
 We prove (i) according to the cases of
 Proposition \ref{ltit}, since (ii) follows directly from (i) as follows: 
For a cut-free $\pi'$, $\sigma_{\nu'}$ is empty, hence
$\ptEx{\nu_j}=\Mean{\nu'_j}$, where $\Mean{\nu'_j} \not = 0$
holds directly
both from
the construction of Definition \ref{endopartia}
 and from the non-collapsing assumption (\ref{noncollapse}).

 \smallskip
The invariance of (i) is direct 
 by the yanking axiom in case 1, and by induction on the proof $\pi$
 in cases 2 and 3: This proof method directly comes as
an instance of the known method in the symmetric traced
 monoidal category modelling multiplicative GoI
\cite{HS06}.
Thus we prove the zero convergence
 for the diminution of $J$.
The following crucial cases are those of the proof of Proposition \ref{ltit}.

\smallskip

\noindent (Crucial case 1) \\
Each instance of $\nu$ at $j \in J \setminus J'$
is
$\nu_j=(\delta_j,b_j,a_j,b_j,\lambda_j)$, 
so that $a_j \not = b_j$ since $j \not \in J'$, \\ then $\ptEx{\nu_j}=0$
by Proposition \ref{convzero}.

\smallskip

 \noindent (Crucial case 2) \\
 $J=J_1 + J_2$ diminishes into $J_1$.
Each instance of $\tau$ at $j_2 \in J_2$
is $\tau_{j_2}=(\omega, \delta_2, \gamma, (2,a_2))
\in \abs{\&(\pi^1,\pi^2)}$.
Thus each instance of $\nu$ at $j_2 \in J_2$
is $\nu_{j_2}=
(\omega, \delta_2,\delta_3, (2,a_2), (1,a_1), \gamma, \xi)
\in \abs{\pi}$.
Since $(2,a_2) \not =  (1,a_1)$,
we have $\ptEx{\nu_{j_2}}=0$
by Proposition \ref{convzero}.

\smallskip

 \noindent (Crucial case 3) \\
 $J$ does not diminish in this case. \end{proof}

\smallskip

Appendix \ref{prolo} is read as
an elucidating example of Theorem \ref{mainthm}.

\bigskip

\noindent {\bf \Large  Conclusion and Future Work} \\
This paper offers two main contributions:
 \begin{enumerate}
  \item[(i)]
       Presenting an indexed $\MALL$ system for stacking cut formulas
       and its relational counterpart to
       simulate $\MALL$ proof reduction of cut elimination.
  \item[(ii)] Constructing an execution formula for the interpretation
	      of $\MALL$ proofs equipped with indices.
	      The $\MALL$ proof reduction is characterised
	      by the convergence of the execution formula
	      to the denotational interpretation.
       Furthermore, the zero convergence of the execution formula
	      characterises the diminution of indices, which is specific
	      to additive cut elimination.
 \end{enumerate}

Our explicit use of indexed-syntactical manipulations
directly overcomes known difficulties in additive GoI.
We hope that this paper, from the perspective of indexed linear logic,
will shed light on an approachable understanding
of the preceding literature on
additive GoI, from precursory ones \cite{Gi95, Duchesne}
to more recent developments \cite{Gi11, Seil}.

We discuss some future directions. \\
For a genuine $\MALL$ GoI without
bypassing via indexed logic,
a syntax-free counterpart is required  
to replace the indices.
We construct such a genuine GoI \cite{Ham}
using an algebraic ingredient: a scalar extension
of Girard's $*$-algebra of partial isometries
over a boolean polynomial semi-ring.
The genuine GoI may help us connect
our syntactic manipulation of indices
to Girard's semantic use of clauses for predicates
in the precursory $\MALL$ GoI \cite{Gi95}.

In a syntactic direction, the status of Gentzen cut elimination
for $\MALLI$ remains open since the present paper only concerns
lifting the image to the indices of $\MALL$ cut-reduction.
The status will complement the 
reduction-free cut elimination, known to be derivable
from the Fundamental lemma \ref{FunLem} 
(cf.\ \cite{BE,BE2,HamTak08}).


Extending the present paper to the exponentials is challenging to
use Bucciarelli--Ehrhard $\LLI$ \cite{BE2} for modelling GoI.
This will involve extending our methodology of a traced monoidal
category with
a zero morphism to the whole GoI situation \cite{HS06,HS11},
compatibly with the multisets interpretation of
the exponential connective $\!$
in $\Rel$.
The explicit accommodation of the indices to GoI 
will give a novel approach to the (non indexed)
GoI modelling for the exponentials.

\bigskip

\noindent
{\bf Acknowledgment}
We wish to thank the referees for detailed and very helpful comments that have
greatly improved the presentation.

\appendix

\section{Axioms of Traced Monoidal Category}

\begin{definition}[Trace axioms of the family
${\sf Tr}^Z_{X,Y}$ of
 (\ref{trfam})
\cite{JSV96, HS06}] \label{traceaxioms}

\begin{enumerate}
 \item  ({\bf Natural} in $X$) \\
$\begin{aligned}
\TR{f}{Z}{X}{Y} \Comp g =
\TR{f \Comp (g \otimes Z)}{Z}{X'}{Y}
&  & \mbox{where} \quad f : X \otimes Z \longrightarrow Y \otimes Z
\quad \mbox{and} \quad
g : X' \rightarrow X 
\end{aligned}$

\item ({\bf Natural} in $Y$) \\
$\begin{aligned}
g \Comp \TR{f}{Z}{X}{Y} =
\TR{ (g \otimes Z) \Comp f }{Z}{X'}{Y}
& &  \mbox{where} \quad f : X \otimes Z \longrightarrow Y \otimes Z
\quad \mbox{and} \quad
g :  Y \rightarrow Y'
\end{aligned}$

\item ({\bf Dinatural} in $Z$) \\
$\begin{aligned}
\TR{ (Y \otimes g) \Comp f }{Z}{X}{Y}=
\TR{f \Comp ( X \otimes g)  }{Z'}{X}{Y}
&& \mbox{where} \quad f : X \otimes Z \longrightarrow Y \otimes Z
\quad \mbox{and} \quad
g :  Z' \rightarrow Z
\end{aligned}$

\item ({\bf Vanishing I} ) \\
$\begin{aligned}
\TR{f \otimes I}{I}{X}{Y}= f & & \mbox{where} \quad f: X 
\longrightarrow 
Y 
\end{aligned}$

\item ({\bf Vanishing II} ) \\
$\begin{aligned}
\TR{f}{Z \otimes W}{X}{Y} =
\TR{\TR{f}{W}{X \otimes Z}{Y \otimes Z}}{Z}{X}{Y}
& & \mbox{where} \quad f: X \otimes Z \otimes W \longrightarrow
 Y \otimes Z \otimes W
\end{aligned}$

\item ({\bf Superposing}) \\
$\begin{aligned}
g \otimes 
\TR{f}{Z}{X}{Y} =
\TR{g \otimes f}{Z}{W \otimes X}{V \otimes Y}
& & \mbox{where} \quad f : X \otimes Z \longrightarrow Y \otimes Z
\quad \mbox{and} \quad
g: W \longrightarrow V
\end{aligned}$

\item ({\bf Yanking}) \\
$\begin{aligned}
\TR{s_{X,X}}{X}{X}{X}=X
& &
\mbox{for the symmetry} \quad s_{X,X} : X \otimes X \longrightarrow X \otimes X
\end{aligned}$
\end{enumerate}

\end{definition}

\begin{lemma}[Generalized Yanking \cite{HS11}] 
Let $s_{Z,Y}$ denote the symmetry from $Z \otimes Y$ to $Y \otimes Z$.
\begin{align*}
\TR{s_{Z,Y} \Comp (f \otimes g)}{Z}{X}{Y}= g \Comp f 
&&
\mbox{where} \quad f: X \longrightarrow Z \quad \mbox{and} \quad
g: Z \longrightarrow Y 
\end{align*}
\end{lemma}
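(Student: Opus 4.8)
The plan is to reduce the identity to the plain yanking axiom by peeling off $f$ and $g$ one factor at a time, using naturality in $X$ and naturality in $Y$ of the trace, together with naturality of the symmetry in the ambient symmetric monoidal structure. No vanishing, superposing, or dinaturality is needed; the argument is a short chain of rewrites bottoming out at yanking.

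First I would factor the bifunctorial tensor as $f \otimes g = (Z \otimes g) \Comp (f \otimes Z)$, so the morphism being traced becomes $s_{Z,Y} \Comp (Z \otimes g) \Comp (f \otimes Z)$. The trailing factor $f \otimes Z = f \otimes \operatorname{Id}_Z$ is exactly of the shape $p \otimes Z$ required by naturality in $X$ (with $p = f \colon X \to Z$), so reading that axiom backwards extracts $f$:
\[
\TR{s_{Z,Y}\Comp(f\otimes g)}{Z}{X}{Y} = \TR{s_{Z,Y}\Comp(Z \otimes g)}{Z}{Z}{Y}\Comp f .
\]
Next I would commute $g$ past the symmetry using naturality of $s$, namely $s_{Z,Y}\Comp(Z\otimes g) = (g\otimes Z)\Comp s_{Z,Z}$, which holds in any symmetric monoidal category. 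The leading factor $g \otimes Z$ then matches the $q \otimes Z$ pattern of naturality in $Y$ (with $q = g \colon Z \to Y$), extracting $g$:
\[
\TR{s_{Z,Y}\Comp(Z\otimes g)}{Z}{Z}{Y} = g \Comp \TR{s_{Z,Z}}{Z}{Z}{Z} .
\]
Finally the yanking axiom collapses the remaining trace, $\TR{s_{Z,Z}}{Z}{Z}{Z} = Z$, and chaining the three equalities delivers $g \Comp f$.

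The only real delicacy is bookkeeping: I must carefully match the schematic object names in the naturality axioms — where the feedback object is a fixed $Z$ and only the boundary objects $X,Y$ vary — to the present instance, in which the object threaded through the trace is \emph{itself} $Z$. Thus the ``$X$'' slot of naturality in $X$ is instantiated to the codomain $Z$ of $f$ (not to the original domain $X$), and the ``$Y$'' slot of naturality in $Y$ is instantiated to the codomain $Z$ of the symmetry's transported factor. Getting these instantiations right, and invoking symmetry naturality as a structural fact independent of the trace, is the whole substance of the verification.
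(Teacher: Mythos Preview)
Your argument is correct. It differs from the paper's proof in one substantive way: after extracting $f$ by naturality in $X$ (which both proofs do identically), you invoke naturality of the symmetry to rewrite $s_{Z,Y}\Comp(Z\otimes g)$ as $(g\otimes Z)\Comp s_{Z,Z}$ and then extract $g$ via naturality in $Y$, landing directly on $\TR{s_{Z,Z}}{Z}{Z}{Z}$ for yanking. The paper instead applies \emph{dinaturality} at that point, sliding $g$ around the feedback wire to obtain $\TR{(Y\otimes g)\Comp s_{Z,Y}}{Y}{Z}{Y}$ (changing the traced object from $Z$ to $Y$), and only then uses symmetry naturality $(Y\otimes g)\Comp s_{Z,Y}=s_{Y,Y}\Comp(g\otimes Y)$ together with a second application of naturality in $X$ to reach $\TR{s_{Y,Y}}{Y}{Y}{Y}\Comp g$. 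Your route is marginally more economical: it uses only the two ``boundary'' naturalities plus yanking and keeps the traced object fixed at $Z$ throughout, whereas the paper's route additionally exercises dinaturality and ends up yanking over $Y$. Both are standard; your explicit remark that dinaturality is unnecessary is justified.
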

\begin{proof}{}
$LHS \stackrel{nat}{=}
\TR{s_{Z,Y} \Comp (Z \otimes g)}{Z}{Z}{Y} \Comp f
\stackrel{dinat}{=}
\TR{( Y \otimes g) \Comp s_{Z,Y}}{Y}{Z}{Y} \Comp f$. Inside the trace 
$( Y \otimes g) \Comp s_{Z,Y}
= s_{Y,Y} \Comp (g \otimes Y)$, thus
$\TR{s_{Y,Y} \Comp (g \otimes Y)}{Y}{Z}{Y}
\stackrel{nat}=
\TR{s_{Y,Y}}{Y}{Y}{Y} \Comp  g
\stackrel{yank}{=}g$
\end{proof}
\section{Omitted Proofs}
\subsection{Proof for Proposition \ref{FunLem} (Fundamental Lemma)}
\label{apFunLem}
\begin{lemma}[(i) implies (ii)] \label{iimpii}
Let $\pf{\pi}{\Delta}{\Gamma}$ be a proof of a sequent
$\vdash [\Delta], \, \, \Gamma$ in $\MALLC$.
Let $\delta \times \gamma \in \abs{\pf{\pi}{\Delta}{\Gamma}}^J$
(for some $J \subseteq I$)
with $\delta \in \sblabs{\sbl{\Delta}}^J$ and $\gamma \in \abs{\Gamma}^J$.
The sequent $\inseqmeansepare{J}{\Delta}{\Gamma}{\delta}{\gamma}$
has a proof $\rho$ in $\MALLIC$
such that $\res{\rho}{\emptyset}=\pi$.
\end{lemma}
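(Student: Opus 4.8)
The plan is to proceed by induction on the construction of the $\MALLC$ proof $\pf{\pi}{\Delta}{\Gamma}$, reading off the last rule of the sought $\MALLIC$ proof $\rho$ from the last rule of $\pi$. At each node one uses the inductive clauses of Definition~\ref{intRelc} to decompose the datum $\delta\times\gamma\in\abs{\pf{\pi}{\Delta}{\Gamma}}^J$ into data for the premises, and the translation of Definition~\ref{exBETrans} (with the Restricting translation lemma that follows it) to identify the translated premise sequents with those demanded by the corresponding $\MALLIC$ rule. The requirement $\res{\rho}{\emptyset}=\pi$ is then automatic: by Definition~\ref{exBETrans} the sequent $\inseqmeansepare{J}{\Delta}{\Gamma}{\delta}{\gamma}$ restricts under $\res{}{\emptyset}$ to $\vdash[\Delta],\Gamma$, and each $\MALLIC$ rule introduced restricts under $\res{}{\emptyset}$ to precisely the $\MALLC$ rule at the corresponding node of $\pi$; hence the inductively built $\rho$ restricts to the inductively built $\pi$.

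The base case and the multiplicative cases are routine. For the axiom, $\Delta$ is empty and every element of $\abs{\ax}^J$ is the constant family, which Definition~\ref{BETrans} translates to an $\MALLI$ identity sequent $\vdash_J\bm{1}_J,\perp_J$; this is provable by the Identity Lemma, and the $\emptyset$-restriction of that derivation is the $\MALLC$ axiom. For $\wp$ one re-pairs $(\bm{v},(a,b))$ as $(\bm{v},a,b)$, obtains the premise proof by the induction hypothesis, and applies the $\MALLI$ $\wp$-rule, using $\formmean{A\wp B}{(a,b)}=\formmean{A}{a}\wp\formmean{B}{b}$. For $\otimes$ and $\cut$, Definition~\ref{intRelc} presents $\delta\times\gamma$ as a product of premise data over the \emph{same} index set $J$, with Lemma~\ref{listdiv} rewriting $\abs{\sbl{\Delta_1,\Delta_2}}$ as $\abs{\sbl{\Delta_1}}\times\abs{\sbl{\Delta_2}}$ for the disjoint stacks; the induction hypothesis on each premise followed by the homonymous $\MALLIC$ rule closes the case, the cut-slot pair automatically acquiring the common domain $J$ required there.

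The main obstacle is the $\&$-rule, where the index set must be \emph{split} rather than shared. By Definition~\ref{intRelc} the datum $\delta\times\gamma$ lives in a disjoint sum $\bigl(\abs{\sbl{\Delta_1,\Sigma}}\times\abs{\Gamma}\times\abs{A_1}\bigr)+\bigl(\abs{\sbl{\Delta_2,\Sigma}}\times\abs{\Gamma}\times\abs{A_2}\bigr)$, so each index $j\in J$ is tagged left or right; setting $J_1$ (resp.\ $J_2$) to be the set of left- (resp.\ right-) tagged indices gives $J=J_1+J_2$, and the subfamily over $J_1$ (resp.\ $J_2$) lies in $\abs{\pf{\pi^1}{\Delta_1,\Sigma}{\Gamma,A_1}}^{J_1}$ (resp.\ $\abs{\pf{\pi^2}{\Delta_2,\Sigma}{\Gamma,A_2}}^{J_2}$). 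The delicate point is to confirm that the translated premise sequents delivered by the induction hypothesis are exactly the restrictions to $J_1$ and $J_2$ of the translated conclusion sequent --- in particular that the \emph{shared} stack context $\Sigma$ translates, on the $i$-th side, to the $J_i$-restriction of its translation in the conclusion --- which one reads off by unfolding the $\sblformmean{\cdot}{\cdot}$ construction of Definition~\ref{exBETrans} together with the monotonicity and distribution facts recorded after the $\&$-clause of Definition~\ref{intRelc} and the Restricting translation lemma. Granting this, the $\MALLIC$ $\&$-rule applies to the two inductively obtained proofs and yields a proof of $\inseqmeansepare{J}{\Delta_1,\Delta_2,\Sigma}{\Gamma}{\delta}{\gamma}$ whose restriction to $\emptyset$ is the $\MALLC$ $\&$-rule. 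The $\oplus_i$-rules are then immediate: the active component of $\gamma$ lies in $\abs{A_i}^J$ while $\abs{A_{3-i}}^J$ carries only the empty family over $\emptyset$, matching the side condition $d(A_{3-i})=\emptyset$ of the $\MALLIC$ $\oplus_i$-rule.
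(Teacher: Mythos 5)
Your proposal is correct and follows essentially the same route as the paper's proof: induction on the structure of $\pi$, decomposing $\delta\times\gamma$ via the semantic clause for the last rule, applying the induction hypothesis to the premises, and closing with the homonymous $\MALLIC$ rule, with the $\&$-case handled exactly as in the paper by splitting $J=J_1+J_2$ along the disjoint-sum decomposition and invoking the restriction lemma to match the superposed contexts. No gaps.
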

\begin{proof}{}
By construction on the $\MALL$ proof $\pi$.
The proof figures are referred in Definition \ref{intRelc}.

\noindent (cut rule) \\
$\delta \times \gamma \cong 
\delta_1 \times \delta_2 \times
\gamma_1 \times \gamma_2$
with $\delta_1 \times \gamma_1 \in \pf{\pi^1}{\Delta_1}{\Gamma_1,A}$
and $\delta_2 \times \gamma_2 \in \pf{\pi^2}{\Delta_2}{A^\perp, \Gamma_2}$.
By I.H.s on $(\delta_i \times \gamma_i) \,$s,
there are $\MALLI$-proofs of
the sequents 
$\inseqmeanseparethree{J}{\Delta_1}{\Gamma_1}{A}{\delta_1}{\gamma_1}$
and 
$\inseqmeanseparethree{J}{\Delta_2}{A^\perp}{\Gamma_2}{\delta_2}{\gamma_2}$
with $\gamma_i = \gamma_i^{'}
\times \gamma_i^{''}$.
Note that $\formmean{A}{\gamma_1^{''}}$
and $\formmean{A^\perp}{\gamma_2^{'}}$
are dual formulas since they have the same domain $J$.
Hence the cut between the dual formulas is applied to prove
$\inseq{J}{\, \, 
\sblformmean{\Delta_1}{\delta_1},
\sblformmean{\Delta_2}{\delta_2},
\formmean{A}{\gamma_1^{''}},
\formmean{A^\perp}{\gamma_2^{'}}
\, \, }{\formmean{\Gamma_1}{\gamma_1^{'}},
\formmean{\Gamma_2}{\gamma_2^{''}}}$.
The assertion follows since
$\sblformmean{\Delta_1}{\delta_1}
\times
\sblformmean{\Delta_2}{\delta_2}
=\sblformmean{(\Delta_1, \Delta_2)}{\delta_1
\times \delta_2}$.

\noindent ($\&$-rule) \\
Let $\nu=\abs{\pf{\pi}{\Delta}{\Gamma}}$. Then
$\nu = 
\{ (x_1, z, y, (1, a_1)) 
\mid 
(x_1, z, y, a_1 ) 
\in \nu_1 \}
\, + \, 
\{ (x_2, z, y, (2, a_2)) 
\mid 
(x_2, z, y, a_2) 
\in \nu_2 \}
\, \cong \, \merge{\nu_1}{\nu_2}
$
with $\gamma_i \in \abs{\pf{\pi^i}{\Delta_i, \Delta}{\Gamma,A_i}}^{J_1}$
and $J=J_1 + J_2$.
By I.H.s on $\pi^i\,$s, there are $\MALLI$-proofs of
$\inseq{J_i}{\, 
\sblformmean{\Delta_i}{\delta_i^{'}},
\sblformmean{\Delta}{\delta_i^{''}}
\, }{
\formmean{\Gamma}{\gamma_i^{'}},
\formmean{A_i}{\gamma_i^{''}}}
$
with $\delta_i = \delta_i^{'} \times
\delta_i^{''}$
and $\gamma_i =  \gamma_i^{'}
\times \gamma_i^{''}$.
Because 
$\res{\formmean{\Gamma}{
\merge{
\gamma_1^{'}
}{
\gamma_2^{'}
}}}{J_i}
=
\formmean{\Gamma}{
\gamma_i^{'}
}$
and $\res{\sblformmean{\Delta}{
\merge{
\delta_1^{''}}{
\delta_2^{''}}}
}{J_i}
=
\sblformmean{\Delta}{
\delta_i^{''}}
$
by Lemma \ref{respro},
the $\&$-rule is applied to prove \\
$\inseq{J_1 + J_2}{ \, \, 
\sblformmean{\Delta_1}{\delta_1^{'}},
\sblformmean{\Delta_2}{\delta_2^{'}},
\sblformmean{\Delta}{
\merge{
\delta_1^{''}}{
\delta_2^{''}}
} \, \, }
{
\formmean{\Gamma}{
\merge{
\gamma_1^{'}
}{
\gamma_2^{'}
}},
\, \formmean{A_1}{\gamma_1^{''}}
\, \& \, 
\formmean{A_2}{\gamma_2^{''}}
}
$.

\end{proof}

\begin{lemma}[(ii) implies (i)] \label{iiimpi}
Let $\vdash [\Delta], \, \, \Gamma$ be a sequent of $\MALLC$.
Let $\nu \in (\sbl{\Delta} \times \Gamma)^J$ (for some $J \subseteq
 I$)
and let $\rho$ be a proof of $\inseqmean{J}{\Delta}{\Gamma}{\nu}$
in $\MALLIC$.
Then $\nu \in \abs{\pf{(\res{\rho}{\emptyset})}{\Delta}{\Gamma}}^J$

\end{lemma}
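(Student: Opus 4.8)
The plan is to argue by induction on the structure of the $\MALLIC$ proof $\rho$. Write $\pi := \res{\rho}{\emptyset}$; since ${\sf MALL}(\emptyset)$ is just $\MALL$ and the cut stack is carried along unchanged by restriction, $\pi$ is a genuine $\MALLC$ proof of $\vdash [\Delta],\,\Gamma$, and each inference of $\rho$ restricts to an inference of $\pi$ of the same shape (same partition of the contexts $\Gamma_i$ and stacks $\Delta_i$, same superposed $\Sigma$, etc.). Moreover, by Definition \ref{exBETrans} the end-sequent $\inseqmean{J}{\Delta}{\Gamma}{\nu}$ of $\rho$ is obtained from the family $\nu = \delta \times \gamma \in (\intf{\sbl{\Delta}}{\Gamma})^J$ by reading off, component by component, the $\MALLI$-type of each relational entry. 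The goal is therefore to show, rule by rule, that $\nu$ lies in $\abs{\pf{\pi}{\Delta}{\Gamma}}^J$ as computed by Definition \ref{intRelc}; concretely this runs the construction of Lemma \ref{iimpii} (the ``(i) implies (ii)'' direction) backwards, exactly as in Bucciarelli--Ehrhard's Proposition 20 \cite{BE}, now in the presence of the cut list.

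The multiplicative cases are routine. For the $\otimes$-, $\wp$- and $\bot_J$-rules the interpretation in $\Relc$ agrees with the one in $\Rel$ (no cut is performed), and the translation of Definitions \ref{BETrans}--\ref{exBETrans} is transparent for these connectives: the last rule of $\rho$ forces $\nu$ to have the shape prescribed by the matching clause of Definition \ref{intRelc} (for instance a $\wp$-rule forces the conclusion component to be of the form $b\times c$, whose translation is $\formmean{B}{b}\wp\formmean{C}{c}$), the induction hypothesis applies to the premise family, and the clause of Definition \ref{intRelc} reassembles $\nu\in\abs{\pf{\pi}{\Delta}{\Gamma}}^J$. The cut-rule case is the same, using that the cut clause of Definition \ref{intRelc} interprets the cut parallel to $\otimes$ (moving the cut formulas into the stack) together with Lemma \ref{listdiv} to identify $\abs{\sbl{\Delta_1,\Delta_2}}$ with $\abs{\sbl{\Delta_1}}\times\abs{\sbl{\Delta_2}}$; the only point to notice is that the two cut formulas carry the same domain $J$, which is exactly what the $\MALLIC$ cut-rule guarantees. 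For the $\oplus_i$-rule one uses that the rule forces $d(A_{3-i})=\emptyset$, so the active disjunct takes the whole of $J$, the conclusion component of $\nu$ is $(i,b_j)$, the induction hypothesis gives the premise family in $\abs{\pf{(\res{\rho'}{\emptyset})}{\Delta}{\Gamma,A_i}}^J$, and the $\Rel$-interpretation of $\oplus_i$ simply prepends the tag $(i,\cdot)$.

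The delicate case is the $\&$-rule, which is where I expect the real work to lie. Here the last rule of $\rho$ has conclusion domain $J = J_1 + J_2$, premises of domains $J_1$ and $J_2$, a context $\Gamma$ superposed over all of $J$, and a superposed stack part $\Sigma$ whose domain is only contained in $J$. First one decomposes $\nu \cong \merge{\nu_1}{\nu_2}$ via the merge/restriction notation, so $\nu_i$ is (up to $\cong$) the subfamily living over $J_i$. Then one must check that $\nu_i$, together with the restricted stack and context data, is precisely the source under Definition \ref{exBETrans} of the $i$-th premise sequent of $\rho$; this is where the restricting-translation lemma (commutation of $\formmean{-}{-}$ and $\sblformmean{-}{-}$ with restriction of indices) and Lemma \ref{respro} are used, and where care is needed about how $\Sigma$ sits inside the disjoint-sum structure $\abs{\sbl{\Delta}}=\sum_{S}\abs{\Delta_S}$, i.e.\ that the sublist witnessing $\nu$ restricts compatibly on the two sides. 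The induction hypothesis then yields $\nu_i\in\abs{\pf{(\res{\rho^i}{\emptyset})}{\ldots}{\ldots}}^{J_i}$, and the $\&$-clause of Definition \ref{intRelc} — which invokes monotonicity $\abs{\sbl{\Delta_i,\Sigma}}\subseteq\abs{\sbl{\Delta_1,\Delta_2,\Sigma}}$ and distributivity of $\times$ over $+$ in $\Rel$ — reassembles these into $\nu\in\abs{\pf{\pi}{\Delta}{\Gamma}}^J$, after prepending the tags $(1,\cdot)$ and $(2,\cdot)$ to the $A_1\&A_2$ component exactly as the translation of $\formmean{A_1\&A_2}{-}$ prescribes. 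Tracking this bookkeeping of sublists and index-restrictions uniformly across the stack is the main obstacle; the remaining cases are a direct unwinding of the definitions.
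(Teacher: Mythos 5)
Your proposal is correct and follows essentially the same route as the paper: induction on the structure of $\rho$, decomposing $\nu$ as $\nu_1 \times \nu_2$ for the cut rule (via Lemma \ref{listdiv}) and as $\merge{\nu_1}{\nu_2}$ for the $\&$-rule, applying the induction hypothesis to the premises, and reassembling via the corresponding clause of Definition \ref{intRelc}. The paper's proof is equally terse on the sublist/restriction bookkeeping in the $\&$-case that you flag as the main obstacle, and you identify the same supporting lemmas it relies on.
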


\begin{proof}{}
By the construction on the $\MALLI$ proof $\rho$.

\noindent (cut rule)  \\ $\rho$ is
$\infer[\cut]{
\vdash_J [\Delta_1, \Delta_2, A, A^\perp],\, \, \Gamma_1, \Gamma_2}{
\deduce{\vdash_J
[\Delta_1], \, \, \Gamma_1,  A}{\rho^1} & 
\vdash_J
\deduce{[\Delta_2], \, \, A^{\perp}, \Gamma_2}{\rho^2}}
$. \\
The conclusion is written
$\inseqmean{J}{\Delta_1,\Delta_2,A,A^\perp}{\Gamma_1,\Gamma_2}{\nu}$.
From the construction, $\nu \cong \nu_1 \times \nu_2$
so that the conclusions of $\rho^1$ and $\rho_2$
are respectively
$\inseqmean{J}{\Delta_1}{\Gamma_1,A}{\nu_1}$
and $\inseqmean{J}{\Delta_2}{A^\perp, \Gamma_2}{\nu_2}$.
By I.H.s on $\rho^i$s,
$\nu_1 \in \abs{
\pf{(\res{\rho^1}{\emptyset})}{\Delta_1}{\Gamma_1,A}
}^{J}$
and
$\nu_2 \in \abs{
\pf{(\res{\rho^2}{\emptyset})}{\Delta_2}{A^\perp, \Gamma_2}
}^{J}$. The assertion follows since \\ 
$\abs{
\pf{(\res{\rho^1}{\emptyset})}{\Delta_1}{\Gamma_1,A}
}^{J}
\times
\abs{
\pf{(\res{\rho^2}{\emptyset})}{\Delta_2}{A^\perp, \Gamma_2}
}^{J}
\cong
\abs{
\pf{(\res{\rho}{\emptyset})}{\Delta_1, \Delta_2,A,A^\perp}{\Gamma_1,\Gamma_2}
}^{J}.
$

\smallskip

\noindent ($\&$-rule) \\ 
$\rho$ is 
$\infer[\&]{\vdash_{J_1 + J_2} [\Delta_1, \Delta_2, \Sigma \,], \, \, \Gamma, A_1 \& A_2}{
\deduce{\vdash_{J_1}  [\Delta_1, \Sigma \, ], \, \, \Gamma, A_1}{\rho^1}
& 
\deduce{\vdash_{J_2} [\Delta_2, \Sigma \, ], \, \, \Gamma,
 A_2}{\rho^2}}$ \\
$\nu$ is of the form $\merge{\nu_1}{\nu_2}$
so that the conclusions of $\rho_i\,$ s are
$\inseqmean{J_i}{\Delta_i,\Sigma}{\Gamma,A_i}{\nu_i}$.
By I.H.s on $\rho^i$s,
$\nu_i \in
\abs{
\pf{(\res{\rho^i}{\emptyset})}{\Delta_i, \Sigma}{\Gamma_1,A_i}
}^{J_i}$.
The assertion follows since  \\
$\merge{
\abs{
\pf{(\res{\rho^1}{\emptyset})}{\Delta_1,\Sigma}{\Gamma,A_1}
}^{J_1}
}{
\abs{
\pf{(\res{\rho^1}{\emptyset})}{\Delta_2,\Sigma}{\Gamma,A_2}
}^{J_2}
}
\cong
\abs{
\pf{(\res{\rho}{\emptyset})}{\Delta_1,\Delta_2, \Sigma}{\Gamma,A_1 \& A_2}
}^{J_1 + J_2}
$

\end{proof}

\section{Indices and Additive Cut Elimination  }\label{prolo}
This appendix elucidates the fundamental idea
of the paper.
The appendix may read as a prologue of the paper
by readers yet familiar with $\MLL$ GoI interpretation on
a reflexive object in a traced monoidal category.

\smallskip

Consider a sequence $\pi_1 \rhd \pi_2 \rhd \pi_3$
of cut eliminations for
proofs in the additive fragment of $\MALL$.
In our sequent notation, pairwise cut formulas, if present, 
 are stored inside a stack $[\cdot]$
in a sequent.
The first reduction, intrinsic to the additives,
eliminates a $\&$ in a cut, whereby the subproof $\ax_2$ is pruned.
The second reduction eliminates a redundant cut against an axiom:

\noindent
\begin{minipage}[lc]{3.2in}
$\infer[\cut]{\vdash [A \& A, A^\perp \oplus A^\perp] \, \,
A^\perp, A}{
\infer[\&]{\vdash A^\perp, A \& A}{
\infer[\ax_1]{\vdash A^\perp, A}{} & \infer[\ax_2]{\vdash A^\perp, A}{}}
&
\infer[\oplus_1]{\vdash A^\perp \oplus A^\perp, A}{
\infer[\ax_3]{\vdash A^\perp, A}{}}
}$
\end{minipage}
\begin{minipage}[c]{.2in}
$\rhd$
\end{minipage}
\begin{minipage}[c]{2in}
$\infer[\cut]{\vdash [A,A^\perp] \, \,  A^\perp, A}{
\infer[\ax_1]{\vdash A^\perp, A}{} & \infer[\ax_3]{\vdash A^\perp, A}{}}$
\end{minipage}
\begin{minipage}[c]{.2in}
\end{minipage}

\hspace{51ex}
$\rhd$ \quad
\begin{minipage}[c]{.7in}
$\infer[\ax_1]{\vdash A^\perp A}{}$
\end{minipage}

\vspace{2ex}

\noindent {\bf Step 1 (Interpretation $\abs{\pi}$ in $\Rel$
with unperformed cuts and indices for additives)} \\
We begin by interpreting proofs in $\Rel$
but without relational composition.
For this, the cut rule is interpreted same as the tensor rule.
This interpretation is consistent with the syntactic convention,
starting from Girard's GoI 1,
which puts the cut formulas into a stack.

For simplicity and in accordance with the fact
that the multiplicative dual elements 1 and $\perp$
are interpreted in $\Rel$ as the singleton set, we take
$A=1$ and, dually, $A^\perp=\>\perp$, 
so that 
$\abs{1}=\abs{\! \perp \!}$ is the singleton,
whose unique element is denoted $*$ or $\bar{*}$ (obviously, $*=\bar{*}$),
depending on whether it comes from $\abs{A}$ or $\abs{A^\perp}$, respectively.

An axiom is interpreted in $\Rel$ by the diagonal, so that 
$\abs{\ax_i}= \{ (\bar{*},*) \} \subseteq \abs{A^\perp} \times \abs{A}$.
The proof $\pi_2$ is interpreted as
$\abs{\textrm{cut}(\ax_1,\ax_2)}= \{ (\bar{*},*, \bar{*},*) \}
\subseteq 
\abs{A^\perp} \times  \, [ \, \abs{A}
\times 
\abs{A^\perp} \, ] \,  \times \abs{A}$,
in which the pair $(*,\bar{*})$ in the cut slot
from $[ \, \abs{A} \times \abs{A^\perp} \, ]$
remains explicit, rather than being hidden
by relational composition through $*=\bar{*}$. 
Note that both interpretations $\abs{\pi_2}$ and $\abs{\pi_3}$
are singletons. More generally, it is straightforward to see
that any proof in the multiplicatives can be interpreted by a singleton
whenever literals are interpreted by singletons.
However, this is not the case for the additives.
When interpreting $\pi_1$ with additive rules,
singletons prove insufficient, and this is where the indices become
necessary:
The left and right premises of $\pi_1$ are interpreted respectively by 
$\abs{\&(\ax_1,\ax_2)}=\{ (\bar{*},(1,*)), (\bar{*},(2,*))  \} 
\subseteq \abs{A^\perp} \times  (\abs{A} + \abs{A})
$ and 
$\abs{\oplus_1 \! (\ax_3)}=\{ ((1,\bar{*}),*)   \}
\subseteq (\abs{A^\perp} + \abs{A^\perp}) \times \abs{A}  
$.
A set of indices $J=\{1,2 \}$ is employed to describe
these two interpretations: 
the first yields
$\delta \in \abs{\&(\ax_1,\ax_2)}^J$
so that $\delta_1= (\bar{*},(1,*))$ and $\delta_2=  (\bar{*},(2,*))$, and
the second yields 
$\tau \in \abs{\oplus_1 \! (\ax_3)}^J$,
so $\tau_1=\tau_2=((1,\bar{*}),*)$.  \\
Then $\pi_1$ is interpreted by $\nu$:\\
$\begin{aligned}
 \nu :=\delta \times \tau \in \abs{\textrm{cut}(\&(\ax_1,\ax_2), \oplus_1(\ax_3))}^J
\subseteq 
(\abs{A^\perp} \times [ \, 
(\abs{A} \! + \! \abs{A}) \times
(\abs{A^\perp} \! + \! \abs{A^\perp}) \, ] 
\times \abs{A} )^J,
\end{aligned}$ \\
and therefore $(\delta \times \tau)_1 = (\delta_1, \tau_1) =
(\bar{*},(1,*),(1,\bar{*}),*)$
and $(\delta \times \tau)_2 = (\delta_2 , \tau_2) =
(\bar{*},(2,*),(1,\bar{*}),*)$,
where $\delta \times \tau$ denotes the mediating morphism of the
set-theoretical cartesian product.
Summing up, $\abs{\pi_1}=\{ v_j \mid j \in J=\{ 1,2 \}  \}$,
where $v_1= (\bar{*},(1,*),(1,\bar{*}),*)$ and 
$v_2= (\bar{*},(2,*),(1,\bar{*}),*)$.

\smallskip

\noindent {\bf Step 2 ($\inEx{\nu}{J}$ for 
$\nu \in \abs{\pi}^J$: Executing cuts using trace structures)} \\
In addition to step 1,
our GoI interpretation runs an execution formula
for $\abs{\pi_i}$ to perform
cut elimination against the unperformed cut formulas,
syntactically in the stack and semantically
in the noncompositional interpretation.

Each point in $\abs{\pi}$ is interpreted as an endomorphism
on a certain tensor folding of a \textit{reflexive object} $U$ in a traced
monoidal category ${\cal C}$ with a zero morphism on $U$.
The object $U$ uniformly interprets each element in the interpretation of
the conclusion of $\pi$; e.g.,
in $\abs{\pi_1}$, $U$ has the elements $\bar{*}$, $*$,
$(1,\bar{*})$, 
$(1,*)$, and $(2,*)$.
In the following, these points are identified with
their interpretation $U$.

For the most primitive case, e.g., for $\abs{\pi_3}$,
each diagonal point $(\bar{*},*)$ interpreting the axiom is interpreted as a \textit{symmetry}
of $\cal{C}$:
\begin{align}
s_{\bar{*},*} : U_{\bar{*}} \otimes U_{*} \longrightarrow 
U_{\bar{*}} \otimes U_{*} \label{ex3}
\end{align}
The unique point of $\abs{\pi_2}$ is interpreted by
the endomorphism $\Ex{\sigma_\textrm{cut}}{\abs{\pi_2}}$
on $U_{\bar{*}} \otimes U_{*}$, in which $\sigma_\textrm{cut}$,
as the interpretation of the cut,
is the symmetry $s_{*,\bar{*}}$ acting on the cut formulas:
\begin{alignat}{1}
  \Ex{\sigma_\textrm{cut}}{\abs{\pi_2}}=
   \TR{(\bar{*} \otimes \sigma_\textrm{cut} \otimes *) \Comp (s_{\bar{*},*}
   \otimes s_{\bar{*},*} )}{* \otimes \bar{*}}{\bar{*} \otimes
   *}{\bar{*} \otimes *}  \label{ex2}
 \end{alignat} 
Note that the symmetries $s$'s occurring in (\ref{ex2}) interpret
respective axioms.  \\
This is equal to (\ref{ex3}) in ${\cal C}$ by the trace axioms. 
The adjacent diagrams illustrate (\ref{ex3}) and (\ref{ex2}),
where the equality is found in the diagram for (\ref{ex2})
by chasing arrows with respect to both composition and feedback.

 \begin{wrapfigure}[14]{r}{11cm}
 \vspace{-20pt}
\begin{center} 
 \begin{tabular}{cc}
\begin{minipage}{0.5\hsize}
\xymatrix{
\bar{*} \ar[rd]   & \bar{*}  \\ \mbox{*} \ar[ru]^(.4){\ax_1}  &    \mbox{*} }
\end{minipage}
&
\begin{minipage}{0.4\hsize}
\xymatrix
{ & \bar{*} \ar[rd]   & \bar{*}  &  &  \\
&   \ar@{<-} `l []-<2em,0cm> `[u]+<0cm,1.5em> `[rr]+<2em,0cm> `[l] [rr] 
 \mbox{*} \ar[ru]^(.4){\ax_1}  &   \mbox{*}  \ar[rd]  &   \mbox{*} & 
 \\
& \bar{*} \ar[rd]   & \bar{*} \ar[ru]_(.6){\sigma_\textrm{cut}}  & \bar{*}  
\ar `r []-<2em,0cm> `[d]-<0cm,1.5em> `[ll]-<2em,0cm>   `[r] [ll]   & \\
&  \mbox{*} \ar[ru]^(.4){\ax_3}   &  \mbox{*}  &  }
\end{minipage}\\
Diagram of (\ref{ex3}) & Diagram of (\ref{ex2})
\end{tabular}
\end{center}
\vspace{-20pt}
\end{wrapfigure}
The GoI interpretation of $\pi_1$
is that for the indexed $\nu \in \abs{\pi_1}^J$ in Step 1,
which is defined pointwise (for $j \in J=\{ 1,2 \}$)
at $\nu_1$ and $\nu_2$, 
in which $\sigma_{\nu_1}$ 
and $\sigma_{\nu_2}$ are stipulated respectively by
$\sigma_\textrm{cut}$ and 0, where
$\sigma_\textrm{cut}$ is a symmetry $s_{(1,*),(1,\bar{*})}$
for the cut formulas while 0 is the zero morphism $0_{U^2}$
resulting by zero action 
on a symmetry $s_{(2,*),(1,\bar{*})}$: 

 \begin{flalign}
  \Ex{\sigma_{\nu_1}}{\nu_1} & =
 \TR{(\bar{*} \otimes \sigma_\textrm{cut} \otimes *) \Comp (s_{\bar{*},(1,*)}
  \otimes s_{(1, \bar{*}),*})}{
 (1,*) \,  \otimes \,  (1,\bar{*})}{\bar{*} \otimes * \,}{\,\bar{*}
  \otimes *}, 
 \label{ex11} \\ 
 & \hspace{2cm} \text{
 in which $\sigma_\textrm{cut}$ arises because of the matching $(1,*)=(1,\bar{*})$.} \nonumber  \\
 \Ex{\sigma_{\nu_2}}{\nu_2} & =  
 \TR{(\bar{*} \otimes \,  0 \,  \otimes *) \Comp (s_{\bar{*},(2,*)} \otimes
  s_{(1, \bar{*}),*} )}{(2,*) \, \otimes \, (1,\bar{*})}{\bar{*}
  \otimes * \,}{\, \bar{*} \otimes *}  \label{ex12} 
 \\
 & \hspace{3cm}
 \text{in which 0 arises because of the mismatch $(2,*) \not
  =(1,\bar{*})$.}  \nonumber
 \end{flalign}
Here (\ref{ex11}) is equivalent to (\ref{ex2}), while (\ref{ex12}) 
reduces to 0
in ${\cal C}$ by virtue of the trace axioms with zero morphisms.
The next two diagrams, for (\ref{ex11})
and (\ref{ex12}), illustrate that (\ref{ex12}) yields a zero
morphism because chasing
any arrow results in passing through 0.


\bigskip

\pagebreak

\begin{wrapfigure}[12]{r}{10cm}
\vspace{-20pt}
\begin{center} 
\begin{tabular}{cc}
\begin{minipage}{0.5\hsize}
\xymatrix@C=0.1in
{ & \bar{*} \ar[rd]   & \bar{*}  &  &  \\
&   \ar@{<-} `l []-<2.8em,0cm> `[u]+<0cm,1.5em> `[rr]+<2em,0cm> `[l] [rr] 
(1, *) \ar[ru]^(.4){\ax_1}  &    (1, *)   \ar[rd]  &  (1, *) & 
 \\
& (1, \bar{*}) \ar[rd]   & (1, \bar{*}) \ar[ru]_(.6){\sigma_\textrm{cut}}  & (1, \bar{*})  
\ar `r []-<2em,0cm> `[d]-<0cm,1.5em> `[ll]-<2.8em,0cm>   `[r] [ll]   & \\
& \mbox{*} \ar[ru]^(.4){\ax_3}   &  \mbox{*} &  }
\end{minipage} & \hspace{.5cm}
\begin{minipage}{0.5\hsize}
\xymatrix@C=0.1in
{ & \bar{*} \ar[rd]   & \bar{*}  &  &  \\
&   \ar@{<-} `l []-<2.8em,0cm> `[u]+<0cm,1.5em> `[rr]+<2em,0cm> `[l] [rr] 
(2, *) \ar[ru]^(.4){\ax_2}  &    (2, *)   \ar[rd]  &  (2, *) & 
 \\
& (1, \bar{*}) \ar[rd]   & (1, \bar{*}) \ar[ru]_(.6){0}  & (1, \bar{*})  
\ar `r []-<2em,0cm> `[d]-<0cm,1.5em> `[ll]-<2.8em,0cm>   `[r] [ll]   & \\
& \mbox{*} \ar[ru]^(.4){\ax_3}   &  \mbox{*} &  }
\end{minipage}\\
Diagram of (\ref{ex11}), $j=1$ & 
Diagram of (\ref{ex12}), $j=2$ 
\end{tabular}
\end{center}
\end{wrapfigure}
At $j=2$, \\ $\Ex{\sigma_{\nu_2}}{\nu_2}=0$,
so we delete the index 2, reducing $J$ into the singleton $\{ 1 \}$.
For index 1, 
$\Ex{\sigma_{\nu_1}}{\nu_1}$ is identical to
the symmetry $(\ref{ex3})$ of $\abs{\pi_3}$ and, hence, to
the denotational interpretation.



\bigskip




\end{document}